\newcommand\reddashuline{\bgroup\markoverwith
{\textcolor{red}{
\hdashrule[-0.5ex]{1pt}{0.8pt}{0.5pt}}}\ULon}
\newcommand\blueuline{\bgroup\markoverwith
{\textcolor{blue}{\rule[-0.5ex]{2pt}{0.8pt}}}\ULon}
\let\saved@includegraphics\includegraphics
\renewenvironment*{figure}{\@float{figure}}{\end@float}
\renewcommand{\ALG@name}{Protocol}
\renewcommand*\env@matrix[1][\arraystretch]{%
  \edef\arraystretch{#1}%
  \hskip -\arraycolsep
  \let\@ifnextchar\new@ifnextchar
  \array{*\c@MaxMatrixCols c}}
\newcommand{\multiline}[1]{%
  \begin{tabularx}{\dimexpr\linewidth-\ALG@thistlm}[t]{@{}X@{}}
    #1
  \end{tabularx}
}
\newcommand{\SM}{\hspace{-0.15mm}\Set M}
\newcommand{\tSM}{\hspace{-0.15mm}\tilde{\Set M}}
\newcommand{\TVj}{\hspace{0.4mm}\tilde{\hspace{-0.4mm}\V{j}\hspace{0.4mm}}\hspace{-0.4mm}}
\renewcommand*\env@matrix[1][\arraystretch]{%
  \edef\arraystretch{#1}%
 \hskip -\arraycolsep
  \let\@ifnextchar\new@ifnextchar
  \array{*\c@MaxMatrixCols c}}
    \def\CT@@do@color{%
      \global\let\CT@do@color\relax
            \@tempdima\wd\z@
            \advance\@tempdima\@tempdimb
            \advance\@tempdima\@tempdimc
    \advance\@tempdimb\tabcolsep
    \advance\@tempdimc\tabcolsep
    \advance\@tempdima2\tabcolsep
            \kern-\@tempdimb
            \leaders\vrule
                    \hskip\@tempdima\@plus  1fill
            \kern-\@tempdimc
            \hskip-\wd\z@ \@plus -1fill }
\def\hlinew#1{%
  \noalign{\ifnum0=`}\fi\hrule \@height #1 \futurelet
   \reserved@a\@xhline}
\newcolumntype{"}{@{\hskip\tabcolsep\vrule width 0.8pt\hskip\tabcolsep}}   
\newcolumntype{L}[1]{>{\raggedright\let\newline\\\arraybackslash\hspace{0pt}}m{#1}}
\newcolumntype{C}[1]{>{\centering\let\newline\\\arraybackslash\hspace{0pt}}m{#1}}
\newcolumntype{R}[1]{>{\raggedleft\let\newline\\\arraybackslash\hspace{0pt}}m{#1}}
\definecolor{dblue}{rgb}{0,0,0.75}
\definecolor{dgreen}{rgb}{0,0.7,0}
\definecolor{lightblue}{rgb}{0.83,0.91,1.0}
\definecolor{golden}{rgb}{0.95,0.75,0.2}
\newcommand*{\QEDB}{\hfill\ensuremath{\square}}%
\DeclareMathOperator*{\Motimes}{\text{\raisebox{0.2ex}{\scalebox{0.6}{$\bigotimes$}}}}
\newcommand{\Ps}{\mathscr{P}}
\declaretheoremstyle[
spaceabove=8pt, spacebelow=8pt,
headfont=\normalfont\bfseries,
notefont=\normalfont, notebraces={(}{)},
headpunct={:},
bodyfont=\normalfont,
postheadspace=0.5em,
]{mynote}
\declaretheoremstyle[
spaceabove=4pt, spacebelow=4pt,
headfont=\normalfont,
notefont=\normalfont, notebraces={(}{)},
headpunct={:},
bodyfont=\normalfont,
postheadspace=0.5em,
]{mynote2}
\declaretheorem[style=mynote]{Remark}
\declaretheorem[style=mynote,name=Theorem]{Thm}
\declaretheorem[style=mynote,name=Lemma]{Lem}
\declaretheorem[style=mynote,name=Proposition]{Prop}
\declaretheorem[style=mynote,name=Definition]{Def}
\declaretheorem[style=mynote2,name=$\Ps$-\hspace{-1.2mm}]{Problem}
\newlength{\wdth}
\newcommand{\strike}[1]{\settowidth{\wdth}{#1}\rlap{\rule[.5ex]{\wdth-0.3ex}{.4pt}}#1}
\newcommand{\sstrike}[1]{\settowidth{\wdth}{#1}\rlap{\rule[.35ex]{\wdth-0.45ex}{.4pt}}#1}
\newcommand{\paperTitle}{Error minimization for fidelity estimation of GHZ states with arbitrary noise}
\begin{document}
\normalem

{\color{white}
\fontsize{0pt}{0pt}\selectfont
\begin{acronym}
\acro{LOCC}{local operations and classical communication}\vspace{-5.5mm}
\acro{QED}{quantum entanglement distillation}\vspace{-5.5mm}
\acro{PoE}{probability of entanglement}\vspace{-5.5mm}
\acro{w.r.t.}{with respect to}\vspace{-5.5mm}
\acro{i.i.d.}{independent and identically distributed}\vspace{-5.5mm}
\acro{pmf}{probability mass function}\vspace{-5.5mm}
\acro{pdf}{probability density function}\vspace{-5.5mm}
\acro{cdf}{cumulative distribution function}\vspace{-5.5mm}
\acro{iff}{if and only if}\vspace{-5.5mm}
\acro{POVM}{positive operator-valued measure}\vspace{-5.5mm}
\acro{QBER}{quantum bit error rate}\vspace{-5.5mm}
\acro{MSE}{mean square error}\vspace{-5.5mm}
\acro{MVUE}{minimum-variance unbiased estimator }\vspace{-5.5mm}
\acro{GHZ}{Greenberger-Horne-Zeilinger}\vspace{-5.5mm}
\acro{EPR}{Einstein-Podolsky-Rosen}\vspace{-5.5mm}
\acro{QST}{quantum state tomography}\vspace{-5.5mm}
\acro{QKD}{quantum key distribution}\vspace{-5.5mm}
\acro{QED}{Quantum error detection}\vspace{-5.5mm}
\acro{NV}{Nitrogen-vacancy}
\end{acronym}}

\setcounter{page}{1}
\title{\paperTitle}

\author{Liangzhong Ruan}
\affiliation{
School of Cyber Science and Engineering, Xi'an Jiaotong University, China
}



\begin{abstract}
Fidelity estimation is a crucial component for the quality control of entanglement distribution networks.
This work studies a scenario in which multiple nodes share noisy \ac{GHZ} states. 
Due to the collapsing nature of quantum measurements, the nodes randomly sample a subset of noisy \ac{GHZ} states for measurement and then estimate the average fidelity of the unsampled states conditioned on the measurement outcome.
By developing a fidelity-preserving diagonalization operation, analyzing the Bloch representation of \ac{GHZ} states, and maximizing the Fisher information, the proposed estimation protocol achieves the minimum mean squared estimation error in a challenging scenario characterized by arbitrary noise and the absence of prior information. 
Additionally, this protocol is implementation-friendly as it only uses local Pauli operators according to a predefined sequence.
Numerical studies demonstrate that, compared to existing fidelity estimation protocols, the proposed protocol reduces estimation errors in both scenarios involving independent and identically distributed (i.i.d.) noise and correlated noise.

\end{abstract}

\pacs{03.67.Ac, 03.67.Hk}

\maketitle

\acresetall             

\section{Introduction}
Entanglement is a primary source of quantum advantage in the emerging quantum networks, such as the quantum Internet \cite{WehElkHan:J18,Panetal:J19} and quantum computing networks \cite{Pre:J18,Havetal:J19}.
In particular, multi-party entanglement plays a critical role in various applications, such as multi-party quantum communication \cite{KuzVasSanDurMus:J19}, computation \cite{SonGouWen:J20}, and clock synchronization \cite{BenExm:J11}.
In this context, evaluation of the quality of multi-party entanglement is critical for the development of medium to large scale quantum networks.
\ac{QED} \cite{ZhoWanMarCleKorWan:J14,Cordetal:J15,Linetal:J17} and \ac{QST} \cite{DArLoP:J01,HusHou:J12,LuLiuZha:J15}
can both in principle achieve this task.
However, to mitigate the effects of noise, 
\ac{QED} must ensure sufficient information redundancy by employing sophisticated quantum circuits, and
to obtain complete information about the state, \ac{QST} requires the number of measurements to grow exponentially \ac{w.r.t.}  the dimension of the state \cite{AleBor:J18}.
The advancement of quantum networks necessitates the development of resource-efficient methods for evaluating the quality of multi-party entanglement.

Fidelity indicates the quality of quantum states \cite{BriDurCirZol:98,WesLidFonGyu:J10,Basetal:J11,ArrLazHelBal:J14} and can be estimated using separable quantum operations and classical post-processing.
Separable quantum operations do not require nodes to have quantum memories, making the estimation protocols feasible with state-of-the-art quantum technologies \cite{YinCaoPanetal:J17,Humetal:J18,Pom-etal:J21,LiuTianGuetal:J20}.
Consequently, fidelity estimation emerges as a promising method for evaluating the quality of entanglement.
Fidelity estimation protocols have been developed for various entangled states \cite{TokYamKoaImo:J06,SomChiBer:J06,GuhLuGaoPan:J07, FlaLiu:J11, ZhuHay:J19}.
However, to estimate the fidelity of multi-party entangled states, several key challenges remain to be addressed.

\begin{itemize}[leftmargin=*]

\item{\em Limit the detrimental effect of arbitrary noise:}
Quantum networks frequently experience heterogeneous and correlated noise.
This type of noise reduces the amount of information that can be derived from measurements, consequently resulting in excessive estimation error in various quantum protocols \cite{TomLimGisRen:J12,PfiRolManTomWeh:J18}.
The reduction of the excessive errors caused by arbitrary noise, specifically in the context of fidelity estimation for multi-parity entangled states, represents a promising area of research.

\item{\em Enhance the efficiency of separable operations:}
Separable quantum operations are advantageous for implementation as they do not necessitate quantum memories.
However, as has been noted in the context \ac{QST}, in contrast to joint operations, separable operations can result in significant measurement efficiency loss \cite{BagBalGilRom:J06}.
Addressing the efficiency loss associated with separable operations in the fidelity estimation of multi-parity entangled states continues to present a notable challenge.
\end{itemize}

The author's prior work \cite{Ruan:J23} addresses the above chalaforementioned challenges in the context of two-party entangled states.
In this paper, we focus on the fidelity estimation of \ac{GHZ} states, 
which are a set of maximally entangled states shared by an arbitrary number of parties.
These states have numerous applications in quantum networks, including secure communication \cite{GaoQinWenZhu:J10,QinDai:J17}, error correction \cite{Den:J11,KelFowetal:J15}, and distributed computation \cite{GotChu:J99,HonPan:J06}.
Given that measurements collapse quantum states, we consider a network in which nodes randomly sample a subset of the shared noisy \ac{GHZ} states to measure and estimate the average fidelity of unsampled noisy states conditioned on the measurement outcome.

Given the limitation of having only separable measurements, we utilize the Bloch representation to decompose the \ac{GHZ} states into the sum of local operators.
By employing this decomposition alongside Fisher information \cite{LyMarVerGraWag:J17}, we propose a measurement operation that minimizes the mean squared estimation error of fidelity.
The proposed operation is practical for implementation as it involves only local Pauli operators.

The remainder of the paper is organized as follows.
Section~\ref{sec:formulation} formulates the fidelity estimation problem.
Section~\ref{sec:transform} and~\ref{sec:construction} present the theoretical foundation and detail the construction of the proposed protocol.
Section~\ref{sec:sim} provides the numerical studies.
Finally, Section~\ref{sec:summary} offers the concluding remarks.

Random variables and their realizations are denoted in upper- and lower-case letters, respectively, e.g., $F$ and $f$.

\section{System Setup and Problem Formulation}

\label{sec:formulation}

Consider a quantum network consisting of $L(\geq 2)$ nodes that share multiple copies of noisy entangled states.
The nodes are indexed by $l \in {\Set L}= \{1,2,\ldots,L\}$. 
The target shared state is one of the $L$-qubit  \ac{GHZ} states, which are defined as
\begin{align}
|G^{\pm}_{\V{j}}\rangle = \frac{1}{\sqrt{2}}\big(|\V{j}\rangle \pm  | \TVj \rangle\big),\label{eqn:gGHZ}
\end{align}
where 
\begin{align}
\V{j} = [j_1,j_2,\ldots,j_L]\in {\Set J} = \{0\}\! \frown \!\{0,1\}^{L-1}\label{eqn:Vj}
\end{align}
 is an $L$-bit binary string, in which $\frown $ denotes the concatenation of two strings, and $\TVj$ is the bitwise complement of $\V{j}$.
 
Denote the density matrix of $|G^{\pm}_{\V{j}}\rangle$ by 
$ \M{G}^{\pm}_{\V{j}} $, denote the set of all  $L$-qubit \ac{GHZ} states by $\Set{G}_L$, and
denote the target state by $|G^{ s }_{\V{t}}\rangle$, where
$ s  \in\{+,-\}$, $\V{t}\in\Set J$.

The nodes share $N$ noisy $L$-qubit \ac{GHZ} states and have no prior information of the noise.
To assess the quality of entanglement, the nodes randomly sample $M(<N)$ noisy  \ac{GHZ} states for measurement.
Since the measured states are no longer entangled, the nodes estimate the fidelity of the unsampled states conditioned on the measurement outcomes. 
The set of all noisy \ac{GHZ} states and that of the sampled states are denoted by $\Set N$, with $|\Set N| = N$, and $\Set{M}$, with $|\SM| = M$, respectively.
Here, $\Set M$ is drawn from all $M$-subsets of $\Set N$ with equal probability

To be implementable on nodes without local quantum memory, the estiamtion protocol employs operators that are separable across all qubits.  
These operators are denoted by 
\begin{align}
\M{M}_{r}= \sum_{k}\otimes^{n\in \Set M} (\otimes^{l\in \Set L}\M{M}^{(l)}_{r,n,k}),\label{eqn:MXsep}
\end{align} 
where the \ac{GHZ} state index $n\in \Set M$, the node index $l\in\Set{L}=\{1,2,\ldots,L\}$, and operator index $r\in \Set R$. 
Here $\Set R$ represents the set of all possible measurement outcomes.
As \ac{POVM} operators, 
\begin{align}
\M{M}^{\mathrm{(X)}}_{r,n,k}\succcurlyeq \M{0}, \; \sum_{r}\M{M}_{r} =\mathbb{I}_{2^{LM}}, 
\label{eqn:MPOVM}
\end{align}
where $\succcurlyeq$ denotes matrix inequality.
The set of all measurement operators is denoted by
\begin{align}
\Set{O}= \{\M{M}_{r},\; r\in \Set R \},\label{eqn:measurement}
\end{align}

Denote $\V{\rho}_{\mathrm{all}}$ as the density matrix of all \ac{GHZ} states before being measured, and
denote the $\V{\rho}_{\mathrm{all}}^{(r)}$ as the density matrix of the \ac{GHZ} states conditioned on the measurement outcome $r$.
In this case, the density matrix of the $n$-th \ac{GHZ} state is
\begin{align}
\V{\rho}^{(r)}_n = \mathrm{Tr}_{i\in{\Set N}\backslash \{n\}} \V{\rho}_{\mathrm{all}}^{(r)}.
\end{align}
Based on the measurement outcome, the estimator $\Set D$ is used to estimate the average fidelity of the unsampled \ac{GHZ} states $\bar{f}$, i.e.,
\begin{align}
\check{f} = {\Set D}(r).\label{eqn:estimator}
\end{align}
Because each 
$M$-subset is selected with probability ${N\choose M}^{-1}$, the mean squared estimation error is given by 
\begin{align}{N\choose M}^{-1} \sum_{\Set M}\mathbb{E}_{R}\big[(\check{F}-\bar{F})^2\big],
\end{align}
where $\check{F}$, $\bar{F}$, and $R$ are the random variable form of the estimated fidelity $\check{f}$, the average fidelity $\bar{f}$, and the measurement outcome $r$, respectively.

To optimize the efficiency of the measurement without prior information of $\V{\rho}_{\mathrm{all}}$, we design the measurement operators to achieve a low estimation error for all types of noise.
Specifically, the squared error of the estimated fidelity is minimized for the worst-case state $\V{\rho}_{\mathrm{all}}$,
subject to the constraint that the estimation is unbiased.
This problem is formulated as
\begin{Problem}[Error minimization with arbitary noise] \label{prob:1-m}
\begin{subequations}
\begin{align}
 \underset{\Set O, \Set D}{\mathrm{minimize}}\;\;&\max_{\V{\rho}_{\mathrm{all}}\in{\Set S}_{\mathrm{arb}}}
\sum_{\Set M}\mathbb{E}_{R}\big[(\check{F}-\bar{F})^2\big] 
\label{eqn:minvar_0}\\
\mathrm{subject\; to}\;\;
&\sum_{\SM}\mathbb{E}_{R}\big[\check{F}-\bar{F}\big]=0 ,\;\; \forall \V{\rho}_{\mathrm{all}}\in{{\Set S}_{\mathrm{arb}}} \label{eqn:unbiased_0}
\end{align}
\end{subequations}
where ${\Set S}_{\mathrm{arb}}$ denotes the set of arbitrary $N\times L$ qubit states, and
$\Set O$, $\Set D$ are defined in \eqref{eqn:measurement}, \eqref{eqn:estimator} respectively.
The factor ${N\choose M}^{-1}$ it is omitted in \eqref{eqn:minvar_0} because it does not affect the optimization of the esitmation protocol $\{\Set O, \Set D\}$.
\end{Problem}

The following two sections elaborate the procedures for solving $\Ps$-1, which involve two main components: problem transformation and operation construction.

\section{Problem Transformation}
\label{sec:transform}
This section demonstrates that $\Ps$-\ref{prob:1-m} can be transformed into an equivalent problem with independent noise.
The transformation process involves three steps.
\subsection{Sufficiency of considering classical correlated noise}
This step simplifies $\Ps$-\ref{prob:1-m} to an equivalent problem with classical correlated noise, specifically,
\begin{Problem}[Error minimization with separable noise] \label{prob:sp-m}
\begin{subequations}
\begin{align}
 \underset{\Set O, \Set D}{\mathrm{minimize}}\;\;&\max_{\V{\rho}_{\mathrm{all}}\in{\Set S}_{\mathrm{sp}}}\sum_{\SM}\mathbb{E}_{R}\big[(\check{F}-\bar{F})^2\big] 
\label{eqn:minvar_sep}\\
\text{subject to}\;\;
&\sum_{\SM}\mathbb{E}_{R}\big[\check{F}-\bar{F}\big]=0 ,\;\; \forall \V{\rho}_{\mathrm{all}}\in{\Set S}_{\mathrm{sp}}\label{eqn:unbiased_sep}
\end{align}
\end{subequations}
where
${\Set S}_{\mathrm{sp}}$ denotes the set of states $\V{\rho}_{\mathrm{all}}$ that are separable across different $L$-qubit groups.
\end{Problem}

We first construct an operation $\Set T$ that removes the entanglement among different \ac{GHZ} states without altering the fidelity.
\begin{Def}[Probabilistic multirotation]\thlabel{def:T}
 $\M{T}^{(\V{k})}$ is an $L$-qubit operator that rotates each qubit by $180^\circ$ around the $x$- or $y$-axis according to the string $\V{k} \in {0,1}^L$, i.e.,
\begin{align}
\M{T}^{(\V{k})} =\V{\sigma}_{xy}^{(\V{k})} = \Motimes_{l\in\Set L} \V{\sigma}_{xy}^{(k_l)},\label{eqn:Txy}
\end{align}
where $k_l$ is the $l$-th element of $\V{k}$, and
\begin{equation}
\V{\sigma}_{xy}^{(k_l)} := \begin{cases}
\V{\sigma}_y, & \mbox{if } k_l = 1, \\
\V{\sigma}_x, & \mbox{if } k_l = 0, \\
\end{cases}\label{eqn:sigmaxy}
\end{equation}
in which $\V{\sigma}_x$ and $\V{\sigma}_y$ are the Pauli X and Y matrices, respectively.

Probabilistic operation $\Set{T}^{(\V{k})}$ either performs no action or applies the rotation $\M{T}^{(\V{k})}$, each with a probability of 0.5, i.e.,
the two Kraus operators of $\Set{T}^{(\V{k})}$ are
\begin{align}
\frac{1}{\sqrt{2}}\mathbb{I}_{2^L} \;\mbox{ and }\;\frac{1}{\sqrt{2}}\M{T}^{(\V{k})}.\label{eqn:Tk}
\end{align}
Define a set of $\Set{T}^{(\V{k})}$ as
\begin{align}
\Set{T}_{\mathrm{all}} = \big\{\Set{T}^{(\V{k})}:|\V{k}|\in\{0,2\}\big\}.
\label{eqn:T_all}
\end{align}
Operation $\Set T$ sequentially performs all $\Set{T}^{(\V{k})}\in \Set{T}_{\mathrm{all}} $  on each \ac{GHZ} state.
~\QEDB
\end{Def}

Using operation $\Set T $, states in ${\Set S }_{\mathrm{arb}}$ can be transformed into those in ${\Set S }_{\mathrm{sp}}$ and then estimated using the optimal solution of $\Ps$-\ref{prob:sp-m}. 
This ensures that the minimum estimation error for states in ${\Set S }_{\mathrm{arb}}$ is no higher than that for states in ${\Set S }_{\mathrm{sp}}$.

\thref{lem:sp2arb} provides a formal proof of the analysis above.

\begin{Lem}[Equivalence of $\Ps$-\ref{prob:1-m} and $\Ps$-\ref{prob:sp-m}] \thlabel{lem:sp2arb}
If a measurement operation $\Set O^*$ is optimal in $\Ps$-\ref{prob:sp-m}, 
the composite operation $\hat{\Set O}^* = \Set O^* \circ \Set T$ is optimal in $\Ps$-\ref{prob:1-m}.
\end{Lem}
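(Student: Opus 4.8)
The plan is to establish a two-way comparison between the optimal values of $\Ps$-\ref{prob:1-m} and $\Ps$-\ref{prob:sp-m}, and then show that the composite operation $\hat{\Set O}^* = \Set O^* \circ \Set T$ achieves the former. First I would argue that the optimal value of $\Ps$-\ref{prob:1-m} is at least that of $\Ps$-\ref{prob:sp-m}: since ${\Set S}_{\mathrm{sp}} \subseteq {\Set S}_{\mathrm{arb}}$, any feasible $(\Set O, \Set D)$ for $\Ps$-\ref{prob:1-m} is feasible for $\Ps$-\ref{prob:sp-m} (the unbiasedness constraint over the larger set implies it over the smaller, and the worst-case objective over the larger set dominates). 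Hence restricting the noise class can only decrease the achievable worst-case error, giving $\mathrm{val}(\Ps\text{-}\ref{prob:sp-m}) \le \mathrm{val}(\Ps\text{-}\ref{prob:1-m})$.

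The substantive direction is the reverse inequality, for which I would use operation $\Set T$ from \thref{def:T}. The key claim is that applying $\Set T$ to each $L$-qubit group turns an arbitrary state $\V{\rho}_{\mathrm{all}} \in {\Set S}_{\mathrm{arb}}$ into a state $\Set T(\V{\rho}_{\mathrm{all}}) \in {\Set S}_{\mathrm{sp}}$ while preserving the fidelity of every group with its target \ac{GHZ} state $|G^s_{\V t}\rangle$. The fidelity-preservation part rests on showing that the probabilistic multirotations $\Set T^{(\V k)}$ with $|\V k| \in \{0,2\}$ all fix $\M{G}^s_{\V t}$ (equivalently, $\M{T}^{(\V k)} |G^s_{\V t}\rangle = \pm |G^s_{\V t}\rangle$, so the twirl leaves the $|G^s_{\V t}\rangle\langle G^s_{\V t}|$ component of $\V\rho$ unchanged); this uses the structure in \eqref{eqn:gGHZ}--\eqref{eqn:Vj}, where flipping an even number of qubits by $\sigma_x/\sigma_y$ maps the \ac{GHZ} basis state to itself up to phase. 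The separability part is a twirling argument: averaging over the group $\langle \Set T^{(\V k)} : |\V k|\in\{0,2\}\rangle$ projects out all cross-group coherences, leaving a state that is a classical mixture (diagonal) across the $L$-qubit groups — i.e. an element of ${\Set S}_{\mathrm{sp}}$.

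Given this claim, I would then run the following reduction. Let $(\Set O^*, \Set D^*)$ be optimal for $\Ps$-\ref{prob:sp-m} and set $\hat{\Set O}^* = \Set O^* \circ \Set T$, keeping the same estimator $\Set D^*$. For any $\V\rho_{\mathrm{all}} \in {\Set S}_{\mathrm{arb}}$, running $\hat{\Set O}^*$ is the same as running $\Set O^*$ on $\Set T(\V\rho_{\mathrm{all}}) \in {\Set S}_{\mathrm{sp}}$; since $\Set T$ preserves the per-group fidelities, the target average fidelity $\bar F$ is unchanged, so both the bias $\sum_{\SM}\mathbb{E}_R[\check F - \bar F]$ and the squared error $\sum_{\SM}\mathbb{E}_R[(\check F - \bar F)^2]$ for $(\hat{\Set O}^*, \Set D^*)$ at $\V\rho_{\mathrm{all}}$ equal those for $(\Set O^*, \Set D^*)$ at $\Set T(\V\rho_{\mathrm{all}})$. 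Therefore $\hat{\Set O}^*$ is unbiased on all of ${\Set S}_{\mathrm{arb}}$ and its worst-case error over ${\Set S}_{\mathrm{arb}}$ is at most $\max_{\V\rho \in {\Set S}_{\mathrm{sp}}} \sum_{\SM}\mathbb{E}_R[(\check F - \bar F)^2] = \mathrm{val}(\Ps\text{-}\ref{prob:sp-m})$. Combined with the first inequality, $\hat{\Set O}^*$ attains $\mathrm{val}(\Ps\text{-}\ref{prob:1-m})$ and is hence optimal.

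The main obstacle I anticipate is the fidelity-preservation and separability claim for $\Set T$ — in particular verifying carefully that the generating set $\{\Set T^{(\V k)} : |\V k| \in \{0,2\}\}$ is exactly rich enough to kill every off-(group-)diagonal block of $\V\rho_{\mathrm{all}}$ yet still stabilizes $|G^s_{\V t}\rangle$, and handling the $\pm$ sign and the constraint $j_1 = 0$ from \eqref{eqn:Vj} bookkeeping-wise. One must also confirm that $\Set T$ acts independently and identically on each group so that ``separable across $L$-qubit groups'' is genuinely achieved, not merely reduced coherence; I would phrase this via the commutant of the twirling group and a standard Schur-type averaging argument.
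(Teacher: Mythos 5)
Your proposal is correct and follows essentially the same route as the paper: the two-sided comparison of optimal values via ${\Set S}_{\mathrm{sp}}\subset{\Set S}_{\mathrm{arb}}$, the observation that $\Set T$ dephases each $L$-qubit group in the \ac{GHZ} basis (killing cross-group coherences) while stabilizing $\M{G}^{s}_{\V t}$, and the reduction identifying the performance of $\hat{\Set O}^*=\Set O^*\circ\Set T$ on ${\Set S}_{\mathrm{arb}}$ with that of $\Set O^*$ on $\Set T(\V{\rho}_{\mathrm{all}})\in{\Set S}_{\mathrm{sp}}$. The only point you compress is that $\bar F$ is conditioned on the measurement outcome, so one must also argue (as the paper does via a Gedankenexperiment extending $\Set T$ to the unsampled states, which commutes with the measurement and preserves their fidelity) that the joint distribution of $(\check F,\bar F)$ is unaffected — but this is exactly the fidelity-preservation fact you already flag.
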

\begin{proof}
Please refer to Appendix~\ref{pf_lem:sp2arb}.
\end{proof}

\subsection{Sufficiency of considering independent noise}
This step demonstrates that to solve $\Ps$-\ref{prob:sp-m}, it is sufficient to consider the special case with independent noise, i.e.,
\begin{Problem}[Error minimization for independent states] \label{prob:id-m}
\begin{subequations}
\begin{align}
 \underset{\Set O, \Set D}{\mathrm{minimize}}\;\;&
 \max_{\V{\rho}_{\mathrm{all}}\in{\Set S}_{\mathrm{id}}}\sum_{\SM}\mathbb{E}_{R}\big[(\check{F}-\bar{f}\,)^2\big] 
\label{eqn:minvar_id}\\
\text{subject to}\;\;
&\sum_{\SM}\mathbb{E}_{R}\big[\check{F}-\bar{f}\,\big]=0 ,\;\; \forall \V{\rho}_{\mathrm{all}}\in{\Set S}_{\mathrm{id}}\label{eqn:unbiased_id}
\end{align}\label{eqn:P3}
\end{subequations}
\end{Problem}
where
${\Set S}_{\mathrm{id}}$ denotes the set of states $\V{\rho}_{\mathrm{all}}$ in which the states
of all $N$ noisy \ac{GHZ} states are independent, i.e., $\V{\rho}_{\mathrm{all}}$ can be expressed as
\begin{align}
\V{\rho}_{\mathrm{all}} = \otimes^{n\in\Set N} \V{\rho}_{n}.\label{eqn:rhoall-id}
\end{align}
In the case of \eqref{eqn:rhoall-id}, the average fidelity of the unsampled states, $\bar{f}$, is deterministic for each sample set $\SM$.
Therefore, lowercase letter $\bar{f}$ is used in \eqref{eqn:P3}.

Correlated noise leads to atypical measurement outcomes, resulting in higher estimation errors \cite{PfiRolManTomWeh:J18}.
To bound the estimation error with correlated noise by that with independent noise, first note that any separable state $\V{\rho}_{\mathrm{all}}\in \Set{S}_{\mathrm{sp}}$ can be expressed as
\begin{align}
\V{\rho}_{\mathrm{all}} = \sum_{k\in \Set K}p_k \otimes^{n\in\Set N} \V{\rho}_{k,n}, 
\label{eqn:Rall-sep}
\end{align}
where $p_k$ is the probability of the ensemble being in case $k$,
\begin{align}
p_k\geq 0, \;  \sum_{k\in \Set K}p_k=1,
\end{align}
and
$\V{\rho}_{k,n}$ is the density matrix of the $n$-th noisy \ac{GHZ} state in case $k$.
Denote 
\begin{align}
\V{\rho}^{(k)}_{\mathrm{all}} = \otimes^{n\in\Set N} \V{\rho}_{k,n}.
\end{align}
Then $\V{\rho}^{(k)}_{\mathrm{all}} \in \Set{S}_{\mathrm{id}}$, $\forall k \in \Set K$. 

According to Bayes' theorem.
\begin{align}
\mathrm{Pr}(\V{\rho}^{(k)}_{\mathrm{all}}|  r) &=\frac{p_k}{\mathrm{Pr}(r)}\mathrm{Pr}(r|\V{\rho}^{(k)}_{\mathrm{all}}).\label{eqn:Bayes}
\end{align}
Given the ensemble $\V{\rho}_{\mathrm{all}}$ and the measurement operation $\Set O$,
$p_k$  and $\mathrm{Pr}(r)$ are determined.
In this context, \eqref{eqn:Bayes} indicates that if a measurement outcome $r$ is atypical for component $\V{\rho}^{(k)}_{\mathrm{all}}$, i.e.,
$\mathrm{Pr}(r|\V{\rho}^{(k)}_{\mathrm{all}})$ is close to $0$, 
then the likelihood of this component evaluated at point $r$, i.e.,
$\mathrm{Pr}(\V{\rho}^{(k)}_{\mathrm{all}}|  r)$ is also close to $0$.
Because the average fidelity $\bar{f}$ is estimated conditioned on the measurement outcome $r$, 
components $\V{\rho}^{(k)}_{\mathrm{all}}$ with likelihood close to $0$ have limited effect on the estimation error.
This result provides a method to bound the estim ation error with correlated noise.

\thref{lem:id2sp} provides a formal proof of the analysis above.

\begin{Lem}[Equivalence of $\Ps$-\ref{prob:sp-m} and $\Ps$-\ref{prob:id-m}] \thlabel{lem:id2sp}
If a measurement operation $\Set O^*$ is optimal in $\Ps$-\ref{prob:id-m}, 
it is also optimal in $\Ps$-\ref{prob:sp-m}.
\end{Lem}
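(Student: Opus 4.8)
The plan is to show that $\Ps$-\ref{prob:id-m} and $\Ps$-\ref{prob:sp-m} have the same optimal value and that an optimizer of the former attains it for the latter; since ``$\Set O^*$ optimal for $\Ps$-\ref{prob:id-m}'' means some pair $(\Set O^*,\Set D^*)$ is optimal for $\Ps$-\ref{prob:id-m}, it suffices to prove that $(\Set O^*,\Set D^*)$ is optimal for $\Ps$-\ref{prob:sp-m}. One inequality is immediate: because ${\Set S}_{\mathrm{id}}\subseteq{\Set S}_{\mathrm{sp}}$, every pair that is feasible for $\Ps$-\ref{prob:sp-m} is feasible for $\Ps$-\ref{prob:id-m} with a no-larger objective, so the optimal value of $\Ps$-\ref{prob:id-m} is at most that of $\Ps$-\ref{prob:sp-m}. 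The substance of the lemma is the reverse inequality: with the estimator that is optimal against independent noise, the worst-case squared error over ${\Set S}_{\mathrm{sp}}$ must not exceed the worst-case squared error over ${\Set S}_{\mathrm{id}}$.

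First I would fix an arbitrary $\V{\rho}_{\mathrm{all}}\in{\Set S}_{\mathrm{sp}}$ and write it as in \eqref{eqn:Rall-sep}, so that conditioned on the latent index $k$ (probability $p_k$) the ensemble is the independent state $\V{\rho}^{(k)}_{\mathrm{all}}\in{\Set S}_{\mathrm{id}}$, the reduced state on the sampled set $\SM$ is $\otimes^{n\in\SM}\V{\rho}_{k,n}$, and the average fidelity of the unsampled copies is a deterministic number $\bar{f}_k$. Then the outcome $r$ has law $\mathrm{Pr}(r)=\sum_k p_k\,\mathrm{Pr}(r\,|\,\V{\rho}^{(k)}_{\mathrm{all}})$, and because measuring the sampled copies of a classically correlated separable state merely reweights, rather than disturbs, the unsampled ones, the post-measurement state of the unsampled copies is the posterior mixture $\sum_k\mathrm{Pr}(\V{\rho}^{(k)}_{\mathrm{all}}\,|\,r)\,\otimes^{n\in{\Set N}\backslash\SM}\V{\rho}_{k,n}$, whence $\bar{F}(r)=\sum_k\mathrm{Pr}(\V{\rho}^{(k)}_{\mathrm{all}}\,|\,r)\,\bar{f}_k$ with $\mathrm{Pr}(\V{\rho}^{(k)}_{\mathrm{all}}\,|\,r)$ given by \eqref{eqn:Bayes}.

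Next I would dispatch the two constraints of $\Ps$-\ref{prob:sp-m} in turn. For unbiasedness, linearity and the identity above give $\sum_{\SM}\mathbb{E}_R[\check{F}-\bar{F}]=\sum_k p_k\sum_{\SM}\mathbb{E}_R\big[\check{F}-\bar{f}_k\,\big|\,\V{\rho}^{(k)}_{\mathrm{all}}\big]$, and every inner sum vanishes because $\V{\rho}^{(k)}_{\mathrm{all}}\in{\Set S}_{\mathrm{id}}$ and $(\Set O^*,\Set D^*)$ satisfies \eqref{eqn:unbiased_id}; hence the pair is feasible for $\Ps$-\ref{prob:sp-m}. For the objective, for each $\SM$ I would expand $\mathbb{E}_R[(\check{F}-\bar{F})^2]=\sum_r\mathrm{Pr}(r)\big(\check{f}(r)-\sum_k\mathrm{Pr}(\V{\rho}^{(k)}_{\mathrm{all}}\,|\,r)\,\bar{f}_k\big)^2$ and apply Jensen's inequality (equivalently, discard the nonnegative variance term in the bias--variance split) to bound the squared bracket by $\sum_k\mathrm{Pr}(\V{\rho}^{(k)}_{\mathrm{all}}\,|\,r)\,(\check{f}(r)-\bar{f}_k)^2$; summing over $r$ and using $\mathrm{Pr}(r)\,\mathrm{Pr}(\V{\rho}^{(k)}_{\mathrm{all}}\,|\,r)=p_k\,\mathrm{Pr}(r\,|\,\V{\rho}^{(k)}_{\mathrm{all}})$ turns the right-hand side into $\sum_k p_k\,\mathbb{E}_R\big[(\check{F}-\bar{f}_k)^2\,\big|\,\V{\rho}^{(k)}_{\mathrm{all}}\big]$. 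Summing over $\SM$, each independent-noise term is at most the optimal value of $\Ps$-\ref{prob:id-m}, and a convex combination of quantities bounded by that value is itself bounded by it; this yields the reverse inequality, and the lemma follows.

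I expect the crux to be the second paragraph --- establishing that under a classically correlated separable state the conditional average fidelity $\bar{F}$ is exactly the Bayesian posterior mixture $\sum_k\mathrm{Pr}(\V{\rho}^{(k)}_{\mathrm{all}}\,|\,r)\,\bar{f}_k$, i.e.\ that sampling re-weights but does not perturb the unsampled copies. Once that identity is in place, the remainder is a routine application of linearity and Jensen's inequality; notably the argument uses no structural feature of \ac{GHZ} states, so no state-specific computation is required.
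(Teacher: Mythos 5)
Your proposal is correct and follows essentially the same route as the paper's proof in Appendix~\ref{pf_lem:id2sp}: decompose the separable state as a classical mixture of independent states via \eqref{eqn:Rall-sep}, identify the conditional average fidelity $\bar F$ with the Bayesian posterior mixture, apply Jensen's inequality to the squared error, use Bayes' theorem to rewrite the bound as a $p_k$-weighted average of independent-noise errors, and close with the two-sided inequality coming from ${\Set S}_{\mathrm{id}}\subset{\Set S}_{\mathrm{sp}}$. Your explicit verification of the unbiasedness constraint \eqref{eqn:unbiased_sep} is a small addition the paper leaves implicit, but the argument is otherwise identical.
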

\begin{proof}
Please refer to Appendix~\ref{pf_lem:id2sp}
\end{proof}

\subsection{Sufficiency of considering the sampled \ac{GHZ} states}
This step demonstrates that to solve $\Ps$-\ref{prob:id-m},
it is sufficient to minimize the estimation error of the sampled \ac{GHZ} states, i.e.,
\begin{Problem}[Error minimization for sampled noisy \ac{GHZ} states] \thlabel{prob:id2-m}
\begin{subequations}
\begin{align}
 \underset{\Set O, \Set D}{\mathrm{minimize}}\;\;& \max_{\V{\rho}_{\mathrm{all}}\in{\Set S}_{\mathrm{id}}(\V{f}_{\mathrm{all}})}
 \sum_{\SM}\mathbb{E}_{R}\big[(\check{F}-\bar{f}_{\SM})^2\big] 
\label{eqn:minvar_id2}\\
\text{subject to}\;\;
&\mathbb{E}_{R}\big[\check{F}-\bar{f}_{\SM}\big| \V{\rho}_{\SM} \big]=0,\nonumber\\
&\qquad \forall \V{\rho}_{\mathrm{all}}\in{\Set S}_{\mathrm{id}}(\V{f}_{\mathrm{all}}), \; \SM \subset \Set N \label{eqn:unbiased_id2}
\end{align}
\end{subequations}
where $\V{\rho}_{\SM} $ is the density matrix of all sampled noisy \ac{GHZ} states, and ${\Set S}_{\mathrm{id}}(\V{f}_{\mathrm{all}})$ denotes the set of $N$ noisy \ac{GHZ} states with independent noise and fidelity composition $\V{f}_{\mathrm{all}}=\{f_n,n\in \Set N\}$, i.e.,
\begin{align}
\V{\rho}_{\mathrm{all}} = \otimes^{n\in\Set N} \V{\rho}_{n}, 
\end{align}
in which $\V{\rho}_{n}$ is the density matrix of the $n$-th noisy \ac{GHZ} states, 
\begin{align}
f_n= \langle G^{s}_{\V{t}}|\V{\rho}_n | G^{s}_{\V{t}}\rangle, \quad n\in \Set N,
\end{align}
is the fidelity of the $n$-th noisy \ac{GHZ} states, and
\begin{align}
\bar{f}_{\SM}= \frac{1}{M}\sum_{n\in \SM}f_n
\end{align}
is the average fidelity of the sampled noisy \ac{GHZ} states.
\end{Problem}

In $\Ps$-\ref{prob:id-m}, because the noise is independent, the fidelity of the unsampled \ac{GHZ} states is invariant to the measurment outcome.
Consequently, the estimation error of $\Ps$-\ref{prob:id-m} can be expressed as the sum of two parts, namely the estimation error of the sampled \ac{GHZ} states,
\begin{align}
\sum_{\SM}\mathbb{E}_{R}\big[(\check{F}-\bar{f}_{\SM} )^2\big],\label{eqn:error-measure}
\end{align}
and the sampling error, i.e., the deviation between the average fidelity of the sampled and the unsampled \ac{GHZ} states,
\begin{align}
\sum_{\SM}(\bar{f}_{\SM} -\bar{f})^2.\,\label{eqn:error-sample}
\end{align}
The sampling error \eqref{eqn:error-sample} is not affected by the estimation protocol, and hence can be omitted in the estimation error minimization problem.
In this way, $\Ps$-\ref{prob:id-m} is transformed into $\Ps$-\ref{prob:id2-m}.

\thref{lem:id-sample} provides a formal proof of the analysis above.

\begin{Lem}[Equivalence of Problems~\ref{prob:id-m} and \ref{prob:id2-m}] \thlabel{lem:id-sample}
If a measurement operation $\Set O^*$ is optimal in $\Ps$-\ref{prob:id2-m} for all compositions of fidelity, i.e.,
$\forall\, {\Set S}_{\mathrm{id}}(\V{f}_{\mathrm{all}})\subset{\Set S}_{\mathrm{id}}$, 
it is also optimal in $\Ps$-\ref{prob:id-m}.
\end{Lem}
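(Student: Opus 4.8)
The plan is to establish the equivalence in two directions, but the substantive content is the claim that optimality in $\Ps$-\ref{prob:id2-m} implies optimality in $\Ps$-\ref{prob:id-m}. First I would make precise the decomposition sketched above: for any state $\V{\rho}_{\mathrm{all}}\in{\Set S}_{\mathrm{id}}$, the independence of the noise means that conditioning the unsampled states on the measurement outcome $r$ leaves their joint density matrix unchanged, so $\bar{F}$ is in fact deterministic given $\SM$, equal to $\bar{f}=\frac1{N}\sum_{n\in\Set N}f_n$ — wait, more carefully, $\bar f$ is the average over the unsampled states, $\bar f = \frac{1}{N-M}\sum_{n\in\Set N\setminus\SM}f_n$. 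I would then write $\check F-\bar f = (\check F - \bar f_{\SM}) + (\bar f_{\SM}-\bar f)$ and expand the square. The key algebraic point is that the cross term vanishes after summing: $\sum_{\SM}\mathbb{E}_R[(\check F-\bar f_{\SM})(\bar f_{\SM}-\bar f)] = \sum_{\SM}(\bar f_{\SM}-\bar f)\,\mathbb{E}_R[\check F-\bar f_{\SM}]$, since $\bar f_{\SM}-\bar f$ is a constant for fixed $\SM$; and under the stronger unbiasedness constraint \eqref{eqn:unbiased_id2} of $\Ps$-\ref{prob:id2-m}, the inner expectation is zero. Hence $\sum_{\SM}\mathbb{E}_R[(\check F-\bar f)^2] = \sum_{\SM}\mathbb{E}_R[(\check F-\bar f_{\SM})^2] + \sum_{\SM}(\bar f_{\SM}-\bar f)^2$, the decomposition into \eqref{eqn:error-measure} plus \eqref{eqn:error-sample}.

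Next I would argue the reduction of the feasible set and objective. Observe that ${\Set S}_{\mathrm{id}} = \bigsqcup_{\V f_{\mathrm{all}}} {\Set S}_{\mathrm{id}}(\V f_{\mathrm{all}})$, so the inner maximization in $\Ps$-\ref{prob:id-m} can be taken as a maximization over fidelity compositions $\V f_{\mathrm{all}}$ of a maximization over ${\Set S}_{\mathrm{id}}(\V f_{\mathrm{all}})$. For each fixed $\V f_{\mathrm{all}}$, the sampling-error term $\sum_{\SM}(\bar f_{\SM}-\bar f)^2$ is a fixed constant depending only on $\V f_{\mathrm{all}}$ and not on $\{\Set O,\Set D\}$, so it is an additive constant that does not affect the argmin. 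Therefore minimizing the $\Ps$-\ref{prob:id-m} objective over $\{\Set O,\Set D\}$ is equivalent to minimizing $\max_{\V f_{\mathrm{all}}}\big[\max_{{\Set S}_{\mathrm{id}}(\V f_{\mathrm{all}})}\sum_{\SM}\mathbb{E}_R[(\check F-\bar f_{\SM})^2] + c(\V f_{\mathrm{all}})\big]$. I would then note that $\Set O^*$ being optimal in $\Ps$-\ref{prob:id2-m} \emph{for every} $\V f_{\mathrm{all}}$ means it simultaneously minimizes $\max_{{\Set S}_{\mathrm{id}}(\V f_{\mathrm{all}})}\sum_{\SM}\mathbb{E}_R[(\check F-\bar f_{\SM})^2]$ for each $\V f_{\mathrm{all}}$ (paired with the corresponding optimal $\Set D$), and since adding the fixed constants $c(\V f_{\mathrm{all}})$ preserves the per-composition minimizer, it also minimizes the outer $\max$ over $\V f_{\mathrm{all}}$.

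The one point requiring care — and what I expect to be the main obstacle — is the handling of the two different unbiasedness constraints. $\Ps$-\ref{prob:id-m} imposes only the marginal constraint $\sum_{\SM}\mathbb{E}_R[\check F-\bar f]=0$, whereas $\Ps$-\ref{prob:id2-m} imposes the stronger per-sample-set conditional constraint $\mathbb{E}_R[\check F-\bar f_{\SM}\mid \V\rho_{\SM}]=0$. I would argue that the stronger constraint is without loss of optimality: given any feasible $\{\Set O,\Set D\}$ for $\Ps$-\ref{prob:id-m}, one can (because the noise is independent and the unsampled fidelities are then unconstrained/free) consider the estimator's behavior as $\V f_{\mathrm{all}}$ ranges over all compositions consistent with a fixed $\V\rho_{\SM}$; forcing the bias to vanish conditionally is necessary for the bias to vanish across this whole family, so the $\Ps$-\ref{prob:id2-m} constraint is implied once one insists on unbiasedness uniformly over ${\Set S}_{\mathrm{id}}$. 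I would make this rigorous by noting that $\check F$ depends on the data only through the measurement outcome $r$ on $\V\rho_{\SM}$, while $\bar f$ depends on the unsampled fidelities, which can be varied freely while holding $\V\rho_{\SM}$ fixed; linearity of $\mathbb{E}_R[\check F-\bar f]$ in these free parameters then forces the conditional bias to be identically zero. Combining this with the decomposition and the constant-term argument above yields that $\hat{\Set O}^* = \Set O^*$ (with the matching estimator) is optimal in $\Ps$-\ref{prob:id-m}, and the converse direction — any $\Ps$-\ref{prob:id-m}-optimal operation restricts to a $\Ps$-\ref{prob:id2-m}-optimal one for each $\V f_{\mathrm{all}}$ — follows by reversing the same chain of equalities. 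I would defer the detailed bookkeeping to the appendix.
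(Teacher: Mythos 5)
Your overall architecture matches the paper's Appendix C: rewrite \eqref{eqn:unbiased_id} as $\sum_{\SM}\mathbb{E}_{R}[\check{F}-\bar{f}_{\SM}]=0$ using $\sum_{\SM}(\bar{f}-\bar{f}_{\SM})=0$ for a uniformly random sample set; decompose $\check{F}-\bar{f}=(\check{F}-\bar{f}_{\SM})+(\bar{f}_{\SM}-\bar{f})$ and kill the cross term with the strengthened constraint \eqref{eqn:unbiased_id2}; note that $\sum_{\SM}(\bar{f}_{\SM}-\bar{f})^2$ depends only on $\V{f}_{\mathrm{all}}$ and not on $\{\Set O,\Set D\}$; and take the union over fidelity compositions. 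All of that is the paper's argument, essentially verbatim.

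The gap is in the step you yourself flag as the main obstacle: showing that the aggregate constraint \eqref{eqn:unbiased_id} forces the per-sample-set conditional constraint \eqref{eqn:unbiased_id2}. Your proposed resolution --- hold $\V{\rho}_{\SM}$ fixed, vary the unsampled fidelities freely, and invoke linearity of the bias in those free parameters --- does not close it, because \eqref{eqn:unbiased_id} is a \emph{sum over all} $M$-subsets. Perturbing a state $\V{\rho}_{n_0}$ with $n_0\notin\SM_0$ changes not only $\bar{f}$ for $\SM_0$ but also $\mathbb{E}_{R}[\check{F}\mid\V{\rho}_{\SM}]$ for every $\SM\ni n_0$, and the constraint only pins down the total over all $\SM$; you cannot isolate the conditional bias of $\SM_0$ this way. (Indeed, once $\bar{f}$ is replaced by $\bar{f}_{\SM_0}$ via the combinatorial identity, the $\SM_0$ summand does not depend on the unsampled fidelities at all, so there is nothing for your linearity argument to act on.) What is actually needed is the implication ``$\sum_{\SM}g_{\SM}(\V{\rho}_{\SM})\equiv 0$ on all product states $\Rightarrow$ each $g_{\SM}\equiv 0$,'' which is a genuine combinatorial fact. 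The paper proves it by contradiction: assuming some sampled state $\V{\rho}^{(l)}$ containing $l$ copies of a component $\V{\rho}_m$ violates \eqref{eqn:unbiased_id2}, it embeds it into the global state $\V{\rho}^{(l)}\otimes(\otimes^{N-M}\V{\rho}_m)$, uses the sum constraint to extract a violating sampled state with strictly more copies of $\V{\rho}_m$, and iterates until reaching $\otimes^{M}\V{\rho}_m$, whereupon the all-identical state $\otimes^{N}\V{\rho}_m$ violates \eqref{eqn:unbiased_id}. Some device of this kind is required; without it your proof of the feasibility equivalence, and hence of the lemma, is incomplete.
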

\begin{proof}
Please refer to Appendix~\ref{pf_lem:id-sample}
\end{proof}

The following theorem summarizes the results of this section.
\begin{Thm}[Generality of optimality with independent noise] \thlabel{thm:gen-opt-id}
If a measurement operation $\Set O^*$ is optimal in $\Ps$-\ref{prob:id2-m} for all
${\Set S}_{\mathrm{id}}(\V{f}_{\mathrm{all}})\subset{\Set S}_{\mathrm{id}}$,
then the composite measurement operation $\hat{\Set O}^* = \Set O^* \circ \Set T$ is optimal in $\Ps$-\ref{prob:1-m},
where the operation $\Set T$ is defined as in \thref{def:T}.
\end{Thm}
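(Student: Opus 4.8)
The plan is to obtain the theorem as a short concatenation of the three equivalence results proved in this section, introducing the probabilistic multirotation $\Set T$ only at the final step. Concretely, I would start from the hypothesis that $\Set O^*$ is optimal in $\Ps$-\ref{prob:id2-m} for every fidelity composition ${\Set S}_{\mathrm{id}}(\V{f}_{\mathrm{all}})\subset{\Set S}_{\mathrm{id}}$. This is exactly the premise of \thref{lem:id-sample}, so that lemma immediately yields that $\Set O^*$ is optimal in $\Ps$-\ref{prob:id-m}. No modification of the operation is needed here, since $\Ps$-\ref{prob:id2-m} and $\Ps$-\ref{prob:id-m} share the same decision variables and the sampling error \eqref{eqn:error-sample} that distinguishes the two objectives is independent of $\{\Set O,\Set D\}$.

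Next I would apply \thref{lem:id2sp} to the operation $\Set O^*$: since it is optimal in $\Ps$-\ref{prob:id-m}, it is also optimal in $\Ps$-\ref{prob:sp-m}. Again the two problems range over the same variables, and the content of \thref{lem:id2sp} --- that any separable ensemble decomposes as a mixture \eqref{eqn:Rall-sep} of independent-noise ensembles, with atypical measurement outcomes contributing negligibly to the conditional estimation error via the Bayes identity \eqref{eqn:Bayes} --- guarantees that worst-case optimality over ${\Set S}_{\mathrm{id}}$ implies worst-case optimality over ${\Set S}_{\mathrm{sp}}$. Finally I would invoke \thref{lem:sp2arb} with $\Set O^*$ in the role of the operation optimal for $\Ps$-\ref{prob:sp-m}; this produces the composite operation $\hat{\Set O}^* = \Set O^* \circ \Set T$, with $\Set T$ as in \thref{def:T}, and asserts that $\hat{\Set O}^*$ is optimal in $\Ps$-\ref{prob:1-m}. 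That is precisely the statement to be proved, so the chain closes.

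There is no genuine obstacle in this argument: the substantive work --- the fidelity-preserving diagonalization realized by $\Set T$, and the likelihood bound controlling the damage done by correlated noise --- is already packaged inside \thref{lem:sp2arb}, \thref{lem:id2sp}, and \thref{lem:id-sample}. The only thing requiring care is bookkeeping: tracking which of $\Ps$-\ref{prob:1-m}, $\Ps$-\ref{prob:sp-m}, $\Ps$-\ref{prob:id-m}, $\Ps$-\ref{prob:id2-m} each lemma links, verifying that the ``for all fidelity compositions'' quantifier in the theorem's hypothesis matches the quantifier in \thref{lem:id-sample}, and checking that the composition with $\Set T$ is attached exactly once, at the $\Ps$-\ref{prob:sp-m}\,$\to$\,$\Ps$-\ref{prob:1-m} step, rather than being reintroduced earlier.
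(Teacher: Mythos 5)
Your proposal is correct and matches the paper's proof, which simply cites \thref{lem:id-sample}, \thref{lem:id2sp}, and \thref{lem:sp2arb} as a direct chain; you have merely spelled out the bookkeeping that the paper leaves implicit. In particular, your observation that $\Set T$ is attached exactly once, at the $\Ps$-\ref{prob:sp-m}$\,\to\,\Ps$-\ref{prob:1-m} step, is exactly how the composite operation $\hat{\Set O}^* = \Set O^* \circ \Set T$ arises in the paper's argument.
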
 
\begin{proof}
The theorem is a direct consequence of Lemmas~\ref{lem:sp2arb}, \ref{lem:id2sp}, and \ref{lem:id-sample}.
\end{proof}

\section{Operation Construction}
\label{sec:construction}
In this section, we will first characterize the minimum estimation error of $\Ps$-\ref{prob:id2-m} and then construct a protocol to achieve this minimum error.

\subsection{Lower bound of the measurement error}

The characterization of the minimum mean squared error is a key research focus due to its theoretical and practical significance.
For $\Ps$-\ref{prob:id2-m}, the Cram\'er-Rao bound \cite{Cra:B99,Rao:B94} and the quantum Fisher information \cite{Paris:09,Saf:18}
respectively characterize the minimum estimation error for a given measurement operation $\Set O$,
and for the case where all measurement operations available.

However, $\Ps$-\ref{prob:id2-m} aims to minimize the estimation error with only separable measurement operations.
A new method is needed to characterize the minimum error in this case.
For noisy Bell states, the minimum estimation error was characterized in the author's prior work \cite{Ruan:J23}.
This result is achieved by indentifying the limitations of separable operations when measuring entangled states.
Recognizing that Bell states are a special case of \ac{GHZ} states, we prove in \thref{lem:min} that the minimum estimation error derived in \cite{Ruan:J23} also applies to the estimation of noisy \ac{GHZ} states.

\begin{Lem}[Lower bound for the estimation error]\thlabel{lem:min}
The mean squared estimation error, i.e., the objective function of $\Ps$-\ref{prob:id2-m} divided by $N \choose M$, is no less than 
\begin{align}
\sum_{n\in\Set N}\frac{(2f_n+1)(1-f_n)}{2MN}.
\label{eqn:errorbound}
\end{align}
\end{Lem}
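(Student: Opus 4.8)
The plan is to reduce the general $L$-qubit problem $\Ps$-\ref{prob:id2-m} to its two-qubit special case, in which the bound \eqref{eqn:errorbound} is precisely the minimum estimation error established for noisy Bell states in \cite{Ruan:J23}. First I would record two harmless normalizations. Since $|G^{s}_{\V{t}}\rangle$ is mapped to $|G^{+}_{\V{0}}\rangle=\tfrac{1}{\sqrt2}\big(|0\rangle^{\otimes L}+|1\rangle^{\otimes L}\big)$ (where $\V{0}=[0,\dots,0]$) by a fixed product of local Pauli operators, and conjugating every copy and every \ac{POVM} element of \eqref{eqn:MXsep} by that local unitary alters neither the separable structure nor any of the fidelities $f_{n}$, we may assume the target is $|G^{+}_{\V{0}}\rangle$; its two-qubit counterpart is the Bell state $|G^{+}_{[0,0]}\rangle$. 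Second, I would introduce the \emph{fan-out isometry} $V=\mathbb{I}_{2}\otimes W$, where $W|b\rangle=|b\rangle^{\otimes(L-1)}$ for $b\in\{0,1\}$, implemented by $L-2$ controlled-\textsc{not} gates from the second qubit onto fresh $|0\rangle$ ancillas. A one-line computation gives $V|G^{+}_{[0,0]}\rangle=|G^{+}_{\V{0}}\rangle$, hence $V^{\dagger}|G^{+}_{\V{0}}\rangle=|G^{+}_{[0,0]}\rangle$ since $V$ is an isometry, so that for every two-qubit state $\sigma$,
\[
\langle G^{+}_{\V{0}}|\,V\sigma V^{\dagger}\,|G^{+}_{\V{0}}\rangle=\langle G^{+}_{[0,0]}|\,\sigma\,|G^{+}_{[0,0]}\rangle .
\]
Thus $\sigma\mapsto V\sigma V^{\dagger}$ inflates a noisy Bell state of fidelity $f$ into a (genuinely multipartite) noisy \ac{GHZ} state of the \emph{same} fidelity, acting only on the second party's side (its qubit plus the ancillas).

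Next I would fix an arbitrary protocol $\{\Set O,\Set D\}$ that is feasible for $\Ps$-\ref{prob:id2-m} and build the two-qubit protocol $\{\Set O',\Set D\}$ that first applies $V$ to each sampled Bell copy and then runs $\{\Set O,\Set D\}$. Three properties need checking. \emph{(i) Separability:} every $\M{M}_{r}$ in \eqref{eqn:MXsep} is separable across the cut $\{1\}\,|\,\{2,\dots,L\}$, and since $V=\mathbb{I}_{2}\otimes W$ is local to the second block, the induced two-qubit operator $(V^{\otimes M})^{\dagger}\M{M}_{r}(V^{\otimes M})$ is separable across the two parties, i.e.\ $\Set O'$ again has the form \eqref{eqn:MXsep} with $L=2$. \emph{(ii) Unbiasedness:} the inflation map $\V{\rho}_{\SM}\mapsto V^{\otimes M}\V{\rho}_{\SM}(V^{\dagger})^{\otimes M}$ sends feasible two-qubit inputs to feasible $L$-qubit inputs with the same $\bar f_{\SM}$, so constraint \eqref{eqn:unbiased_id2} for $\{\Set O,\Set D\}$ passes to $\{\Set O',\Set D\}$. \emph{(iii) Error matching:} for any two-qubit ensemble $\V{\rho}_{\mathrm{all}}$ of independent noisy Bell states with fidelities $\V{f}_{\mathrm{all}}$, its inflation lies in $\Set S_{\mathrm{id}}(\V{f}_{\mathrm{all}})$, and by construction the error of $\{\Set O',\Set D\}$ on $\V{\rho}_{\mathrm{all}}$ equals the error of $\{\Set O,\Set D\}$ on the inflated ensemble.

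Putting (i)--(iii) together, the worst-case error of $\{\Set O,\Set D\}$ over $\Set S_{\mathrm{id}}(\V{f}_{\mathrm{all}})$ is at least its error on the inflated ensembles, which equals the worst-case error of the feasible two-qubit protocol $\{\Set O',\Set D\}$, which by the two-qubit bound of \cite{Ruan:J23} is at least $\binom{N}{M}\sum_{n\in\Set N}\tfrac{(2f_{n}+1)(1-f_{n})}{2MN}$. As $\{\Set O,\Set D\}$ was arbitrary, dividing by $\binom{N}{M}$ gives \eqref{eqn:errorbound}.

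I expect the main obstacle to be the bookkeeping in (i) and (ii): one must verify that conjugating a fully separable \ac{POVM} by the local isometry $V$ (together with its ancilla preparation) yields a separable \ac{POVM} of exactly the form prescribed by \eqref{eqn:MXsep}, and that the inflation map carries the family of independent noisy Bell states with fidelities $\V{f}_{\mathrm{all}}$ into $\Set S_{\mathrm{id}}(\V{f}_{\mathrm{all}})$ rather than merely into $\Set S_{\mathrm{id}}$. A secondary point is to align the formulation in \cite{Ruan:J23} --- its notion of a separable measurement, its unbiasedness constraint, and its sampling model --- with the $L=2$ instance of $\Ps$-\ref{prob:id2-m}, so that the quoted bound matches \eqref{eqn:errorbound} term by term. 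Should that provenance prove delicate, a self-contained alternative is to expand $|G^{+}_{\V{0}}\rangle\langle G^{+}_{\V{0}}|$ in the Bloch (stabilizer) basis, express $f_{n}$ as the average of the $2^{L}$ stabilizer expectations, bound the Fisher information about $f_{n}$ accessible to party-separable measurements so as to reproduce the per-copy variance $\tfrac12(2f_{n}+1)(1-f_{n})$, and aggregate over samples via the Cram\'er--Rao bound using $\binom{N-1}{M-1}/\binom{N}{M}=M/N$.
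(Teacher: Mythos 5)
Your proposal is correct and is essentially the paper's own argument run in the opposite direction: your fan-out isometry $V=\mathbb{I}_2\otimes W$ is the adjoint of the compression $\M{P}=\mathbb{I}_2\otimes\M{P}_{\strike{1}}$ used in the paper, your inflated ensembles coincide with the special subfamily $\tilde{\Set S}_{\mathrm{id}}(\V{f}_{\mathrm{all}})$ to which the paper restricts the maximization, and both reductions terminate in the two-qubit bound of \cite[Lem.~B.3]{Ruan:J23}. The separability and unbiasedness bookkeeping you flag as the main obstacle is exactly what the paper carries out in \eqref{eqn:traceeq}--\eqref{eqn:hatMr}.
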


\begin{proof}
Please refer to Appendix~\ref{pf_lem:min} for the proof.
\end{proof}

\subsection{Achieving the minimum estimation error}
This subsection constructs an estimation protocol that serves as the optimal solution of $\Ps$-\ref{prob:1-m}.

A noisy \ac{GHZ} state $\V{\rho}$ with fidelity $f$ can be expressed as
\begin{align}
\V{\rho} = f\M{G}^s_{\V{t}} + (1-f)\M{N},
\end{align}
where the noise component $\M{N}$ satisfies
\begin{align}
\mathrm{Tr}(\M{N}) = 1,\;\; \mathrm{Tr}(\M{G}^s_{\V{t}}\M{N}) = 0,\;\;(\mathbb{I}_{2^L} - \M{G}^s_{\V{t}})\M{N}\succcurlyeq \M{0} .
\label{eqn:N}
\end{align}
To ensure an unbiased estimate, the expectation of the measurement outcome $R$ should be determined by the fidelity $f$ and not influenced by the noise component $\M{N}$.
Furthermore, with no prior information about the noise, $\M{N}$ can be any matrix that satisfies \eqref{eqn:N}.
Therefore, the observable corresponding to the measurement outcome must have the following form:
\begin{align}
\begin{split}
\M{M} = c (\mathbb{I}_{2^L}-\M{G}^s_{\V{t}}),
\end{split}\label{eqn:obM}
\end{align}
where the coefficient $c\in(0,1]$. 
This coefficient represents the probability that the measurement can distinguish the fidelity component $\M{G}^s_{\V{t}}$ from the noise component $\M{N}$. 
In this case, the fidelity of $\V{\rho}$ is given by
\begin{align}
f &= 1- \frac{1}{c}\mathrm{tr}\big[c(\mathbb{I}_{2^L}-\M{G}^s_{\V{t}})\V{\rho}
\big]\nonumber\\
&=1- \frac{1}{c}\Pr[R=1].\label{eqn:fcp}
\end{align}
To minimize the estimation error, $c$ needs to be maximized. 
For the two-node case, i.e., $L=2$, it has been shown that $\frac{2}{3}$ is the maximum  achievable value for $c$ \cite{Ruan:J23}.
Furthermore, based on the analysis in the previous step, the two-node case provides a lower bound on the estimation error for cases with a generic number of nodes.
Therefore, we construct Protocol~\ref{alg:fidelityest} to achieve $c=\frac{2}{3}$.

Specifically, every $L$-qubit state $\V{\rho}$ has a unique Bloch representation:
\begin{align}
\V{\rho} = \frac{\mathbb{I}_{2^L}}{2^L} + \sum_{j=1}^{4^L-1}r^{(j)}\V{\sigma}^{(j)},
\end{align}
where $\V{\sigma}^{(j)}$ are the $L$-qubit non-identity Pauli observables, and the Bloch coefficients $r^{(j)}\in \mathbb{R}$ determine the state. 
Since 
\begin{align}
r^{(j)} = \mathrm{tr}\big[\V{\rho} \V{\sigma}^{(j)}\big],
\end{align} 
the Bloch coefficients can be inferred through local Pauli measurements. 
Therefore, the Bloch representation provides a valuable tool for estimating quantum states via local measurements.

Building on the above insights, \thref{lem:Bloch} gives the Bloch representation of the \ac{GHZ} states.
Then, by applying this representation in conjunction with combinatorial theory, \thref{lem:Paulivalue} characterizes the 
expectation of several weighted Pauli observables when measured on \ac{GHZ} states.
Based on this result, \thref{lem:unbiased} demonstrates  that \eqref{eqn:checkF} is an unbiased estimator for $\Ps$-\ref{prob:id2-m},
and \thref{lem:opt-id} further establishes that this estimator is optimal for $\Ps$-\ref{prob:id2-m}.
Finally, by integrating these findings with \thref{thm:gen-opt-id}, \thref{thm:opt} confirms that Protocol~\ref{alg:fidelityest} is the optimal solution for the original fidelity estimation problem $\Ps$-\ref{prob:1-m}.

\begin{figure}
\vspace{-3mm}
\begin{algorithm}[H]
\caption{Fidelity estimation}\label{alg:fidelityest}
\begin{algorithmic}[1]
\State{\em Set measurement parameters.} 
The nodes select the sampling set $\Set M \subset \Set N$ completely at random, with $|\Set M| =M$.
Additionally, the nodes generate $M$ number of \ac{i.i.d.} random variables $A_n \in\{0,1\}$, $n\in \Set M$,
with distribution 
\begin{align}
\Pr[A_n=0]=\frac{1}{3},\quad \mbox{and} \quad \Pr[A_n=1]=\frac{2}{3}.  \label{eqn:An}
\end{align}
In the case of $A_n = 1$, the nodes randomly pick an $L$-bit string with even $l_1$ norm, i.e.,
\begin{align}
\V{K}_n = [K_{n,1},K_{n,2},\ldots,K_{n,L}]\in \{0,1\}^L, \;\; ||\V{K}_n||_1 \mbox{ even}.\label{eqn:kn}
\end{align}
All strings satisfying \eqref{eqn:kn} are picked with equal probability.
\State{\em Perform measurements.} For the noisy \ac{GHZ} state $n\in \Set{M}$, the $L$ nodes make measurements according to $A_n$ and $\V{K}_n$, and record the measurement result as correct  ($r_n=0$) or erroneous ($r_n = 1$).
Specifically, 
\begin{itemize}[leftmargin=*]
\item[-] When $A_n = 0$, all $L$ nodes measure in the $z$-basis and record the result $\V{c}_n\in\{0,1\}^L$ according to \eqref{eqn:r_n-z}
\begin{align}
r_n = \indicator\big(\V{c}_n \not\in\{\V{t},\tilde{\V{t}}\} \big),\label{eqn:r_n-z}
\end{align}
where $\indicator(\cdot)$ is the indicator function.
\item[-] When $A_n = 1$, the nodes measure in $x$- or $y$-basis to evaluate the Pauli observable 
\begin{align}
\label{eqn:sigmak}
\V{\sigma}^{(\V{K}_n)}_{xy} = \Motimes_{l \in \Set L} \V{\sigma}_{xy}^{(K_{n,l})}, \;\;\mbox{where}\;\; \V{\sigma}_{xy}^{(k)} = \begin{cases}
\V{\sigma}_y, &\hspace{-1.5mm} \mbox{if } k = 1 \\
\V{\sigma}_x,& \hspace{-1.5mm} \mbox{if } k = 0 \\
\end{cases}\hspace{-1.5mm}
\end{align}
and get result $c_n\in\{-1,1\}$.
The nodes record the result according to \eqref{eqn:r_n-xy}.
\begin{align}
r_n = \left\{\rule{0cm}{0.6cm}\right.
\hspace{-.5mm}\begin{array}{l@{\,}l@{\;}l}
\hspace{-1mm}\indicator\big(c_n =&(-1)^{\frac12||\V{K}_n||_1 + \V{K}_n \cdot \V{t}-1}\big)&\mbox{if }  s  = + \vspace{1mm}\\
\hspace{-1mm}\indicator\big(c_n =&(-1)^{\frac12||\V{K}_n||_1 + \V{K}_n \cdot \V{t}}\big)&\mbox{if }  s  = -
\end{array},\hspace{-3mm}
\label{eqn:r_n-xy}
\end{align}
where $\V{K}_n \cdot \V{t}$ is the inner product of two bit strings.
\end{itemize}
\State{\em Estimate fidelity.} The number of errors and the \ac{QBER} are expressed as
$
e_{\SM} = \sum_{n\in\SM} r_{n}$, and
$\varepsilon_{\SM}= \dfrac{e_{\SM}}{M}$,
respectively. The estimated fidelity is given by
\begin{align}
\check{f}= 1- \frac{3}{2}\varepsilon_{\SM}.\label{eqn:checkF}
\end{align}
\end{algorithmic}
\end{algorithm}
\end{figure}

\begin{Remark}[Implementability of Protocol~\ref{alg:fidelityest}]
\thlabel{rem:implement}
Protocol~\ref{alg:fidelityest} exclusively relies on local Pauli measurements, which are standard operations implementable across a variety of quantum platforms \cite{DeGetal:J13,Yinetal:J17,Kaletal:J17,Bocetal:J18}.~\hfill~\QEDB
\end{Remark}

\begin{Lem}[Bloch representation of \ac{GHZ} states]
\thlabel{lem:Bloch}
The Bloch representation of a \ac{GHZ} state is given by
\begin{align}
\M{G}^{\pm}_{\V{j}}
&= \frac{1}{2^L} \sum_{\substack{ \V{k} \in \{0,1\}^L \\ ||\V{k}||_1\scriptsize \mbox{ even}}} (-1)^{\V{k}\cdot\V{j}}
\Big( \V{\sigma}_{Iz}^{(\V{k})} \pm (-1)^{\frac{||\V{k}||_1}{2}}\V{\sigma}_{xy}^{(\V{k})}\Big),\label{eqn:gGHZBloch}
\end{align}
where 
\begin{align}
\V{\sigma}_{Iz}^{(\V{k})} = \Motimes_{l\in\Set L} \V{\sigma}_{Iz}^{(k_l)},\quad \V{\sigma}_{xy}^{(\V{k})} = \Motimes_{l\in\Set L} \V{\sigma}_{xy}^{(k_l)},   
\end{align}
$k_l$ is the $l$-th element of $\V{k}$, and
\begin{equation}
\V{\sigma}_{Iz}^{(k_l)} = \begin{cases}
\V{\sigma}_z, & \mbox{if } k_l = 1, \\
\mathbb{I}_2, & \mbox{if } k_l = 0, \\
\end{cases},
\quad
\V{\sigma}_{xy}^{(k_l)} = \begin{cases}
\V{\sigma}_y, & \mbox{if } k_l = 1, \\
\V{\sigma}_x, & \mbox{if } k_l = 0, \\
\end{cases}\label{eqn:sigmaxyz}
\end{equation}
in which $\V{\sigma}_x$, $\V{\sigma}_y$, and $\V{\sigma}_z$ are the Pauli matrices.
\end{Lem}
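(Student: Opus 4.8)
The plan is to compute $\M{G}^{\pm}_{\V{j}} = \ket{G^{\pm}_{\V{j}}}\bra{G^{\pm}_{\V{j}}}$ directly from \eqref{eqn:gGHZ} and expand it into the four rank-one terms $\frac{1}{2}\big(\ket{\V{j}}\bra{\V{j}} + \ket{\TVj}\bra{\TVj} \pm \ket{\V{j}}\bra{\TVj} \pm \ket{\TVj}\bra{\V{j}}\big)$, treating the ``diagonal'' pair and the ``off-diagonal'' pair separately; for each pair I would first write the single-qubit factors in the Pauli basis and then take the $L$-fold tensor product.

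For the diagonal pair I will use $\ket{b}\bra{b} = \frac{1}{2}(\mathbb{I}_2 + (-1)^b\V{\sigma}_z)$ for $b\in\{0,1\}$, so that $\ket{\V{j}}\bra{\V{j}} = \Motimes_{l\in\Set L}\frac{1}{2}(\mathbb{I}_2 + (-1)^{j_l}\V{\sigma}_z) = \frac{1}{2^L}\sum_{\V{k}\in\{0,1\}^L}(-1)^{\V{k}\cdot\V{j}}\V{\sigma}_{Iz}^{(\V{k})}$ and likewise $\ket{\TVj}\bra{\TVj} = \frac{1}{2^L}\sum_{\V{k}}(-1)^{\V{k}\cdot\TVj}\V{\sigma}_{Iz}^{(\V{k})}$. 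The elementary identity $\V{k}\cdot\TVj = \|\V{k}\|_1 - \V{k}\cdot\V{j}$ gives $(-1)^{\V{k}\cdot\TVj} = (-1)^{\|\V{k}\|_1}(-1)^{\V{k}\cdot\V{j}}$, so summing the two terms produces the factor $1 + (-1)^{\|\V{k}\|_1}$, which equals $2$ when $\|\V{k}\|_1$ is even and $0$ otherwise; this yields exactly the $\V{\sigma}_{Iz}^{(\V{k})}$ part of \eqref{eqn:gGHZBloch}.

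For the off-diagonal pair I will use $\ket{b}\bra{1-b} = \frac{1}{2}(\V{\sigma}_x + (-1)^b i\,\V{\sigma}_y)$, which after taking the tensor product and pairing the ``choose $\V{\sigma}_y$'' positions with the index set $\{l:k_l=1\}$ gives $\ket{\V{j}}\bra{\TVj} = \frac{1}{2^L}\sum_{\V{k}}(-1)^{\V{k}\cdot\V{j}}\,i^{\|\V{k}\|_1}\,\V{\sigma}_{xy}^{(\V{k})}$ and, by Hermitian conjugation, $\ket{\TVj}\bra{\V{j}} = \frac{1}{2^L}\sum_{\V{k}}(-1)^{\V{k}\cdot\V{j}}(-i)^{\|\V{k}\|_1}\V{\sigma}_{xy}^{(\V{k})}$. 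Adding these produces the factor $i^{\|\V{k}\|_1} + (-i)^{\|\V{k}\|_1}$; writing $\|\V{k}\|_1 = 2p$ in the even-weight case gives $i^{2p} + (-i)^{2p} = 2(-1)^p = 2(-1)^{\|\V{k}\|_1/2}$, and the factor vanishes for odd weight. Multiplying by the outer $\pm\frac{1}{2}$ and combining with the diagonal contribution restricts both sums to even-weight $\V{k}$ and reproduces \eqref{eqn:gGHZBloch}.

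The computation is essentially bookkeeping; the only point requiring care is the tensor-product expansion of $\Motimes_{l}\frac{1}{2}(\V{\sigma}_x + (-1)^{j_l}i\V{\sigma}_y)$, namely verifying that the scalar coefficient of $\V{\sigma}_{xy}^{(\V{k})}$ is $\prod_{l:k_l=1}(-1)^{j_l}i = (-1)^{\V{k}\cdot\V{j}}i^{\|\V{k}\|_1}$, together with keeping the overall sign of the $\V{\sigma}_{xy}^{(\V{k})}$ block consistent with the $\pm$ in \eqref{eqn:gGHZ}. As a sanity check, for $L=2$ and $\V{j}=00$ formula \eqref{eqn:gGHZBloch} should reduce to the familiar Bell-state Bloch form $\frac{1}{4}(\mathbb{I}_2\otimes\mathbb{I}_2 + \V{\sigma}_z\otimes\V{\sigma}_z \pm \V{\sigma}_x\otimes\V{\sigma}_x \mp \V{\sigma}_y\otimes\V{\sigma}_y)$, which it does. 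No step poses a genuine obstacle.
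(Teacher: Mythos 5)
Your proof is correct and follows essentially the same route as the paper: both expand the four dyads $|\V{j}\rangle\langle\V{j}|$, $|\TVj\rangle\langle\TVj|$, $|\V{j}\rangle\langle\TVj|$, $|\TVj\rangle\langle\V{j}|$ qubit-by-qubit in the Pauli basis and observe that odd-weight terms cancel while even-weight terms acquire the factors $2$ and $2(-1)^{||\V{k}||_1/2}$. The only organizational difference is that the paper first treats $\V{j}=00\cdots0$ and then obtains general $\V{j}$ by conjugating with $\Motimes_{l\in\Set L}\V{\sigma}_{Ix}^{(j_l)}$, whereas you carry general $\V{j}$ through the expansion directly via $\V{k}\cdot\TVj=||\V{k}||_1-\V{k}\cdot\V{j}$; both yield the same $(-1)^{\V{k}\cdot\V{j}}$ factor.
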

\begin{proof}
Please refer to Appendix~\ref{pf_lem:Bloch}
\end{proof}


\begin{Lem}[Value of weighted Pauli observables]
\thlabel{lem:Paulivalue}
Consider the Pauli observables
\begin{equation}
\label{eqn:sigmak-2}
\V{\sigma}^{(\V{k})}_{xy}  =  \Motimes_{l \in \Set L} \V{\sigma}_{xy}^{(k_l)}, \qquad \V{\sigma}_{xy}^{(k_l)}  =  \begin{cases}
\V{\sigma}_y, & \mbox{if } k_l = 1, \\
\V{\sigma}_x, & \mbox{if } k_l = 0, \\
\end{cases}
\end{equation}
where $\V{k}=[k_1,k_2,\ldots,k_L]\in\{0,1\}^L$ is an $L$-bit string with even $l_1$ norm.
When these observables are measured on the \ac{GHZ} states, the expected values have the following properties.
\begin{align}
 \mathrm{Tr}\bigg[
 \Big(\frac{1}{2^{L-1}} \sum_{ \V{k} \in \{0,1\}^L \atop ||\V{k}||_1\scriptsize \mbox{ even}}  (-1)^{\frac{||\V{k}||_1}{2} + \V{k}\cdot\V{j}} 
 \V{\sigma}_{xy}^{(\V{k})}\Big)
  \M{G}^{\pm}_{\V{j}}\bigg] =\pm1,\label{eqn:obv_sumspm_w}\\
   \forall \V{j} \in {\Set J},\mbox{ and}\nonumber\\
\mathrm{Tr}\bigg[
\Big(\frac{1}{2^{L-1}} \sum_{\scriptsize \V{k} \in \{0,1\}^L \atop ||\V{k}||_1 \mbox{ even}}  (-1)^{\frac{||\V{k}||_1}{2} + \V{k}\cdot\V{j}} 
 \V{\sigma}_{xy}^{(\V{k})}\Big)
\M{G}^{\pm}_{\TVj}\bigg] =0,\hspace{2.8mm}\label{eqn:obv_sums0_w}\\
\forall \TVj \in {\Set J},\; \TVj \neq \V{j}.\nonumber
\end{align}
\end{Lem}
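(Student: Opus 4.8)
The plan is to obtain both identities directly from the Bloch representation of \ac{GHZ} states established in \thref{lem:Bloch}, combined with the orthogonality relations of the Pauli basis under the trace inner product. First I would record the key fact that for two $L$-bit strings $\V{k},\V{k}'$ of even $l_1$-norm, the Pauli observables $\V{\sigma}_{xy}^{(\V{k})}$ and $\V{\sigma}_{Iz}^{(\V{k}')}$ are orthonormal in the sense $\mathrm{Tr}\big[\V{\sigma}_{xy}^{(\V{k})}\V{\sigma}_{xy}^{(\V{k}')}\big] = 2^L\,\delta_{\V{k},\V{k}'}$ and $\mathrm{Tr}\big[\V{\sigma}_{xy}^{(\V{k})}\V{\sigma}_{Iz}^{(\V{k}')}\big] = 0$ (the latter because on at least one tensor factor we pair an $x$- or $y$-Pauli with an identity or a $z$-Pauli, giving a traceless factor). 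The $\V{\sigma}_{xy}^{(\V{k})}\V{\sigma}_{Iz}^{(\V{k}')}$ term also vanishes when $\V{k}'=\V{0}$ since $\V{\sigma}_{xy}^{(\V{k})}$ is traceless for $\V{k}\neq\V{0}$, and here $\V{k}$ ranges over even-norm strings which for $L\geq 2$ includes nonzero ones; the $\V{k}=\V{0}$ contribution to the weighted sum is itself just $\mathbb{I}_{2^L}$ with coefficient $1$, which I will need to track separately.

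Next I would substitute \eqref{eqn:gGHZBloch} for $\M{G}^{\pm}_{\V{j}}$ into the left-hand side of \eqref{eqn:obv_sumspm_w}. Because the weighted operator $\frac{1}{2^{L-1}}\sum_{\V{k}}(-1)^{\|\V{k}\|_1/2+\V{k}\cdot\V{j}}\V{\sigma}_{xy}^{(\V{k})}$ consists purely of $xy$-type Paulis, only the $\pm(-1)^{\|\V{k}\|_1/2}\V{\sigma}_{xy}^{(\V{k})}$ part of the Bloch expansion survives the trace; the $\V{\sigma}_{Iz}$ part is annihilated by orthogonality. Carrying out the trace then collapses the double sum over $\V{k},\V{k}'$ to a single diagonal sum, and the sign factors combine as $(-1)^{\|\V{k}\|_1/2+\V{k}\cdot\V{j}}\cdot(\pm 1)(-1)^{\|\V{k}\|_1/2}(-1)^{\V{k}\cdot\V{j}} = \pm(-1)^{\|\V{k}\|_1}=\pm 1$ since $\|\V{k}\|_1$ is even. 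The normalization works out to $\frac{1}{2^{L-1}}\cdot\frac{1}{2^L}\cdot 2^L\cdot(\text{number of even-norm }\V{k})$; since there are exactly $2^{L-1}$ strings of even $l_1$-norm in $\{0,1\}^L$, this gives $\pm 1$, proving \eqref{eqn:obv_sumspm_w}.

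For \eqref{eqn:obv_sums0_w} the same substitution applies, but now I expand $\M{G}^{\pm}_{\TVj}$, so the relevant sign in its Bloch expansion is $(-1)^{\V{k}\cdot\TVj}$ rather than $(-1)^{\V{k}\cdot\V{j}}$. After the trace collapses the sum as before, I am left with $\frac{\pm 1}{2^{L-1}}\sum_{\V{k}:\,\|\V{k}\|_1\text{ even}}(-1)^{\V{k}\cdot\V{j}+\V{k}\cdot\TVj}=\frac{\pm 1}{2^{L-1}}\sum_{\V{k}}(-1)^{\V{k}\cdot(\V{j}\oplus\TVj)}$. The exponent $\V{k}\cdot(\V{j}\oplus\TVj)\bmod 2$ is a nontrivial $\mathbb{Z}_2$-linear functional of $\V{k}$ precisely because $\V{j}\neq\TVj$ forces $\V{j}\oplus\TVj\neq\V{0}$; the main obstacle is to show this character sum over the \emph{even-weight} subgroup (not all of $\{0,1\}^L$) vanishes. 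I would handle this by noting that the even-weight strings form an index-$2$ subgroup $E\leq\mathbb{Z}_2^L$, and a nontrivial character of $\mathbb{Z}_2^L$ restricts to a character of $E$ that is trivial only if $\V{j}\oplus\TVj$ lies in the orthogonal complement $E^\perp=\{\V{0},\V{1}\}$; so I must separately rule out $\V{j}\oplus\TVj=\V{1}$, i.e. $\TVj=\bar{\V{j}}$. But $\V{j},\TVj\in\Set J$ both have first bit $0$ by \eqref{eqn:Vj}, so their bitwise complement relationship is impossible, hence $\V{j}\oplus\TVj\notin\{\V{0},\V{1}\}$, the restricted character is nontrivial on $E$, and the sum is zero. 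This subgroup-character argument, together with the bookkeeping that the index-$2$ subgroup has size $2^{L-1}$ matching the normalization, is the only genuinely nonroutine ingredient; everything else is the orthogonality of Paulis applied to \thref{lem:Bloch}.
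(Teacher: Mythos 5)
Your proposal is correct. For \eqref{eqn:obv_sumspm_w} it follows essentially the paper's own route: substitute the Bloch expansion of \thref{lem:Bloch}, annihilate the $\V{\sigma}_{Iz}$ part by Pauli orthogonality, collapse the double sum to the diagonal where the sign factors square to $+1$, and let the $2^{L-1}$ even-weight strings cancel the normalization. One small slip: by \eqref{eqn:sigmak-2}, $\V{\sigma}_{xy}^{(\V{0})}=\V{\sigma}_x^{\otimes L}$, not $\mathbb{I}_{2^L}$, so every $\V{\sigma}_{xy}^{(\V{k})}$ is traceless and orthogonal to every $\V{\sigma}_{Iz}^{(\V{k}')}$; the separate tracking you reserve for $\V{k}=\V{0}$ is unnecessary, and dropping it only simplifies your argument. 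Where you genuinely diverge from the paper is the key vanishing step $\sum_{\V{k}:\,||\V{k}||_1\ \mathrm{even}}(-1)^{\V{k}\cdot(\V{j}\oplus\TVj)}=0$. The paper proves this by an explicit combinatorial count (\thref{prop:subsetnum} and \thref{lem:subsetnum}): it sets $\Set{G}=\{l: j_l\neq \tilde{j}_l\}$, notes $1\le|\Set{G}|\le L-1$ because both labels begin with $0$ and are distinct, partitions the even-size subsets by their intersection with $\Set{G}$, and obtains $2^{L-G-1}(2^{G-1}-2^{G-1})=0$. You instead observe that the even-weight strings form an index-$2$ subgroup $E\le\mathbb{Z}_2^L$ with $E^\perp=\{\V{0},\V{1}\}$, and that $\V{j}\oplus\TVj\notin\{\V{0},\V{1}\}$ for exactly the same reason (first bits agree, strings differ), so the restricted character is nontrivial and its sum over $E$ vanishes. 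The two arguments rest on the identical structural fact and are equivalent in content; yours is shorter and more conceptual, while the paper's is more elementary and self-contained in that it never invokes character orthogonality, only subset counting.
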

\begin{proof}
Please refer to Appendix~\ref{pf_lem:Paulivalue}
\end{proof}

\begin{Lem}[Unbiased estimator]
\thlabel{lem:unbiased}
If Protocol~\ref{alg:fidelityest} is applied to slove $\Ps$-\ref{prob:id2-m}, the estimated fidelity $\check{f}$ given by \eqref{eqn:checkF} satisfies \eqref{eqn:unbiased_id2}.
\end{Lem}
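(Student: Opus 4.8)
The plan is to reduce the claim to a single-copy identity and then read off the effective observable that Protocol~\ref{alg:fidelityest} applies. Since $\Ps$-\ref{prob:id2-m} concerns independent noise, $\V{\rho}_{\mathrm{all}}=\otimes_{n\in\Set N}\V{\rho}_n$, so for each sampled $n\in\SM$ the flag $R_n\in\{0,1\}$ depends only on $\V{\rho}_n$ and on the internal randomness $(A_n,\V{K}_n)$, which is drawn independently across $n$. Hence $\mathbb{E}_{R}[\check{F}\mid\V{\rho}_{\SM}]=1-\frac{3}{2M}\sum_{n\in\SM}\Pr[R_n=1\mid\V{\rho}_n]$, and it is enough to show $\Pr[R_n=1\mid\V{\rho}_n]=\frac{2}{3}(1-f_n)$ for every noisy \ac{GHZ} state $\V{\rho}_n$ of fidelity $f_n$; summing over $\SM$ then gives $\mathbb{E}_{R}[\check{F}\mid\V{\rho}_{\SM}]=1-\frac{1}{M}\sum_{n\in\SM}(1-f_n)=\bar{f}_{\SM}$, i.e., \eqref{eqn:unbiased_id2}.

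First I would compute the observable $\M{M}_n$ with $\Pr[R_n=1\mid\V{\rho}_n]=\mathrm{Tr}[\M{M}_n\V{\rho}_n]$ by averaging over $(A_n,\V{K}_n)$. With probability $\frac13$ ($A_n=0$) the nodes measure in the $z$-basis and flag $r_n=1$ on outcomes outside $\{\V{t},\tilde{\V{t}}\}$, giving the conditional observable $\M{M}^{z}_n=\mathbb{I}_{2^L}-|\V{t}\rangle\langle\V{t}|-|\tilde{\V{t}}\rangle\langle\tilde{\V{t}}|$. With probability $\frac23$ ($A_n=1$) the nodes draw $\V{K}_n$ uniformly over the $2^{L-1}$ even-weight strings; for $\V{K}_n=\V{k}$ they evaluate $\V{\sigma}^{(\V{k})}_{xy}$ and flag $r_n=1$ on the eigenvalue $-s(-1)^{\frac12||\V{k}||_1+\V{k}\cdot\V{t}}$ (reading $s\in\{+1,-1\}$; a one-line check confirms this single expression matches both branches of \eqref{eqn:r_n-xy}), so the conditional observable is $\frac12\big(\mathbb{I}_{2^L}-s(-1)^{\frac12||\V{k}||_1+\V{k}\cdot\V{t}}\V{\sigma}^{(\V{k})}_{xy}\big)$. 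Averaging over $\V{k}$ yields $\M{M}^{xy}_n=\frac12\mathbb{I}_{2^L}-\frac{s}{2}\V{\Sigma}_{\V{t}}$ with $\V{\Sigma}_{\V{t}}:=\frac{1}{2^{L-1}}\sum_{||\V{k}||_1\text{ even}}(-1)^{\frac12||\V{k}||_1+\V{k}\cdot\V{t}}\V{\sigma}^{(\V{k})}_{xy}$, and overall $\M{M}_n=\frac13\M{M}^{z}_n+\frac23\M{M}^{xy}_n$.

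The core step is to identify $\M{M}_n=\frac23(\mathbb{I}_{2^L}-\M{G}^{s}_{\V{t}})$. From the definition \eqref{eqn:gGHZ} the cross terms cancel and $|\V{t}\rangle\langle\V{t}|+|\tilde{\V{t}}\rangle\langle\tilde{\V{t}}|=\M{G}^{+}_{\V{t}}+\M{G}^{-}_{\V{t}}$, while subtracting the two signs in \thref{lem:Bloch} (the $\V{\sigma}_{Iz}$-terms cancel, leaving twice the $\V{\sigma}_{xy}$-terms) gives $\M{G}^{+}_{\V{t}}-\M{G}^{-}_{\V{t}}=\V{\Sigma}_{\V{t}}$; alternatively one may use \thref{lem:Paulivalue}, which is the trace-level form of this last identity, together with the fact that $\V{\Sigma}_{\V{t}}$ is real, symmetric and anti-diagonal in the computational basis, hence diagonal in the \ac{GHZ} basis. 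Inserting these into $\M{M}_n=\frac13\big(\mathbb{I}_{2^L}-\M{G}^{+}_{\V{t}}-\M{G}^{-}_{\V{t}}\big)+\frac23\big(\frac12\mathbb{I}_{2^L}-\frac{s}{2}(\M{G}^{+}_{\V{t}}-\M{G}^{-}_{\V{t}})\big)$ and collecting terms, the $\M{G}^{-s}_{\V{t}}$ contributions cancel and $\M{M}_n=\frac23(\mathbb{I}_{2^L}-\M{G}^{s}_{\V{t}})$ remains. Therefore $\Pr[R_n=1\mid\V{\rho}_n]=\mathrm{Tr}[\M{M}_n\V{\rho}_n]=\frac23\big(1-\mathrm{Tr}[\M{G}^{s}_{\V{t}}\V{\rho}_n]\big)=\frac23(1-f_n)$, which closes the chain from the first paragraph.

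I expect the main obstacle to be the sign bookkeeping in the second paragraph: casting the case split of \eqref{eqn:r_n-xy} — which carries the exponent $\frac12||\V{K}_n||_1+\V{K}_n\cdot\V{t}$ plus an extra $-1$ in the $s=+$ branch — into one sign-coherent formula, and then matching the weighted Pauli sum $\V{\Sigma}_{\V{t}}$ against the $\V{\sigma}_{xy}$-component of \thref{lem:Bloch}. The rest is a short linear-algebra simplification; note that $\M{M}_n$ equals $\frac23(\mathbb{I}_{2^L}-\M{G}^{s}_{\V{t}})$ as a genuine operator identity, so unbiasedness holds for every admissible noise $\M{N}$ obeying \eqref{eqn:N}, exactly as the worst-case constraint \eqref{eqn:unbiased_id2} of $\Ps$-\ref{prob:id2-m} requires.
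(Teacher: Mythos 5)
Your proposal is correct. It proves exactly the conditional statement \eqref{eqn:unbiased_id2} that the lemma requires, and every step checks out: the single flagged eigenvalue $-s(-1)^{\frac12||\V{k}||_1+\V{k}\cdot\V{t}}$ does unify both branches of \eqref{eqn:r_n-xy}; the identity $\M{G}^{+}_{\V{t}}-\M{G}^{-}_{\V{t}}=\V{\Sigma}_{\V{t}}$ follows immediately by subtracting the two signs in \thref{lem:Bloch}; and collecting terms indeed cancels the $\M{G}^{-s}_{\V{t}}$ contributions to leave $\M{M}_n=\frac23(\mathbb{I}_{2^L}-\M{G}^{s}_{\V{t}})$. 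The route differs from the paper's in a worthwhile way. The paper works in the Schr\"odinger picture: it expands each $\V{\rho}_n$ in the \ac{GHZ} basis with coefficients $f^{\tau}_{\V{j}}=\mathrm{Tr}[\V{\rho}_n\M{G}^{\tau}_{\V{j}}]$, evaluates the expected error rate term by term via the trace-level statement \thref{lem:Paulivalue}, and watches the linear combination collapse to $1-f^{+}_{\V{t}}$ (it also fixes $s=+$ without loss of generality). You work in the Heisenberg picture: you average the measurement operators over the protocol's internal randomness first and obtain a genuine operator identity for the effective single-copy observable, needing only \thref{lem:Bloch} and never the basis expansion of the state or \thref{lem:Paulivalue}. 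By linearity of the trace the two computations are dual, but your formulation is slightly stronger and more transparent: it directly exhibits the observable in the form \eqref{eqn:obM} with $c=\frac23$ promised in the main text, makes manifest that unbiasedness holds for every admissible noise component $\M{N}$ in \eqref{eqn:N}, and handles both signs $s=\pm$ uniformly. The paper's version, in exchange, reuses \thref{lem:Paulivalue} (which it needs anyway) and avoids arguing about operator equalities.
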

\begin{proof}
Please refer to Appendix~\ref{pf_lem:unbiased}.
\end{proof}

\begin{Lem}[Optimality with independent noise]\thlabel{lem:opt-id} Protocol~\ref{alg:fidelityest} achieves the minimum estimation error in $\Ps$-\ref{prob:id2-m}.
\end{Lem}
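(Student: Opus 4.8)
The plan is to show that Protocol~\ref{alg:fidelityest}, when applied to $\Ps$-\ref{prob:id2-m}, attains the lower bound \eqref{eqn:errorbound} established in \thref{lem:min}; since \thref{lem:unbiased} already shows the protocol is unbiased and hence feasible, matching the bound proves optimality. So the core task is a direct variance computation for the estimator $\check F = 1-\tfrac32\varepsilon_{\SM}$.

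First I would fix a sampling set $\SM$ and a fidelity composition $\V f_{\mathrm{all}}$, and observe that by construction the recorded bits $r_n$, $n\in\SM$, are independent across $n$ (the randomness $A_n,\V K_n$ is i.i.d.\ and the states factorize over $\Set S_{\mathrm{id}}$). Hence $e_{\SM}=\sum_{n\in\SM}r_n$ is a sum of independent Bernoulli variables, and $\mathrm{Var}(\check F\mid \V\rho_{\SM}) = \tfrac94 M^{-2}\sum_{n\in\SM}\mathrm{Var}(r_n)$. The next step is to pin down $\Pr[r_n=1]$ for a state $\V\rho_n$ of fidelity $f_n$. Using \thref{lem:Bloch} and \thref{lem:Paulivalue}: conditioned on $A_n=1$, averaging the recording rule \eqref{eqn:r_n-xy} over the uniformly chosen even-weight string $\V K_n$ reproduces exactly the weighted Pauli combination in \eqref{eqn:obv_sumspm_w}, whose expectation on the fidelity component $\M G^s_{\V t}$ is $+1$ (after the sign bookkeeping in \eqref{eqn:r_n-xy}) and, by \eqref{eqn:obv_sums0_w} together with the noise constraints \eqref{eqn:N}, is $0$ on the noise component; conditioned on $A_n=0$, the $z$-basis rule \eqref{eqn:r_n-z} likewise isolates the $\V\sigma_{Iz}$ part of \eqref{eqn:gGHZBloch}. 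Combining the two branches with weights $\tfrac13,\tfrac23$ should give $\Pr[r_n=1] = \tfrac23(1-f_n)$, consistent with the $c=\tfrac23$ discussion around \eqref{eqn:fcp}.

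Then $\mathrm{Var}(r_n) = \tfrac23(1-f_n)\bigl(1-\tfrac23(1-f_n)\bigr) = \tfrac23(1-f_n)\cdot\tfrac13(2f_n+1) = \tfrac29(1-f_n)(2f_n+1)$, so $\tfrac94 M^{-2}\sum_{n\in\SM}\mathrm{Var}(r_n) = \tfrac{1}{2M^2}\sum_{n\in\SM}(1-f_n)(2f_n+1)$. Summing over all $\binom{N}{M}$ sampling sets, each index $n$ appears in $\binom{N-1}{M-1}=\tfrac{M}{N}\binom{N}{M}$ of them, so $\sum_{\SM}\mathbb E_R[(\check F-\bar f_{\SM})^2] = \tfrac{1}{2M^2}\cdot\tfrac{M}{N}\binom{N}{M}\sum_{n\in\Set N}(1-f_n)(2f_n+1)$; dividing by $\binom{N}{M}$ yields $\sum_{n\in\Set N}\tfrac{(2f_n+1)(1-f_n)}{2MN}$, which is precisely \eqref{eqn:errorbound}. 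Since this holds for every $\V f_{\mathrm{all}}$, the max over $\Set S_{\mathrm{id}}(\V f_{\mathrm{all}})$ is attained with equality, matching \thref{lem:min}, so the protocol is optimal.

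I expect the main obstacle to be the second step: carefully verifying that the per-state error probability is exactly $\tfrac23(1-f_n)$, uniformly over \emph{all} admissible noise components $\M N$. This requires (i) checking that the sign exponents $\tfrac12\|\V K_n\|_1+\V K_n\cdot\V t$ in \eqref{eqn:r_n-xy} convert the $\pm(-1)^{\|\V k\|_1/2}$ factor in \eqref{eqn:gGHZBloch} into a clean $+1$ contribution for the target $\M G^s_{\V t}$ while the $z$-branch handles the $\V\sigma_{Iz}$ terms, and (ii) invoking \thref{lem:Paulivalue} plus $\mathrm{Tr}(\M G^s_{\V t}\M N)=0$ and $\mathrm{Tr}(\M N)=1$ to kill the noise contribution regardless of $\M N$ — this is where the design choice \eqref{eqn:obM} is really being used. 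The combinatorial averaging over even-weight strings and the bookkeeping between the two branches $A_n\in\{0,1\}$ are the delicate points; the rest is the routine Bernoulli-variance and double-counting computation sketched above. I would relegate the detailed verification of (i)–(ii) to the appendix, citing \thref{lem:unbiased} for the parts already proven there.
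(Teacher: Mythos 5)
Your proposal is correct and follows essentially the same route as the paper's proof: compute $\Pr[R_n=1]=\tfrac{2}{3}(1-f_n)$ per sampled state (the paper does this by repeating the derivations from the proof of \thref{lem:unbiased}), use independence of the $R_n$ to get $\mathbb{V}[\check F\mid\SM]=\sum_{n\in\SM}\tfrac{(2f_n+1)(1-f_n)}{2M^2}$, invoke unbiasedness so that the mean squared error equals this variance, average over sampling sets via the $\binom{N-1}{M-1}$ counting, and match the lower bound of \thref{lem:min}. The bookkeeping details you flag as the main obstacle are exactly the content already established in Appendix~\ref{pf_lem:unbiased}, so deferring to that proof, as you suggest, is what the paper does.
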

\begin{proof}
Please refer to Appendix~\ref{pf_lem:opt-id}.
\end{proof}

\begin{Thm}[Optimal estimation protocol]\thlabel{thm:opt}
Protocol~\ref{alg:fidelityest} is optimal for $\Ps$-\ref{prob:1-m}.
\end{Thm}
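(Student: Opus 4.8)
The plan is to assemble the statement from the two pillars already in place: the within-model optimality of Protocol~\ref{alg:fidelityest} (\thref{lem:unbiased} and \thref{lem:opt-id}) and the model-reduction result \thref{thm:gen-opt-id}. Write $\Set O^*$ for the measurement operation realized by Protocol~\ref{alg:fidelityest}, namely the randomized local $z$- and $xy$-basis Pauli measurements of Steps~1--2. First I would note that \thref{lem:unbiased} and \thref{lem:opt-id} certify, respectively, that the estimator \eqref{eqn:checkF} built on $\Set O^*$ is unbiased for $\Ps$-\ref{prob:id2-m} and attains the lower bound of \thref{lem:min}; moreover, since the recorded flags $r_n$ are independent Bernoulli variables with $\mathbb{E}[r_n]=\tfrac{2}{3}(1-f_n)$ (the latter being the content of \thref{lem:unbiased}), the mean squared error of Protocol~\ref{alg:fidelityest} depends on $\V{\rho}_{\mathrm{all}}$ only through the fidelity profile $\V{f}_{\mathrm{all}}$, so $\Set O^*$ is in fact optimal for the instance of $\Ps$-\ref{prob:id2-m} attached to \emph{every} ${\Set S}_{\mathrm{id}}(\V{f}_{\mathrm{all}})$. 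This is precisely the hypothesis of \thref{thm:gen-opt-id}, so applying that theorem yields that the composite operation $\hat{\Set O}^*=\Set O^*\circ\Set T$ is optimal for $\Ps$-\ref{prob:1-m}, with $\Set T$ the probabilistic multirotation of \thref{def:T}.

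It then remains to show that Protocol~\ref{alg:fidelityest} itself, without the prepended $\Set T$, is optimal, i.e., that $\Set O^*$ and $\hat{\Set O}^*$ incur the same estimation error on every $\V{\rho}_{\mathrm{all}}\in{\Set S}_{\mathrm{arb}}$. The key claim I would establish is that every \ac{POVM} effect produced by $\Set O^*$ on a single \ac{GHZ} copy is fixed under conjugation by each elementary multirotation $\M{T}^{(\V{k})}$ with $|\V{k}|\in\{0,2\}$. In the $A_n=0$ branch the two effects are $|\V{t}\rangle\langle\V{t}|+|\tilde{\V{t}}\rangle\langle\tilde{\V{t}}|$ and its complement, and $\M{T}^{(\V{k})}$, being a tensor product of $\V{\sigma}_x$'s and $\V{\sigma}_y$'s, sends each computational-basis vector to a phase times its bitwise complement, so it merely swaps $|\V{t}\rangle\langle\V{t}|$ with $|\tilde{\V{t}}\rangle\langle\tilde{\V{t}}|$ and leaves both effects invariant. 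In the $A_n=1$ branch the effects are the spectral projectors $\tfrac{1}{2}(\mathbb{I}_{2^L}\pm\V{\sigma}_{xy}^{(\V{K}_n)})$, and $\V{\sigma}_{xy}^{(\V{K}_n)}$ commutes with $\M{T}^{(\V{k})}$ because the two operators anticommute in an even number of tensor factors — that number equals the Hamming distance between $\V{K}_n$ and $\V{k}$, which is even since both strings have even $\ell_1$-norm. Since each $\M{T}^{(\V{k})}$ is Hermitian and unitary, invariance of an effect $\M{E}$ yields $\mathrm{Tr}[\M{E}\,\Set{T}^{(\V{k})}(\V{\rho})]=\mathrm{Tr}[\M{E}\V{\rho}]$; and because $\Set T$ is a composition over all qubits of such Kraus-pair steps while both $\Set O^*$ and $\Set T$ factorize across the $N$ copies, the outcome distribution produced by $\Set O^*$ is unchanged by $\Set T$. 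As $\Set T$ is moreover fidelity-preserving (\thref{def:T}), the average fidelity of the unsampled copies — and hence the estimation target of $\Ps$-\ref{prob:1-m} — is unchanged too, so the objective of $\Ps$-\ref{prob:1-m} takes the same value on $\Set O^*$ as on $\hat{\Set O}^*$; hence Protocol~\ref{alg:fidelityest} is optimal for $\Ps$-\ref{prob:1-m}.

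The main obstacle is the invariance verification in the second step: one must check, over both measurement branches and all admissible $\V{K}_n$, that every effect of Protocol~\ref{alg:fidelityest} is genuinely fixed by every $\M{T}^{(\V{k})}$ with $|\V{k}|\in\{0,2\}$, and it is precisely the even-$\ell_1$-norm restrictions — on $\V{K}_n$ in Step~1 of the protocol and on $\V{k}$ in \thref{def:T} — that keep the count of anticommuting tensor factors even in all cases. Everything else is a direct assembly of \thref{lem:unbiased}, \thref{lem:opt-id}, and \thref{thm:gen-opt-id}.
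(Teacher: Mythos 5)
Your proposal is correct and follows essentially the same route as the paper's proof: first invoke \thref{lem:opt-id} together with \thref{thm:gen-opt-id} to certify $\hat{\Set O}^*=\Set O^*\circ\Set T$, and then verify that every measurement operator of Protocol~\ref{alg:fidelityest} is fixed under conjugation by each $\M{T}^{(\V{k})}$ with $|\V{k}|\in\{0,2\}$ (the paper checks $\M{M}_0$, $\M{M}_1$, and $\M{M}^{(\V{b})}_r$ directly, using exactly your even-Hamming-distance argument for the $A_n=1$ branch), so that $\Set O^*\circ\Set T=\Set O^*$. The only cosmetic difference is that in the $A_n=0$ branch you argue via the swap of $|\V{t}\rangle\langle\V{t}|$ and $|\tilde{\V{t}}\rangle\langle\tilde{\V{t}}|$ in the computational basis while the paper works in the \ac{GHZ} basis via \eqref{eqn:TkGj1j2}; both are valid.
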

\begin{proof}
Please refer to Appendix~\ref{pf_thm:opt}.
\end{proof}

\section{Application example}

\label{sec:sim}

This section evaluates the mean squared error of the proposed protocol and compares it with existing fidelity estimation methods.
The noise model is described first.

\noindent {\bf Noise model:}
The simulated platform consists of three physically separated \ac{NV} centers.
Within these \ac{NV} centers, the electronic spin associated with the defect can be utilized as a qubit. 
The three \ac{NV} centers aim to generate tripartite \ac{GHZ} states.

Multiple types of noise are simulated in accordance to the experiment results in \cite{AwsHanWraZho:J18,Humetal:J18}.
In particular, the evolution of dark counts is modeled as a Markov chain.
The dark count state is defined as $D_n\in{0,1}$, 
where $D_n = 0$ indicates no dark count in the $n$-th round of \ac{GHZ} state generation, 
and $D_n = 1$ indicates the presence of a dark count.
The transition matrix of the dark count states is defined as follows:
\begin{align}
\M{T} = \begin{bmatrix}1-\delta P_{\mathrm d}&\delta (1-P_{\mathrm d})\\ \delta P_{\mathrm d} &1-\delta (1-P_{\mathrm d})\end{bmatrix}
\label{eqn:MarkovT}
\end{align}
where $P_{\mathrm d}\in[0,1]$ represents the average probability of a dark count occurring in a single round of \ac{GHZ} state generation.
This parameter influences the noise intensity.
The parameter
\begin{align}\delta\in\Big(0,\min(\frac{1}{P_{\mathrm d}},\frac{1}{1-P_{\mathrm d}})\Big]
\end{align}
determines the correlation between two consecutive dark count states, i.e.,
\begin{align}
\mathrm{cor}(D_n,D_{n+1})= 1-\delta.
\end{align}
In particular, when $\delta = 1$, the dark count states are independent random variables.~\hfill~\QEDB

\begin{figure} \centering
\psfrag{0.0}[Br][Br][0.7]{0\hspace{0.2mm}}
\psfrag{1}[Br][Br][0.7]{1\hspace{0.2mm}}
\psfrag{2}[Br][Br][0.7]{2\hspace{0.2mm}}
\psfrag{3}[Br][Br][0.7]{3\hspace{0.2mm}}
\psfrag{4}[Br][Br][0.7]{4\hspace{0.2mm}}
\psfrag{5}[Br][Br][0.7]{5\hspace{0.2mm}}
\psfrag{6}[Br][Br][0.7]{6\hspace{0.2mm}}
\psfrag{7}[Br][Br][0.7]{7\hspace{0.2mm}}
\psfrag{8}[Br][Br][0.7]{8\hspace{0.2mm}}
\psfrag{10}[Br][Br][0.7]{10\hspace{0.2mm}}
\psfrag{12}[Br][Br][0.7]{12\hspace{0.2mm}}
\psfrag{14}[Br][Br][0.7]{14\hspace{0.2mm}}
\psfrag{16}[Br][Br][0.7]{16\hspace{0.2mm}}
\psfrag{10-4}[bl][bl][0.7]{$\times 10^{-4}$}
\psfrag{0.5}[tt][tt][0.7]{0.5}
\psfrag{0.4}[tt][tt][0.7]{0.4}
\psfrag{0.3}[tt][tt][0.7]{0.3}
\psfrag{0.2}[tt][tt][0.7]{0.2}
\psfrag{0.1}[tt][tt][0.7]{0.1}
\psfrag{0}[tt][tt][0.7]{0}
\psfrag{-0.5}[tt][tt][0.7]{-0.5}
\psfrag{A}[bc][bc][0.8]{A}
\psfrag{B}[bc][bc][0.8]{B}
\psfrag{p}[tc][tc][0.8]{$P_{\mathrm{d}}$}
\psfrag{1-delta}[tc][tc][0.8]{$1-\delta$}
\psfrag{varf}[tc][tc][0.8]{$\mathrm{Var}(\check{F})$}
\psfrag{proposed              1}[cl][cl][0.73]{\hspace{-1mm}\parbox[l]{16mm}{\linespread{0.6}\selectfont  proposed}}
\psfrag{Guh                                 1}[cl][cl][0.73]{\hspace{-1mm}\parbox[l]{23mm}{\linespread{0.6}\selectfont  protocol in  \cite{GuhLuGaoPan:J07}}}
\psfrag{Fla                             1}[cl][cl][0.73]{\hspace{-1mm}\parbox[l]{23mm}{\linespread{0.6}\selectfont  protocol in  \cite{FlaLiu:J11}}}
\hspace*{-6.2mm}\includegraphics[scale=0.5]{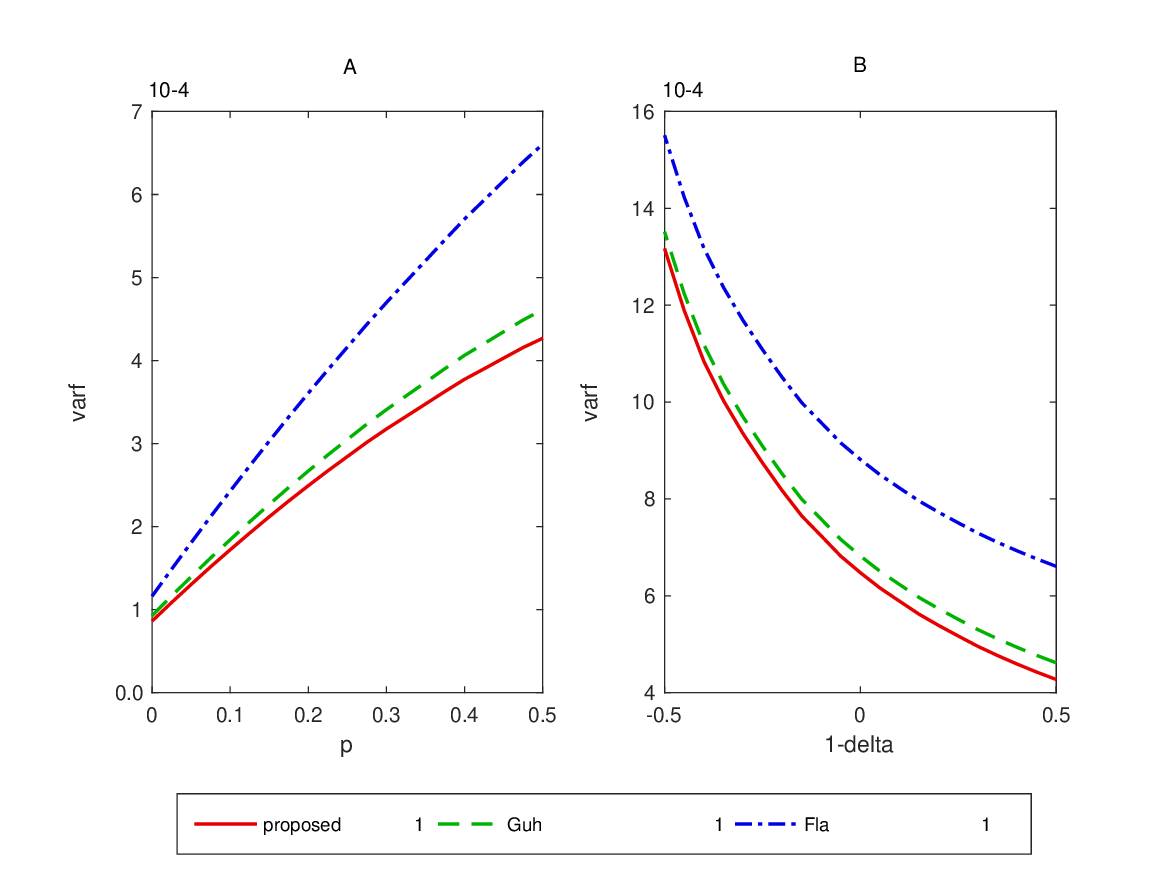}
\caption {\small The variance of the estimated fidelity provided by different protocols. 
In this figure, $N=2000$ and $M=1000$. In subfigure A, $\delta=0.5$, while in subfigure B, $P_{\mathrm{d}}=0.5$.
}
\label{fig_Measure_Var}
\end{figure}

In Fig.~\ref{fig_Measure_Var}, we plot the mean squared error $\mathrm{Var}(\check{F})$ of the proposed protocol, 
along with those of two existing fidelity estimation protocols, i.e., protocols described in  \cite{FlaLiu:J11,GuhLuGaoPan:J07}, 
as functions of the dark count parameters $P_{\mathrm d}$ and $\delta$.

Fig.~\ref{fig_Measure_Var}A illustrates that as the noise intensity $P_{\mathrm d}$ increases, the mean squared error of all protocols increases.
This occurs because stronger noise introduces greater uncertainty in the measurement results.
Consequently, the estimation error of the sampled \ac{GHZ} states, as defined in \eqref{eqn:error-measure}, increases.

Fig.~\ref{fig_Measure_Var}B illustrates that as the noise correlation $1-\delta$ between two consecutive rounds shifts from negative to positive, the mean squared error of all protocols decreases.
This occurs because positive correlation results in a smaller difference between the fidelity of sampled and unsampled \ac{GHZ} states.
As a result, the sampling error, as defined in \eqref{eqn:error-sample}, decreases.

The protocol in \cite{FlaLiu:J11} utilizes only local Pauli measurements, thereby ensuring good implementability.
However, as shown in Fig.~\ref{fig_Measure_Var}, it exhibits a relatively high mean squared error.
The protocol in \cite{GuhLuGaoPan:J07} achieves a lower mean squared error.
However, this protocol necessitates local measurements of the form $\cos(\theta)\V{\sigma}_x + \sin(\theta)\V{\sigma}_y$,
where 
\begin{align*}\theta\in\Big\{\frac{\pi}{L},\frac{2\pi}{L},\ldots,\frac{\pi(L-1)}{L}\Big\}.
\end{align*}
Notably, since the value of $\theta$ must be adjusted by $\frac{\pi}{L}$ for each measurement operator, 
these operators become increasingly difficult to implement as the number of qubits $L$ increases.
Similar to the protocol in \cite{FlaLiu:J11}, the proposed protocol employs only local Pauli measurements,
and Fig.~\ref{fig_Measure_Var} demonstrates that it yields the minimum mean squared error across all cases.
Thus, the proposed protocol simultaneously achieves both good implementability and high estimation accuracy.

\section{Conclusion}
\label{sec:summary} 
Fidelity estimation of entangled states shared by remote nodes is a critical component of quantum networks.
In this work, we propose a fidelity estimation protocol for \ac{GHZ} states under arbitrary noise conditions.
By utilizing the Bloch representation of \ac{GHZ} states and applying tools from combinatorial theory, we design a set of local measurement operators to estimate the fidelity of \ac{GHZ} states.
These operators achieve optimal measurement efficiency within the constraint of separable operations.
Numerical tests confirm the advantages of the proposed estimation protocol.

\section*{Acknowledgement}
 
The author would like to thank Stephanie Wehner, Bas Dirkse, and Sophie Hermans at TUDelft, The Netherlands, for refining the structure of the paper and stimulating discussions on the treatment of non-\ac{i.i.d.} noise.

This work is supported, in part, by the EU Flagship on Quantum Technologies and the Quantum Internet Alliance (funded by the EU Horizon 2020 Research and Innovation Programme, Grant No. 820445).

\appendix

\section{Proof of \thref{lem:sp2arb}}
\label{pf_lem:sp2arb}
We will first demonstrate that the operation $\Set T$ transforms any $L$-qubit state into a state that is diagonal in the basis of \ac{GHZ} states 
while preserving the original state's fidelity.
Then by leveraging this property of $\Set T$, we construct a Gedankenexperiment to establish the equivalence of $\Ps$-\ref{prob:1-m} and $\Ps$-\ref{prob:sp-m}.

The \ac{GHZ} states in set 
\begin{align}
{\Set G}_L=\{|\M{G}^\tau_{\V{j}}\rangle, \V{j}\in \Set J, \tau \{+,-\} \}
\end{align}
form a basis of $L$-qubit states. 
Therefore, the density matrix of any $L$-qubit state can be written as
\begin{align}
\V{\rho} = \sum_{k\in \Set K} c_{k} |\phi_{k}\rangle\langle \psi_{k} |,
\label{eqn:Rhon}
\end{align}
where $|\phi_{k}\rangle, |\psi_{k}\rangle\in {\Set G}_L$ and $\Set K$ is the set of all terms with non-zero coefficient $c_{k}\in\mathbb{C}$.
Given \eqref{eqn:Rhon}, the fidelity of $\V{\rho}$ is given by
\begin{align}
f = \sum_{k\in \Set K}&\Big( c_{k}
\mathds{1}\big(|\phi_{k}\rangle= |\psi_{k}\rangle= |G^{s}_{\V{t}}\rangle \big) \Big), 
\label{eqn:fid}
\end{align}
where $\mathds{1}(\cdot)$ is the indicator function.

Because $\V{\sigma}_x|0\rangle =|1\rangle$, $\V{\sigma}_x|1\rangle =|0\rangle$, $\V{\sigma}_y|0\rangle =\imath |1\rangle$, and $\V{\sigma}_y|1\rangle =-\imath |0\rangle$, 
for any state $|G^{\pm}_{\V{j}}\rangle \in {\Set G}_L$ and any operation $\M{T}^{(\V{k})}\in \Set{T}_{\mathrm{all}}$, we have that
\begin{subequations}
\begin{align}
\M{T}^{(\V{k})} |G^{\pm}_{\V{j}}\rangle 
&= \M{T}^{(\V{k})} \frac{1}{\sqrt{2}}\big(|\V{j}\rangle \pm  |\TVj\rangle\big)\\
&= \frac{1}{\sqrt{2}}\imath^{|\V{k}|} \big((-1)^{\V{k}\cdot\V{j}} |\TVj\rangle \pm  (-1)^{\V{k}\cdot \TVj} |\V{j}\rangle\big)\label{TkG-b}\\
&= \frac{1}{\sqrt{2}}\imath^{|\V{k}|} (-1)^{\V{k}\cdot\V{j}}\big( |\TVj\rangle \pm  |\V{j}\rangle\big)\label{TkG-c},
\end{align}\label{TkG}
\end{subequations}
where $\V{a} \cdot \V{b}$ denotes the inner product of strings $\V{a}$ and $\V{b}$.
Since
$
\V{k}\cdot\V{j} + \V{k}\cdot \TVj = |\V{k}|
$
and $|\V{k}|$ is even, it follows that
\begin{align}
(-1)^{\V{k}\cdot\V{j}} =  (-1)^{\V{k}\cdot \TVj}.\label{eqn:kj=kbarj}
\end{align}
Therefore, \eqref{TkG-c} holds.

From \eqref{TkG}, it follows that for any two states $|G^{ \tau_a}_{\V{j}_a}\rangle, |G^{\tau_b}_{\V{j}_b}\rangle \in  {\Set G}_L$ and any operation $\M{T}^{(\V{k})}\in \Set{T}_{\mathrm{all}}$,
\begin{align}
&\M{T}^{(\V{k})} |G^{\tau_a}_{\V{j}_a}\rangle\langle G^{\tau_b}_{\V{j}_b}| \M{T}^{(\V{k})\dag}\nonumber\\
&=\left\{
\begin{array}{ll} 
 {\color{white}-}(-1)^{\V{k}\cdot \V{j}_a + \V{k}\cdot \V{j}_b}|G^{\tau_a}_{\V{j}_a}\rangle\langle G^{\tau_b}_{\V{j}_b}| & \mbox{ if } \tau_a =\tau_b \vspace{1.5mm}\\
 -(-1)^{\V{k}\cdot \V{j}_a + \V{k}\cdot \V{j}_b}|G^{\tau_a}_{\V{j}_a}\rangle\langle G^{\tau_b}_{\V{j}_b}| & \mbox{ if } \tau_a \neq \tau_b 
 \end{array}
 \right..\label{eqn:TkGj1j2}
\end{align}

From \eqref{eqn:Tk} and \eqref{eqn:TkGj1j2}, it is easy to see that 
\begin{align}
&\Set{T}^{(\V{k})} \big(|G^{\tau_a}_{\V{j}_a}\rangle\langle G^{\tau_b}_{\V{j}_b}|\big)\nonumber\\
&=\frac{1}{2}\big(\mathbb{I}_{2^L} |G^{\tau_a}_{\V{j}_a}\rangle\langle G^{\tau_b}_{\V{j}_b}| \mathbb{I}_{2^L}^\dag +\M{T}^{(\V{k})} |G^{\tau_a}_{\V{j}_a}\rangle\langle G^{\tau_b}_{\V{j}_b}| \M{T}^{(\V{k})\dag}\big)\nonumber\\
&=\left\{
\begin{array}{ll@{\;}l}|G^{\tau_a}_{\V{j}_a}\rangle\langle G^{\tau_b}_{\V{j}_b}| &\mbox{if }\hspace{1.1mm} (-1)^{\V{k}\cdot \V{j}_a + \V{k}\cdot \V{j}_b} =1 &\&\; \tau_a = \tau_b\\
&\mbox{or } (-1)^{\V{k}\cdot \V{j}_a + \V{k}\cdot \V{j}_b} =-1 &\&\; \tau_a \neq \tau_b\vspace{1mm}\\
\M{0} & \mbox{otherwise} \end{array}\right..\label{eqn:T-all}
\end{align}

From \eqref{eqn:T-all}, it is evident that for all diagonal terms in the basis of \ac{GHZ} states, because $\V{j}_a= \V{j}_b$ and $\tau_a = \tau_b$,
\begin{align}
\Set{T}^{(\V{k})} \big(\M{G}^{\pm}_{\V{j}}\big)
&=\M{G}^{\pm}_{\V{j}}, \quad \forall\, |{G}^{\pm}_{\V{j}}\rangle\in {\Set G}_L,\,  \M{T}^{(\V{k})}\in \Set{T}_{\mathrm{all}}.\label{eqn:T-diag}
\end{align}
From \eqref{eqn:fid} and \eqref{eqn:T-diag} , operation $\Set{T}$ does not change the fidelity of any $L$-qubit state $\V{\rho}$.

Recall the definition of the binary strings $\V{j}$
\begin{align}
\V{j} \in {\Set J} = \{0\}\! \frown\! \{0,1\}^{L-1}\label{eqn:Vj-a},
\end{align}
where $\frown $ denotes string concatenation.
For two different strings $\V{j}_a,\V{j}_b \in {\Set J}$, denote 
\begin{align}
\V{j}_{a,b} =  \{1\}\! \frown \!\{0\}^{l-2}  \! \frown \! \{1\}\! \frown \! \{0\}^{L-l},\label{eqn:Jab}
\end{align}
where $l$ is a bit at which the elements of $\V{j}_a$ and $\V{j}_b$ mismatch, i.e.,
\begin{align}
j_{a,l} \oplus j_{b,l} = 1,
\end{align}
in which $\oplus$ denotes the XOR operation.

For two distinct states $|G^{\tau_a}_{\V{j}_a}\rangle, |G^{\tau_b}_{\V{j}_b}\rangle \in  {\Set G}_L$,
either $\V{j}_a \neq \V{j}_b$ or $\tau_a \neq \tau_b$.
In this case, the following binary string can be defined 
\begin{align}
\V{k}_{a,b} = \left\{\begin{array}{ll}
\V{j}_{a,b}  & \mbox{if } \tau_a = \tau_b\\
\{0\}^L & \mbox{if } \tau_a \neq \tau_b
\end{array}
\right..\label{eqn:kab}
\end{align}
According to \eqref{eqn:T_all}, $\Set{T}^{(\V{k}_{a,b})}\in {\Set T}_{\mathrm{all}}$.

From \eqref{eqn:kab}, we have
\begin{align}
\V{k}_{a,b}\cdot \V{j}_a + \V{k}_{a,b}\cdot \V{j}_b =
\left\{\begin{array}{ll}
1 &\mbox{if } \tau_a = \tau_b\\
0&\mbox{if } \tau_a \neq \tau_b
\end{array}\right..\label{eqn:kj}
\end{align}
By substituting \eqref{eqn:kj} into \eqref{eqn:T-all}, it can be obtained that
\begin{align}
\Set{T}^{(\V{k}_{a,b})} \big(|G^{\tau_a}_{\V{j}_a}\rangle\langle G^{\tau_b}_{\V{j}_b}|\big) = \M{0}, \; \forall\; |G^{\tau_a}_{\V{j}_a}\rangle \neq |G^{\tau_b}_{\V{j}_b}\rangle \in {\Set G}_L.\label{eqn:T-offdiag}
\end{align}

The operation $\Set T$ sequentially applies each  $\Set{T}^{(\V{k})}\in {\Set T}_{\mathrm{all}}$.
Therefore, \eqref{eqn:T-diag} and \eqref{eqn:T-offdiag} indicate that
\begin{align}
\Set{T} \big(|G^{\tau_a}_{\V{j}_a}\rangle\langle G^{\tau_b}_{\V{j}_b}|\big)=\left\{
\begin{array}{ll}|G^{\tau_a}_{\V{j}_a}\rangle\langle G^{\tau_b}_{\V{j}_b}| &\mbox{if }|G^{\tau_a}_{\V{j}_a}\rangle = |G^{\tau_b}_{\V{j}_b}\rangle
\vspace{1mm}\\
\M{0} & \mbox{otherwise} \end{array}\right..\label{eqn:Tall-all}
\end{align}

Next, consider a Gedankenexperiment  where the operation $\Set T$ is performed on the sampled states $n\in \SM$ before the measurement $\Set O$.
Subsequently, this rotation is applied to the unsampled states $n \in \Set N \backslash \SM$. 
According to \eqref{eqn:Tall-all}, the rotation $\Set T$ on the unsampled states neither affects the measurement outcome $\V{r}$ nor the average fidelity of the unsampled states $\bar{f}$.
In this context, the operation considered in the Gedankenexperiment is equivalent to the operation $\hat{\Set O} = \Set O \circ \Set T$, i.e., both the rotation and the measurement are performed only on the sampled states.

Furthermore, since the rotation $\Set T$ on the unsampled states commutes with the measurement  $\Set O$, the operation $\hat{\Set O}$ is also equivalent to an operation
where the rotation $\Set T$ is applied to all noisy \ac{GHZ} states prior to measurement $\Set O$.
Given these two equivalences, to analyze the performance of the operation $\hat{\Set O}$, it is equivalent to consider the scenario in which the rotation $\Set T$ is applied to all noisy \ac{GHZ} states before measurement.

An arbitrary $N\times L$ qubits state can be expressed in the \ac{GHZ} states basis as follows.
\begin{align}
\V{\rho}_{\mathrm{all}} = \sum_{k\in \Set K} c_k\otimes^{n\in\Set N} |\phi_{k,n}\rangle\langle \psi_{k,n} |,
\label{eqn:Rhoall}
\end{align}
where $|\phi_{k,n}\rangle, |\psi_{k,n}\rangle$ are \ac{GHZ} states, i.e., $|\phi_{k,n}\rangle, |\psi_{k,n}\rangle \in {\Set G}_L$, and $\Set K$ is the set of all terms with non-zero coefficients $c_k\in\mathbb{C}$.
Substituting \eqref{eqn:Tall-all} into \eqref{eqn:Rhoall}, we find that after applying the rotation $\Set T$ to the qubits, their joint state is
\begin{align}
\Set T(\V{\rho}_{\mathrm{all}}) = \sum_{k\in \Set K} c_k\otimes^{n\in\Set N}\mathds{1}\big(|\phi_{k,n}\rangle= |\psi_{k,n}\rangle \big)|\phi_{k,n}\rangle\langle \psi_{k,n} |.
\label{eqn:Rhoall-R}
\end{align}
Equation \eqref{eqn:Rhoall-R} demonstrates that after applying $\Set T$ on all $N\times L$ qubits, their state itheir state becomes a composition of product states, making it separable across different $L$-qubit groups, i.e.,
\begin{align}
\Set T(\V{\rho}_{\mathrm{all}}) \in \Set{S}_{\mathrm{sp}},\quad \forall \V{\rho}_{\mathrm{all}}. \label{eqn:Trhosp}
\end{align}

The function $e(\Set S, \Set O, \Set D)$ is defined as the worst-case squared estimation error for states in the set $\Set S$, given the measurement operator $\Set O$ and the estimator $\Set D$, i.e.
\begin{align}
e(\Set S, \Set O, \Set D)
= \max_{\V{\rho}_{\mathrm{all}}\in{\Set S}}\sum_{\SM}\mathbb{E}_{R}\big[(\check{F}-\bar{F})^2\big|\SM, \Set O, \Set D\big].
\label{eqn:fSOD}
\end{align}
Then the objective functions of $\Ps$-\ref{prob:1-m} and $\Ps$-\ref{prob:sp-m} can be rewritten as
\begin{align*}
\underset{\Set O, \Set D}{\mathrm{minimize}}\, e({\Set S}_{\mathrm{arb}}, \Set O, \Set D), \mbox{ and } 
\underset{\Set O, \Set D}{\mathrm{minimize}}\, e(\Set S_{\mathrm{sp}}, \Set O, \Set D),
\end{align*}
respectively.
 
Let $\Set D^*$ denote the estimator that, in conjunction with measurement operation $\Set O^*$, achieves the minimum mean squared error.
Then the following inequalities involving the objective functions of $\Ps$-\ref{prob:1-m} and $\Ps$-\ref{prob:sp-m} can be derived:
\begin{subequations}
\begin{align}
\underset{\Set O, \Set D}{\mathrm{minimize}}\, &e({\Set S}_{\mathrm{arb}}, \Set O, \Set D)\nonumber\\
\leq &e({\Set S}_{\mathrm{arb}}, \hat{\Set O}^*, \Set D^*) \label{eqn:arb-le-sp-a}\\
 \leq &e(\Set S_{\mathrm{sp}}, {\Set O}^*, \Set D^*) \label{eqn:arb-le-sp-b}\\
= & \underset{\Set O, \Set D}{\mathrm{minimize}}\,e(\Set S_{\mathrm{sp}}, \Set O, \Set D), 
\label{eqn:arb-le-sp-c}
\end{align} \label{eqn:arb-le-sp}
\end{subequations}
where \eqref{eqn:arb-le-sp-b} holds because of \eqref{eqn:Trhosp}, i.e., $\Set T$ converts an arbitrary state to a separable state,
 \eqref{eqn:arb-le-sp-c} holds because ${\Set O}^*, \Set D^*$ is the optimal solution to $\Ps$-\ref{prob:sp-m}.

Moreover, because ${\Set S}_{\mathrm{arb}} \supset \Set S_{\mathrm{sp}}$,
\begin{align}
\underset{\Set O, \Set D}{\mathrm{minimize}}\,e({\Set S}_{\mathrm{arb}}, \Set O, \Set D) 
\geq \underset{\Set O, \Set D}{\mathrm{minimize}}\, e(\Set S_{\mathrm{sp}}, \Set O, \Set D).
\label{eqn:arb-ge-sp}
\end{align}
According to \eqref{eqn:arb-le-sp} and \eqref{eqn:arb-ge-sp}, one can get that
\begin{align}
\begin{split}
\underset{\Set O, \Set D}{\mathrm{minimize}}\, e({\Set S}_{\mathrm{arb}}, \Set O, \Set D)
&= e({\Set S}_{\mathrm{arb}}, \hat{\Set O}^*, \Set D^*)
\\
&= \underset{\Set O, \Set D}{\mathrm{minimize}}\, e(\Set S_{\mathrm{sp}}, \Set O, \Set D).
\end{split}
\label{eqn:arb-eq-sp}
\end{align}
\eqref{eqn:arb-eq-sp} shows that $\hat{\Set O}^*$ is optimal for $\Ps$-\ref{prob:1-m}.
This completes the proof of \thref{lem:sp2arb}.

\section{Proof of \thref{lem:id2sp}}
\label{pf_lem:id2sp}
Denote $\check{f}(r)$ as the estimated fidelity given the measurement outcome $r$, 
and denote $\bar{f}(\V{\rho}_{\mathrm{all}},\Set{M}, r)$ as the average fidelity of the 
unsampled states given the state $\V{\rho}_{\mathrm{all}}$, sample set $\SM$ and measurement outcome $r$.
According to \eqref{eqn:Rall-sep}, the following inequalities hold for any estimation protocol $\{\Set O$, $\Set D\}$ and any state $\V{\rho}_{\mathrm{all}}\in \Set{S}_{\mathrm{sp}}$. 
\begin{subequations} 
\begin{align}
&\sum_{\SM}\mathbb{E}_{R}\big[(\check{F}-\bar{F})^2\big]\nonumber \\
&=  \sum_{\SM }\sum_{r}\mathrm{Pr}(r|\SM)
\big(\check{f}(r) -\bar{f}(\V{\rho}_{\mathrm{all}},\Set{M}, r)\big)^2\label{eqn:sp2id-a}\\
&= \sum_{\SM }\sum_{r }\mathrm{Pr}(r|\SM)\big(\check{f}(r) -\nonumber\\
&\hspace{13mm} \sum_{k }
\mathrm{Pr}(\V{\rho}^{(k)}_{\mathrm{all}}| \SM, r) \bar{f}(\V{\rho}^{(k)}_{\mathrm{all}},\Set{M}, r) \big)^2 \label{eqn:sp2id-b}\\
&\leq  \sum_{\SM }\sum_{r }\sum_{k}
\mathrm{Pr}(r|\SM)\mathrm{Pr}(\V{\rho}^{(k)}_{\mathrm{all}}| \SM, r)\nonumber\\
&\hspace{28.5mm} 
\big(\check{f}(r) -\bar{f}(\V{\rho}^{(k)}_{\mathrm{all}},\Set{M}, r) \big)^2 \label{eqn:sp2id-c}\\
&= \sum_{k }p_k\sum_{\SM }\sum_{r }
\mathrm{Pr}( r|\V{\rho}^{(k)}_{\mathrm{all}},\SM)\nonumber\\
&\hspace{28.5mm} 
\big(\check{f}(r) - \bar{f}(\V{\rho}^{(k)}_{\mathrm{all}},\Set{M}, r)\big)^2 \label{eqn:sp2id-d}\\
&= \sum_{k }p_k\sum_{\SM}\mathbb{E}_{R}\big[(\check{F}-\bar{f})^2\big|\V{\rho}^{(k)}_{\mathrm{all}}\big], \label{eqn:sp2id-e}\\
&\leq  \max_{\V{\rho}_{\mathrm{all}}\in{\Set S}_{\mathrm{id}}}\sum_{\SM}\mathbb{E}_{R}\big[(\check{F}-\bar{f})^2\big], \label{eqn:sp2id-f}
\end{align}\label{eqn:sp2id}
\end{subequations}
where \eqref{eqn:sp2id-c} holds because $x^2$ is a convex function and
\begin{align}
\begin{split}
\mathrm{Pr}(\V{\rho}^{(k)}_{\mathrm{all}}| \SM, r)&\geq 0, \quad \forall k \in \Set K,\\
\sum_{k \in \Set K}\mathrm{Pr}(\V{\rho}^{(k)}_{\mathrm{all}}| \SM, r) &=1,
\end{split}
\end{align} 
\eqref{eqn:sp2id-d} holds because according to Bayes' theorem
\begin{align}
\mathrm{Pr}(r|\SM)\mathrm{Pr}(\V{\rho}^{(k)}_{\mathrm{all}}| \SM, r) &=\mathrm{Pr}(\V{\rho}^{(k)}_{\mathrm{all}}|\SM)\mathrm{Pr}(r|\V{\rho}^{(k)}_{\mathrm{all}},\SM)
\nonumber\\
&=p_k \mathrm{Pr}(\SM, r|\V{\rho}^{(k)}_{\mathrm{all}}),
\end{align}
and \eqref{eqn:sp2id-f} holds because $\V{\rho}^{(k)}_{\mathrm{all}} \in \Set{S}_{\mathrm{id}}$, $p_k\geq 0$,  $\forall k \in \Set K$, and $\sum_{k}p_k=1$.

Recall the worst-case squared estimation error function $e(\Set S, \Set O, \Set D)$ defined in \eqref{eqn:fSOD}, and
denote $\Set D^*$ as the estimator that, together with the measurement operation $\Set O^*$, minimizes the mean squared error.
Then, the following inequalities involving the objective functions of $\Ps$-\ref{prob:sp-m} and $\Ps$-\ref{prob:id-m} can be derived:
\begin{subequations}
\begin{align}
\underset{\Set O, \Set D}{\mathrm{minimize}}\, &e(\Set S_{\mathrm{sp}}, \Set O, \Set D)\nonumber\\
 \leq &e(\Set S_{\mathrm{sp}}, {\Set O}^*, \Set D^*) \label{eqn:sp-le-id-a}\\
  \leq &e(\Set S_{\mathrm{id}}, {\Set O}^*, \Set D^*) \label{eqn:sp-le-id-b}\\
= & \underset{\Set O, \Set D}{\mathrm{minimize}}\,e(\Set S_{\mathrm{id}}, \Set O, \Set D), 
\label{eqn:sp-le-id-c}
\end{align} \label{eqn:sp-le-id}
\end{subequations}
where \eqref{eqn:sp-le-id-b} is true because \eqref{eqn:sp2id}  holds for every estimation protocol $\{\Set O$, $\Set D\}$ and every state $\V{\rho}_{\mathrm{all}}\in \Set{S}_{\mathrm{sp}}$, and
\eqref{eqn:sp-le-id-c} holds because $\{\Set O^*, \Set D^*\}$ is the optimal solution to $\Ps$-\ref{prob:id-m}.

Moreover, because $\Set S_{\mathrm{sp}} \supset \Set S_{\mathrm{id}}$,
\begin{align}
\underset{\Set O, \Set D}{\mathrm{minimize}}\,e(\Set S_{\mathrm{sp}}, \Set O, \Set D) 
\geq \underset{\Set O, \Set D}{\mathrm{minimize}}\, e(\Set S_{\mathrm{id}}, \Set O, \Set D).
\label{eqn:sp-ge-id}
\end{align}

According to \eqref{eqn:sp-le-id} and \eqref{eqn:sp-ge-id},
\begin{align}
\begin{split}
\underset{\Set O, \Set D}{\mathrm{minimize}}\, e(\Set S_{\mathrm{sp}}, \Set O, \Set D)
&= e(\Set S_{\mathrm{sp}}, \Set O^*, \Set D^*)
\\
&= \underset{\Set O, \Set D}{\mathrm{minimize}}\, e(\Set S_{\mathrm{id}}, \Set O, \Set D)
\end{split}
\label{eqn:sp-eq-id}
\end{align}
The first equality in \eqref{eqn:sp-eq-id} shows that ${\Set O}^*$ is optimal for $\Ps$-\ref{prob:sp-m}.
This completes the proof of \thref{lem:id2sp}.

\section{Proof of \thref{lem:id-sample}}
\label{pf_lem:id-sample}
Because the sample set $\SM$ is drawn completely at random, 
\begin{align}
\sum_{\SM}\bar{f} - \bar{f}_{\Set{M}} =0,\quad \forall \V{\rho}_{\mathrm{all}} \in \Set{S}_{\mathrm{id}}.
\label{eqn:f-f-0}
\end{align}
According to \eqref{eqn:f-f-0}, \eqref{eqn:unbiased_id} is equivalent to
\begin{align}
\sum_{\SM}\mathbb{E}_{R}\big[\check{F}-\bar{f}_{\Set{M}}\big]=0 ,\quad \forall \V{\rho}_{\mathrm{all}}\in{\Set S}_{\mathrm{id}}.\label{eqn:unbiased_id_v2}
\end{align}
\eqref{eqn:unbiased_id_v2} indicates that any estimation protocol $\{\Set O, \Set D\}$ that satisfies \eqref{eqn:unbiased_id2} for all $ {\Set S}_{\mathrm{id}}(\V{f}_{\mathrm{all}})\subset {\Set S}_{\mathrm{id}}$ and all $\SM \subset \Set N$ also satisfies \eqref{eqn:unbiased_id}. 
In the following, we prove the converse statement by contradiction.

Suppose there is an estimation protocol $\{\Set O, \Set D\}$ that satisfies \eqref{eqn:unbiased_id_v2} but does not satisfy \eqref{eqn:unbiased_id2}. 
In this case, there must exist some $\V{\rho}_{\SM}$ such that
\begin{align}
\mathbb{E}_{R}\big[\check{F}-\bar{f}_{\SM}\big| \V{\rho}_{\SM}\big]\neq 0.\label{eqn:biased-id2}
\end{align}
Let $m\in \Set M$ be the index of one of the sampled noisy \ac{GHZ} states, and 
let $l$ be the number of sampled noisy \ac{GHZ} states that have the same state as $\V{\rho}_m$, i.e.,
\begin{align}
\sum_{n \in \SM}  \mathds{1}(\V{\rho}_n = \V{\rho}_m) = l ,
\end{align}
and let $\V{\rho}^{(l)}$ be a state such that \eqref{eqn:biased-id2} holds when $\V{\rho}_{\SM}=\V{\rho}^{(l)}$.
Then because $m\in \Set M$, it is clear that $l\geq 1$.

Next, we construct an $N\times L$ qubits state:
\begin{align}
\V{\rho}_{\mathrm{all}} = \V{\rho}^{(l)} \otimes (\otimes^{N-M} \V{\rho}_m).\label{eqn:rho-N+1}
\end{align}
When sampling the state in \eqref{eqn:rho-N+1}, the sampled set $\SM$ contains at least $l$ number of noisy \ac{GHZ} states whose density matrix is $\V{\rho}_m$.
If this number equals $l$, then from \eqref{eqn:rho-N+1}, the density matrix of the sampled states must be $\V{\rho}^{(l)}$.
Consequently, for \eqref{eqn:unbiased_id_v2} to hold despite \eqref{eqn:biased-id2}, there must exist a sample set $\tSM$
such that 
\begin{align}
\begin{split}
\mathbb{E}_{R}\big[\check{F}-\bar{f}_{\tSM}\big|\V{\rho}_{\tSM} \big]&\neq 0,\quad{\mbox{and}}\\
\sum_{n \in \tSM}  \mathds{1}(\V{\rho}_n = \V{\rho}_m) &=\tilde{l} \geq l+1.
\end{split}\label{eqn:induction_result}
\end{align}

Denote the state $\V{\rho}_{\tSM}$ by $\V{\rho}^{(\tilde{l})}$.
Next, we construct another $N\times L$ qubits state:
\begin{align}
\V{\rho}_{\mathrm{all}} = \V{\rho}^{(\tilde{l})} \otimes (\otimes^{N-M} \V{\rho}_m)\label{eqn:rho-N+2}
\end{align}
and repeat the analysis from \eqref{eqn:rho-N+1} to \eqref{eqn:induction_result}.  
Because $\tilde{l} \geq l+1$, by repeating this analysis at most $M-l$ times, it can be concluded that state 
$\V{\rho}^{(M)}=\otimes^M \V{\rho}_m$ does not satisfy the unbiased constraint, i.e., 
\begin{align}
\mathbb{E}_{R}\big[\check{F}-\bar{f}_{\tSM}\big|\V{\rho}^{(M)} \big]&\neq 0.\label{eqn:biased_id2_v2}
\end{align}
According to \eqref{eqn:biased_id2_v2}, \eqref{eqn:unbiased_id_v2} does not hold for the state $\V{\rho}_{\mathrm{all}}=\otimes^N \V{\rho}_m$.
This contradiction demonstrates that any estimation protocol $\{\Set O, \Set D\}$ satisfying \eqref{eqn:unbiased_id} must also satisfy \eqref{eqn:unbiased_id2}.

Using the strengthened unbiased constraint \eqref{eqn:unbiased_id2}, the estimation error in $\Ps$-\ref{prob:id-m}
can be decomposed as follows:
\begin{subequations}
\begin{align}
&\sum_{\SM}\mathbb{E}_{R}\big[(\check{F}-\bar{f})^2\big]\nonumber \\
&=
\sum_{\SM}\mathbb{E}_{R}\big[(\check{F}-\bar{f}_{\SM} +\bar{f}_{\SM} -\bar{f})^2\big]\label{eqn:decompose-a}\\
&=
\sum_{\SM}\mathbb{E}_{R}\big[(\check{F}-\bar{f}_{\SM})^2\big]
+2 \sum_{\SM}\mathbb{E}_{R}\big[(\check{F}-\bar{f}_{\SM})(\bar{f}_{\SM} -\bar{f})\big]\nonumber\\
 &\hspace{4mm}+\sum_{\SM}\mathbb{E}_{R}\big[(\bar{f}_{\SM} -\bar{f})^2\big]\label{eqn:decompose-b}\\
 &=
\sum_{\SM}\mathbb{E}_{R}\big[(\check{F}-\bar{f}_{\SM})^2\big]
+2 \sum_{\SM}\mathbb{E}_{R}\big[\check{F}-\bar{f}_{\SM}\big| \V{\rho}_{\SM}\big](\bar{f}_{\SM} -\bar{f})\nonumber\\
 &\hspace{4mm}+\sum_{\SM}(\bar{f}_{\SM} -\bar{f})^2\label{eqn:decompose-c}\\
 &=\sum_{\SM}\mathbb{E}_{R}\big[(\check{F}-\bar{f}_{\SM})^2\big]+\sum_{\SM}(\bar{f}_{\SM} -\bar{f})^2,\label{eqn:decompose-d}
\end{align}\label{eqn:decompose}
\end{subequations} 
where \eqref{eqn:decompose-c} holds because with independent noise, 
the average fidelity of sampled noisy \ac{GHZ} states, $\bar{f}_{\SM}$, and that of unsampled noisy \ac{GHZ} states, $\bar{f}$, are deterministic given each sample set $\SM$;
\eqref{eqn:decompose-d} is true due to \eqref{eqn:unbiased_id2}.

\eqref{eqn:decompose} decomposes the estimation error into two parts, i.e., the estimation error \ac{w.r.t.} the average fidelity of the sampled noisy \ac{GHZ} states
\begin{align}
\sum_{\SM}\mathbb{E}_{R}\big[(\check{F}-\bar{f}_{\SM} )^2\big],\label{eqn:error-1}
\end{align}
and the deviation between the average fidelity of sampled and unsampled noisy \ac{GHZ} states
\begin{align}
\sum_{\SM}(\bar{f}_{\SM} -\bar{f})^2.\label{eqn:error-2}
\end{align}
The value of \eqref{eqn:error-2} is determined by the fidelity composition of all noisy \ac{GHZ} states, i.e., 
\begin{align}
\V{f}_{\mathrm{all}} =\{f_n,n\in \Set N\},
\end{align}
and not affected by the estimation protocol $\{\Set O$, $\Set D\}$.
Therefore, for every fidelity composition $\V{f}_{\mathrm{all}}$, minimizing the estimation error
\begin{align}
 \underset{\Set O, \Set D}{\mathrm{minimize}}
 \max_{\V{\rho}_{\mathrm{all}}\in{\Set S}_{\mathrm{id}}(\V{f}_{\mathrm{all}})}\sum_{\SM}\mathbb{E}_{R}\big[(\check{F}-\bar{f})^2\big]
 \label{eqn:minvar_id_f}
 \end{align}
 is equivalent to 
\begin{align}
 \underset{\Set O, \Set D}{\mathrm{minimize}}
 \max_{\V{\rho}_{\mathrm{all}}\in{\Set S}_{\mathrm{id}}(\V{f}_{\mathrm{all}})}\sum_{\SM}\mathbb{E}_{R}\big[(\check{F}-\bar{f}_{\SM})^2\big].
 \end{align} 
Consequently, if a protocol $\{\Set O^*$, $\Set D^*\}$ is optimal in $\Ps$-\ref{prob:id2-m} for all ${\Set S}_{\mathrm{id}}(\V{f}_{\mathrm{all}})\subset{\Set S}_{\mathrm{id}}$, 
it is also the optimal solution to \eqref{eqn:minvar_id_f} for all ${\Set S}_{\mathrm{id}}(\V{f}_{\mathrm{all}})\subset{\Set S}_{\mathrm{id}}$.
Furthermore, because
\begin{align}
{\Set S}_{\mathrm{id}} = \bigcup_{\V{f}_{\mathrm{all}}}{\Set S}_{\mathrm{id}}(\V{f}_{\mathrm{all}}),
\end{align}
protocol $\{\Set O^*$, $\Set D^*\}$ minimizes
\begin{align}
 \max_{\V{\rho}_{\mathrm{all}}\in{\Set S}_{\mathrm{id}}}\sum_{\SM}\mathbb{E}_{R}\big[(\check{F}-\bar{f}\,)^2\big].
 \end{align}
Therefore, $\{\Set O^*$, $\Set D^*\}$ is also an optimal solution to $\Ps$-\ref{prob:id-m}.
This completes the proof of \thref{lem:id-sample}.

\section{Proof of \thref{lem:min}}
\label{pf_lem:min}
Recall the target state $|G^s_{\V{t}}\rangle$. 
Define the following $L$-qubit states:
\begin{align}
\V{\rho}_{n}= \sum_{\tau_{a},\tau_{b}\in \atop \{+,-\}}
\sum_{\V{j}_a,\V{j}_b\in\atop\{0\V{t}_{\sstrike{1}}, 0\tilde{\V{t}}_{\sstrike{1}}\}}
c^{\tau_{a},\tau_{b}}_{n,\V{j}_a,\V{j}_b} |G^{\tau_{a}}_{\V{j}_a} \rangle\langle G^{\tau_{b}}_{\V{j}_b} |, \; n \in {\Set N}\label{eqn:rho_n}
\end{align}
where $\V{t}_{\strike{1}}$ is the substring of $\V{t}$ with the first element removed, $\tilde{\V{t}}_{\sstrike{1}}$ is the bitwise complement of $\V{t}_{\strike{1}}$. In particular,
\begin{align}
\V{t} = 0\V{t}_{\strike{1}} , \quad \tilde{\V{t}} = 1 \tilde{\V{t}}_{\strike{1}}. \label{eqn:Vjt}
\end{align} 
The coefficients $\{c^{\tau_{a},\tau_{b}}_{n,\V{j}_a,\V{j}_b}\}$ are set such that $\V{\rho}_{n}\succcurlyeq \M{0} $,  
$\mathrm{Tr}(\V{\rho}_{n}) =1$, and
\begin{align}
c^{s,s}_{n,\V{t},\V{t}} = f_n.\label{eqn:fni}
\end{align}
Further define the following set of $N\times L$-qubit states:
\begin{align}
\tilde{\Set S}_{\mathrm{id}}(\V{f}_{\mathrm{all}}) = \big\{\V{\rho}_{\mathrm{all}} = \otimes^{n\in\Set N} \V{\rho}_{n}\big\}\label{eqn:setTS}
\end{align}

From the definition above, it is clear that
\begin{align}
{\Set S}_{\mathrm{id}}(\V{f}_{\mathrm{all}}) \supset \tilde{\Set S}_{\mathrm{id}}(\V{f}_{\mathrm{all}}).
\end{align}
Therefore,
\begin{align}
\underset{\Set O, \Set D}{\mathrm{minimize}}\,e({\Set S}_{\mathrm{id}}(\V{f}_{\mathrm{all}}), \Set O, \Set D) 
\geq \underset{\Set O, \Set D}{\mathrm{minimize}}\, e( \tilde{\Set S}_{\mathrm{id}}(\V{f}_{\mathrm{all}}), \Set O, \Set D),
\label{eqn:id-ge-tid}
\end{align}
where the function of the worst-case estimation error, $e({\Set S}, {\Set O}, {\Set D})$, is defined in \eqref{eqn:fSOD}. 

Define a new estimation error minimization problem:
\begin{Problem}[Error minimization for special states]\label{prob:id3}
Same as $\Ps$-\ref{prob:id2-m}, except that the set of states ${\Set S}_{\mathrm{id}}(\V{f}_{\mathrm{all}})$ is replaced by $\tilde{\Set S}_{\mathrm{id}}(\V{f}_{\mathrm{all}})$. 
\end{Problem}
Then \eqref{eqn:id-ge-tid} shows that to prove the lemma, 
it is sufficient to demonstrate that the objective function of $\Ps$-\ref{prob:id3}, when divided by $N \choose M$, is no less than \eqref{eqn:errorbound}, i.e.,
\begin{align*}\sum_{n\in\Set N}\frac{(2f_n+1)(1-f_n)}{2MN}.
\end{align*}

Next, define the following projectors 
\begin{align}
\M{P}_{\strike{1}} & =  |0\rangle\langle\V{t}_{\strike{1}}| + |1 \rangle\langle\tilde{\V{t}}_{\strike{1}}|, \\
\M{P} &= \mathbb{I}_2 \otimes \M{P}_{\strike{1}} \nonumber \\
&= \big(|0\rangle\langle 0 | + |1\rangle\langle 1 |\big) \otimes \big(|0\rangle\langle\V{t}_{\strike{1}}| + |1 \rangle\langle\tilde{\V{t}}_{\strike{1}}|\big)\nonumber  \\
&=  |00\rangle\langle0\V{t}_{\strike{1}}| + |11 \rangle\langle1\tilde{\V{t}}_{\strike{1}}| + |01\rangle\langle0\tilde{\V{t}}_{\strike{1}}| + |10 \rangle\langle1\V{t}_{\strike{1}} |,
\label{eqn:MP}\\
\M{Q} &= \M{P}^\dag \M{P} \nonumber\\
          &= |0\V{t}_{\strike{1}}\rangle\langle0\V{t}_{\strike{1}}| + |1\tilde{\V{t}}_{\strike{1}} \rangle\langle 1\tilde{\V{t}}_{\strike{1}} | 
          + |0\tilde{\V{t}}_{\strike{1}}\rangle\langle0\tilde{\V{t}}_{\strike{1}}| + |1\V{t}_{\strike{1}} \rangle\langle 1\V{t}_{\strike{1}}|.
          \label{eqn:MQ}
\end{align}
From \eqref{eqn:MQ}, for noisy \ac{GHZ} states defined by \eqref{eqn:rho_n}, 
\begin{align}
\V{\rho}_n = \M{Q}\V{\rho}_n \M{Q}^\dag = \M{P}^\dag \M{P} \V{\rho}_n \M{P}^\dag \M{P}.\label{eqn:QrhoQ}
\end{align}
Consequently, in $\Ps$-\ref{prob:id3}, the probability of getting measurment outcome $r$ is given by
\begin{subequations}
\begin{align}
&\mathrm{Tr}\big[\otimes^{n\in \Set M} \V{\rho}_{n}\M{M}_{r}\big] \nonumber \\
&= \mathrm{Tr}\Big[\otimes^{n\in \Set M} \M{P}^\dag \M{P} \V{\rho}_{n} \M{P}^\dag \M{P}\sum_{k} \otimes^{l\in \Set L}\M{M}^{(l)}_{r,n,k} \Big] 
\label{eqn:traceeq-a}\\
&= \mathrm{Tr}\Big[\otimes^{n\in \Set M} \M{P} \V{\rho}_{n} \M{P}^\dag\sum_{k} \M{P} \otimes^{l\in \Set L}\M{M}^{(l)}_{r,n,k}\M{P}^\dag \Big]
\label{eqn:traceeq-b}\\
&= \mathrm{Tr}\Big[\otimes^{n\in \Set M} \hat{\V{\rho}}_{n}  \sum_{k} \M{M}^{(1)}_{r,n,k}\otimes \big(\M{P}_{\strike{1}} \otimes^{l\in \Set L\backslash\{1\}}\M{M}^{(l)}_{r,n,k}\M{P}_{\strike{1}}^\dag\big) \Big]
\label{eqn:traceeq-c}\\
&= \mathrm{Tr}\Big[\otimes^{n\in \Set M} \hat{\V{\rho}}_{n}  \sum_{k}\otimes^{n\in \Set M} \big( \M{M}^{(1)}_{r,n,k}\otimes \M{M}^{(\strike{1})}_{r,n,k}\big)\Big]
\label{eqn:traceeq-d}\\
&= \mathrm{Tr}\big[\otimes^{n\in \Set M} \hat{\V{\rho}}_{n} \hat{\M{M}}_r\big],
\end{align}\label{eqn:traceeq}
\end{subequations}
where 
\begin{align}
\hspace*{-2mm}\hat{\V{\rho}}_{n} &= \M{P} \V{\rho}_{n} \M{P}^\dag \nonumber \\
&= \sum_{\tau_{a},\tau_{b}\in \atop \{+,-\}}
\sum_{\V{j}_a,\V{j}_b\in\atop\{00, 01\}}
c^{\tau_{a},\tau_{b}}_{n,\V{j}_a,\V{j}_b} |G^{\tau_{a}}_{\V{j}_a} \rangle\langle G^{\tau_{b}}_{\V{j}_b} |,\label{eqn:hatrho}\\
\M{M}^{(\strike{1})}_{r,n,k} &= \M{P}_{\strike{1}} \otimes^{l\in \Set L\backslash\{1\}}\M{M}^{(l)}_{r,n,k}\M{P}_{\strike{1}}^\dag,\\
\hat{\M{M}}_r &= \sum_{k}\otimes^{n\in \Set M} \big( \M{M}^{(1)}_{r,n,k}\otimes \M{M}^{(\strike{1})}_{r,n,k}\big),
\label{eqn:hatMr}
 \end{align}
\eqref{eqn:traceeq-a} is obtained according to \eqref{eqn:MXsep} and \eqref{eqn:QrhoQ},
\eqref{eqn:traceeq-b} holds because that $\mathrm{Tr}[\M{A}\M{B}]=\mathrm{Tr}[\M{B}\M{A}]$,
and \eqref{eqn:traceeq-c} is obtained by substituting \eqref{eqn:MP}.

Consider a scenario in which the number of nodes $L=2$, the target state is $|G^s_{00} \rangle$, and the density matrix of all qubits
\begin{align}
\hat{\V{\rho}}_{\mathrm{all}} = \otimes^{n\in\Set N} \hat{\V{\rho}}_{n}.\label{eqn:hatrhoall}
\end{align}
Then we have that
\begin{itemize}
\item According to \eqref{eqn:fni} and \eqref{eqn:hatrho}, $\hat{\V{\rho}}_{\mathrm{n}}$ can be arbitrary two-qubit state with fidelity $f_n$.
\item According to \eqref{eqn:hatMr}, measurement operators $\hat{\M{M}}_r$, $r\in \Set R$ are separable.
\item According to \eqref{eqn:traceeq}, 
applying operation ${\Set O}= \{\M{M}_r\}$ to the state $\V{\rho}_{\mathrm{all}}$ given in \eqref{eqn:setTS}
and applying operation ${\Set O}= \{\hat{\M{M}}_r\}$ to the state $\hat{\V{\rho}}_{\mathrm{all}}$
result in measurement outcomes with the same distribution.
\end{itemize}
Due to the three facts above, $\Ps$-\ref{prob:id3} is equivalent to the following problem.
\begin{Problem}[Error minimization for sampled noisy Bell states]\label{prob:id2-n2}
Same as $\Ps$-\ref{prob:id2-m}, except that the number of nodes $L=2$. 
\end{Problem}
We have proved in \cite[Lem.~B.3]{Ruan:J23} that the objective function of $\Ps$-\ref{prob:id2-n2}, when divided by $N \choose M$,
is the lower bounded by \eqref{eqn:errorbound}.
This proves \thref{lem:min}. 

\section{Proof of \thref{lem:Bloch}}
\label{pf_lem:Bloch}
We first derive the Bloch representation for  $\M{G}^{\pm}_{00\ldots 0}$, then generalize it to all \ac{GHZ} states defined in~\eqref{eqn:gGHZ}.
\begin{align}
\M{G}^{\pm}_{00\ldots 0}&=\frac{1}{2} \big(|0\rangle \langle 0|^{\otimes L}+ |1\rangle \langle 1|^{\otimes L} \pm |0\rangle \langle 1|^{\otimes L} \pm |1\rangle \langle 0|^{\otimes L}\big)
 \nonumber\\
 &= \frac{1}{2} \bigg( \bigg(\frac{\mathbb{I}_2+ \V{\sigma}_z}{2}\bigg)^{\hspace{-1mm}\otimes L}+ \bigg(\frac{\mathbb{I}_2- \V{\sigma_z}}{2}\bigg)^{\hspace{-1mm}\otimes L} \nonumber\\
 &\hspace{5mm} \pm \bigg(\frac{\V{\sigma}_x+  \imath \V{\sigma}_y}{2}\bigg)^{\hspace{-1mm}\otimes L} \pm \bigg(\frac{\V{\sigma}_x-  \imath \V{\sigma}_y}{2}\bigg)^{\hspace{-1mm}\otimes L}\bigg)
\nonumber\\
&= \frac{1}{2^L}\sum_{\substack{ \V{k} \in \{0,1\}^L \\ ||\V{k}||_1\scriptsize \mbox{ even}}}
\Big(\Motimes_{l\in\Set L} \V{\sigma}_{Iz}^{(k_l)} \pm (-1)^{\frac{||\V{k}||_1}{2}}\Motimes_{l\in\Set L} \V{\sigma}_{xy}^{(k_l)}\Big),
 \label{eqn:G000}
\end{align}
where ${\Set L} = \{1,2,\ldots,L\}$, and $\V{\sigma}_{Iz}^{(k_l)}$, $\V{\sigma}_{xy}^{(k_l)}$ are defined in \eqref{eqn:sigmaxyz}.

Denote
\begin{align}
\V{\sigma}_{Ix}^{(j)} = 
\left\{\begin{array}{ll} 
\mathbb{I}_2 & \mbox{ if: } j = 0\\
\V{\sigma}_x & \mbox{ if: } j = 1
\end{array}\right.,
\end{align}
Then with \eqref{eqn:G000}, the Bloch representation of \ac{GHZ} states can be derived as follows.
\begin{align}
\M{G}^{\pm}_{\V{j}}
&= \Motimes_{l\in\Set L} \V{\sigma}_{Ix}^{(j_l)} \M{G}^{\pm}_{00\ldots 0} \Motimes_{l\in\Set L} \V{\sigma}_{Ix}^{(j_l)}\nonumber\\
&= \frac{1}{2^L}\hspace{-1mm}\sum_{\substack{ \V{k} \in \{0,1\}^L \\ ||\V{k}||_1\scriptsize \mbox{ even}}}\hspace{-1mm}
\Big(\Motimes_{l\in\Set L} \V{\sigma}_{Ix}^{(j_l)}\V{\sigma}_{Iz}^{(k_l)}\V{\sigma}_{Ix}^{(j_l)}\nonumber\\ 
&\hspace{24mm}\pm (-1)^{\frac{||\V{k}||_1}{2}}\Motimes_{l\in\Set L} \V{\sigma}_{Ix}^{(j_l)}\V{\sigma}_{xy}^{(k_l)}\V{\sigma}_{Ix}^{(j_l)}\Big)\nonumber\\
& = \frac{1}{2^L}\hspace{-1mm}\sum_{\substack{ \V{k} \in \{0,1\}^L \\ ||\V{k}||_1\scriptsize \mbox{ even}}}\hspace{-1mm}
\Big(\Motimes_{l\in\Set L} (-1)^{j_lk_l}\V{\sigma}_{Iz}^{(k_l)}\nonumber\\ 
&\hspace{24mm}\pm (-1)^{\frac{||\V{k}||_1}{2}}\Motimes_{l\in\Set L}(-1)^{j_lk_l} \V{\sigma}_{xy}^{(k_l)}\Big)\nonumber\\
& = \frac{1}{2^L}\hspace{-1mm}\sum_{\substack{ \V{k} \in \{0,1\}^L \\ ||\V{k}||_1\scriptsize \mbox{ even}}}\hspace{-1mm}
(-1)^{\V{k}\cdot\V{j}}
\Big(\Motimes_{l\in\Set L} \V{\sigma}_{Iz}^{(k_l)} \pm (-1)^{\frac{||\V{k}||_1}{2}}\Motimes_{l\in\Set L} \V{\sigma}_{xy}^{(k_l)}\Big).
\label{eqn:Gj1-L}
\end{align}
With \eqref{eqn:Gj1-L}, the lemma is proved.

\section{Proof of \thref{lem:Paulivalue}}
\label{pf_lem:Paulivalue}
First, we prove two propositions essential for establishing the lemma.
\begin{Prop}
Let $\Set{L}$ be a set with $L$ elements. 
Then $\Set{L}$ has $2^{L-1}$ subsets with an even number of elements and an equal number of subsets with an odd number of elements.
\thlabel{prop:subsetnum}
\end{Prop}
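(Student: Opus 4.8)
The plan is to prove the identity $\sum_{k\text{ even}}\binom{L}{k}=\sum_{k\text{ odd}}\binom{L}{k}=2^{L-1}$ via the binomial theorem, since the number of $m$-element subsets of an $L$-element set is $\binom{L}{m}$. I would also remark on an alternative bijective argument, which is arguably cleaner and generalizes the intuition used later in the paper.

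\textbf{Key steps.} First I would record the two evaluations of the binomial expansion: $(1+1)^L=\sum_{k=0}^{L}\binom{L}{k}=2^L$ and $(1-1)^L=\sum_{k=0}^{L}(-1)^k\binom{L}{k}=0$, the latter using $L\geq 1$ (the case $L=0$ being vacuous for the statement as applied, or trivially handled separately). Adding these two equations makes the odd-index terms cancel and doubles the even-index terms, giving $2\sum_{k\text{ even}}\binom{L}{k}=2^L$, hence $\sum_{k\text{ even}}\binom{L}{k}=2^{L-1}$; subtracting instead isolates the odd-index terms and yields $\sum_{k\text{ odd}}\binom{L}{k}=2^{L-1}$. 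Since every subset of $\Set L$ has either an even or an odd number of elements and the two counts sum to $2^L=|2^{\Set L}|$, this is consistent and complete.

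\textbf{Alternative (bijective) argument.} Fix any element $x\in\Set L$ (using $L\geq 1$) and consider the map $S\mapsto S\,\triangle\,\{x\}$ on subsets of $\Set L$, i.e.\ remove $x$ from $S$ if $x\in S$ and adjoin $x$ otherwise. This map is an involution, so a bijection, and it changes the parity of $|S|$ by exactly one. Hence it restricts to a bijection between the family of even-cardinality subsets and the family of odd-cardinality subsets, so these two families have equal size; since together they exhaust the $2^L$ subsets, each has size $2^{L-1}$.

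\textbf{Main obstacle.} There is no real obstacle here: this is a classical counting fact, and the only care needed is to note that the argument requires $L\geq 1$ (for $L=0$ there are no odd-cardinality subsets, so the statement is applied only for $L\geq 1$ in the sums of \thref{lem:Paulivalue}) and to keep the bookkeeping of which terms cancel straight. I would present the binomial-theorem proof as the main line and relegate the involution argument to a one-sentence remark.
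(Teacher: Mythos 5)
Your proposal is correct, and both of your arguments establish the claim, but neither is quite the route the paper takes. The paper fixes an element $l\in\Set{L}$ and exhibits a bijection between the even-cardinality subsets of $\Set{L}$ and \emph{all} subsets of $\Set{L}\backslash\{l\}$: each $\tilde{\Set S}\subseteq\Set{L}\backslash\{l\}$ extends in exactly one way (adjoin $l$ or not, according to the parity of $|\tilde{\Set S}|$) to an even-cardinality subset of $\Set{L}$, giving the count $2^{L-1}$ directly. Your main line instead evaluates $(1+1)^L$ and $(1-1)^L$ and adds or subtracts; your alternative involution $S\mapsto S\,\triangle\,\{x\}$ pairs even subsets with odd ones and then divides $2^L$ by two. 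All three are standard and valid for $L\geq 1$ (which holds here since $L\geq 2$), and your observation that the statement fails for $L=0$ is a fair caveat the paper does not bother with. The one practical advantage of the paper's particular bijection is that it is engineered to generalize verbatim to \thref{lem:subsetnum}, where one must count subsets $\Set S$ with ${\Set T}\subseteq{\Set S}\subseteq{\Set L}$ of prescribed parity: there the correspondence with subsets of ${\Set L}\backslash{\Set T}$ of the appropriate parity reduces the problem back to \thref{prop:subsetnum}, whereas the generating-function route would require rewriting the sum with the offset $|\Set T|$. Your approach is otherwise perfectly adequate for the proposition as stated.
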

\begin{proof}
Let $l$ be one of the elements in ${\Set L}$. Then, the set of subsets of ${\Set L}$ with an even number of elements is given by
\begin{align}
&\big\{\Set{S}: \Set{S}\subseteq \Set{L} , |\Set S| \mbox{ is even}\big\}\nonumber\\ 
&\quad= 
\big\{\tilde{\Set S}\cup \{l\}: \tilde{\Set S} \subseteq {\Set L}\backslash \{l\}, |\tilde{\Set S}|\mbox{ is odd}\big\}\nonumber\\
&\hspace{11.8mm}\cup
\big\{\tilde{\Set S}: \tilde{\Set S} \subseteq {\Set L}\backslash \{l\}, |\tilde{\Set S}|\mbox{ is even}\big\}.\label{eqn:set_evenodd}
\end{align}

From \eqref{eqn:set_evenodd},  there is a one-to-one correspondence between the subsets of ${\Set L}$ with even number of elements and the subsets of ${\Set L}\backslash \{l\}$. Since set ${\Set L}\backslash \{l\}$ has $2^{L-1}$ number of subsets, it can be obtained that
\begin{align}
\big|\big\{\Set{S}: \Set{S}\subseteq \Set{L} , |\Set S| \mbox{ is even}\big\}\big|=2^{L-1}.
\end{align}

Similarly, it can be proved that
\begin{align}
\big|\big\{\Set{S}: \Set{S}\subseteq \Set{L} , |\Set S| \mbox{ is odd}\big\}\big|=2^{L-1}.
\end{align}
This completes the proof of \thref{prop:subsetnum}.
\end{proof}

The following statement generalizes \thref{prop:subsetnum}.
\begin{Prop}
\thlabel{lem:subsetnum}
Let $\Set{L}$ be a set with $L$ elements. Set $\Set{T}$ is a strict subset of $\Set{L}$ with $T$ number of elements, i.e., 
\begin{align}
\Set{T}\subset\Set{L}, \quad |\Set T| = T \leq L-1.
\end{align}
Then the numbers of sets $\Set S$ satisfying
\begin{align}
{\Set T}\subseteq {\Set S} \subseteq{\Set L}, \;  |\Set S| \mbox{ is even},
\end{align}
or
\begin{align}
{\Set T}\subseteq {\Set S} \subseteq{\Set L}, \;  |\Set S| \mbox{ is odd}
\end{align}
are $2^{L-T-1}$ in both cases.
\end{Prop}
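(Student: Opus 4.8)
The plan is to reduce \thref{lem:subsetnum} to \thref{prop:subsetnum} by a bijection argument applied to the ``free'' part of the set. The key observation is that any $\Set S$ with ${\Set T}\subseteq {\Set S}\subseteq {\Set L}$ is uniquely determined by its intersection with the complement ${\Set L}\backslash {\Set T}$, which is an arbitrary subset of a set with $L-T$ elements. Writing $\Set S = {\Set T}\cup {\Set U}$ where ${\Set U}={\Set S}\backslash{\Set T}\subseteq {\Set L}\backslash{\Set T}$, we have $|\Set S| = T + |\Set U|$, so the parity of $|\Set S|$ is the parity of $|\Set U|$ if $T$ is even, and the opposite parity if $T$ is odd. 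In either case, counting the $\Set S$ of a prescribed parity amounts to counting the ${\Set U}\subseteq {\Set L}\backslash{\Set T}$ of a prescribed (possibly flipped) parity.

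First I would set up this correspondence explicitly: the map ${\Set U}\mapsto {\Set T}\cup {\Set U}$ is a bijection from the power set of ${\Set L}\backslash{\Set T}$ onto $\{\Set S: {\Set T}\subseteq{\Set S}\subseteq{\Set L}\}$, with inverse ${\Set S}\mapsto {\Set S}\backslash{\Set T}$. Next I would invoke the hypothesis $T\leq L-1$, which guarantees ${\Set L}\backslash{\Set T}$ is nonempty with $L-T\geq 1$ elements, so \thref{prop:subsetnum} applies to it: it has exactly $2^{(L-T)-1}=2^{L-T-1}$ subsets of even size and equally many of odd size. Finally I would translate back through the parity relation: if $T$ is even, the number of even-size $\Set S$ equals the number of even-size $\Set U$, namely $2^{L-T-1}$, and likewise for odd; if $T$ is odd, even-size $\Set S$ correspond to odd-size $\Set U$, again giving $2^{L-T-1}$, and symmetrically for odd-size $\Set S$. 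In all four cases the count is $2^{L-T-1}$, which is the claim.

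I do not anticipate a genuine obstacle here; the statement is an elementary counting fact and the only thing to be careful about is the bookkeeping of parities when $T$ is odd versus even, and the use of the hypothesis $T\leq L-1$ to ensure ${\Set L}\backslash{\Set T}$ is a valid nonempty set to which \thref{prop:subsetnum} can be applied (if $T=L$ the statement would be false, since then the only candidate is ${\Set S}={\Set L}$). An alternative, slightly slicker route would be to apply \thref{prop:subsetnum} directly with ${\Set L}$ replaced by ${\Set L}\backslash{\Set T}$ and then note that fixing the elements of ${\Set T}$ to be present in ${\Set S}$ shifts $|\Set S|$ by the constant $T$; I would present whichever phrasing is shortest given the surrounding notation.
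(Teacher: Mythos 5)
Your proposal is correct and follows essentially the same route as the paper's proof: the paper also writes each admissible $\Set S$ as $\tilde{\Set S}\cup\Set T$ with $\tilde{\Set S}\subseteq \Set L\backslash\Set T$, splits into the cases of $|\Set T|$ even or odd to track the parity shift, and then applies \thref{prop:subsetnum} to the $(L-T)$-element set $\Set L\backslash\Set T$ to obtain the count $2^{L-T-1}$ in every case.
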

\begin{proof}
If $|\Set T |$ is even, then
\begin{align} 
&\big\{{\Set S}: {\Set T}\subseteq {\Set S} \subseteq{\Set L},  |\Set S| \mbox{ is even}\big\}  \nonumber\\
&\qquad=\big\{\tilde{\Set S}\cup {\Set T}:  \tilde{\Set S}\subseteq{\Set L}\backslash {\Set T},   |\tilde{\Set S}| \mbox{ is even} \big\}.
\end{align}
In this case, from \thref{prop:subsetnum},
\begin{align} 
&\big|\big\{{\Set S}: {\Set T}\subseteq {\Set S} \subseteq{\Set L},  |\Set S| \mbox{ is even}\big\}\big| \nonumber\\
&\qquad= 
\big|\big\{\tilde{\Set S}:  \tilde{\Set S}\subseteq{\Set L}\backslash {\Set T},   |\tilde{\Set S}| \mbox{ is even} \big\}\big|
= 2^{L-T-1}.
\end{align}
Similarly, if $|\Set T |$ is odd, then
\begin{align} 
&\big\{{\Set S}: {\Set T}\subseteq {\Set S} \subseteq{\Set L},  |\Set S| \mbox{ is even}\big\} \nonumber\\
&\qquad= 
\big\{\tilde{\Set S}\cup {\Set T}:  \tilde{\Set S}\subseteq{\Set L}\backslash {\Set T},   |\tilde{\Set S}| \mbox{ is odd} \big\}.
\end{align}
Therefore, from \thref{prop:subsetnum},
\begin{align} 
&\big|\big\{{\Set S}: {\Set T}\subseteq {\Set S} \subseteq{\Set L},  |\Set S| \mbox{ is even}\big\}\big| \nonumber\\
&\qquad= \big|\big\{\tilde{\Set S}:  \tilde{\Set S}\subseteq{\Set L}\backslash {\Set T},   |\tilde{\Set S}| \mbox{ is odd} \big\}\big|
= 2^{L-T-1}.
\end{align}
In both cases, 
\begin{align} 
\big|\big\{{\Set S}: {\Set T}\subseteq {\Set S} \subseteq{\Set L},  |\Set S| \mbox{ is even}\big\}\big|  = 2^{L-T-1}.
\end{align}
By repeating the above analysis, one can also conclude that
\begin{align} 
&\big|\big\{{\Set S}: {\Set T}\subseteq {\Set S} \subseteq{\Set L},  |\Set S| \mbox{ is odd}\big\}\big|
= 2^{L-T-1}.
\end{align}
This completes the proof of \thref{lem:subsetnum}.
\end{proof}

According to \thref{prop:subsetnum}, the size of $\Set{L}_{\mathrm{even}}$ is $2^{L-1}$.
Therefore,  \eqref{eqn:obv_sumspm_w} and \eqref{eqn:obv_sums0_w} hold true when the following equations are satisfied.
\begin{align}
\pm (-1)^{\frac{||\V{k}||_1}{2} + \V{k}\cdot\V{j}}\mathrm{Tr}\big[\V{\sigma}_{xy}^{(\V{k})} \M{G}^{\pm}_{\V{j}}\big] =1,\label{eqn:obv_sumspm}\\
\forall\; ||\V{k}||_1 \mbox{ even}, \V{j} \in {\Set J},\mbox{ and}\nonumber\\
\sum_{\scriptsize \V{k} \in \{0,1\}^L \atop ||\V{k}||_1 \mbox{ even}}(-1)^{\frac{||\V{k}||_1}{2} + \V{k}\cdot\V{j}}\mathrm{Tr}\big[\V{\sigma}_{xy}^{(\V{k})} \M{G}^{\pm}_{\TVj}\big] =0,\label{eqn:obv_sums0}\\
\forall\; \TVj \in {\Set J}, \TVj \neq \V{j}.\nonumber
\end{align}
The following analysis proves \eqref{eqn:obv_sumspm} and \eqref{eqn:obv_sums0} sequentially.

From \eqref{eqn:gGHZBloch} and \eqref{eqn:sigmak-2},  the value of $\V{\sigma}_{xy}^{(\V{k})}$ operated on  $|G^{\pm}_{\V{j}}\rangle$ can be obtained as follows.
\begin{align}
&\mathrm{Tr}\big[\V{\sigma}_{xy}^{(\V{k})} \M{G}^{\pm}_{\V{j}}\big] \nonumber\\
&= \mathrm{Tr}\bigg[
\frac{1}{2^L}\sum_{\substack{ \V{k} \in \{0,1\}^L \\ ||\V{k}||_1\scriptsize \mbox{ even}}}
(-1)^{\V{k}\cdot\V{j}}
\Big(\Motimes_{l\in\Set L} \V{\sigma}_{xy}^{(k_l)} \V{\sigma}_{Iz}^{(k_l)} \nonumber\\
&
\hspace{20.1mm}\pm (-1)^{\frac{||\V{k}||_1}{2}}\Motimes_{l\in\Set L} \V{\sigma}_{xy}^{(k_l)} \V{\sigma}_{xy}^{(k_l)}\Big)
\bigg]\nonumber\\
&=\mathrm{Tr}\bigg[\pm \frac{1}{2^L} (-1)^{\frac{||\V{k}||_1}{2} + \V{k}\cdot\V{j}}\Motimes_{l\in\Set L} \V{\sigma}_{xy}^{(k_l)} \V{\sigma}_{xy}^{(k_l)}\Big)\bigg]\nonumber\\
&= \pm (-1)^{\frac{||\V{k}||_1}{2} + \V{k}\cdot\V{j}} \mathrm{Tr}\bigg(\frac{1}{2^L}\mathbb{I}_{2^{L}}\bigg)\nonumber\\
& = \pm (-1)^{\frac{||\V{k}||_1}{2} + \V{k}\cdot\V{j}} \in \{-1,1\}.\label{eqn:obvS0G}
\end{align}
From \eqref{eqn:obvS0G}, it is straightforward that
\begin{align}
\pm (-1)^{\frac{||\V{k}||_1}{2} + \V{k}\cdot\V{j}}\mathrm{Tr}\big[\V{\sigma}_{xy}^{(\V{k})} \M{G}^{\pm}_{\V{j}}\big] &= 1.
\end{align}
Therefore, \eqref{eqn:obv_sumspm} is proved. It can also be derived from \eqref{eqn:obvS0G}  that
\begin{align}
\sum_{\substack{ \V{k} \in \{0,1\}^L \\ ||\V{k}||_1\scriptsize \mbox{ even}}}\hspace{-1mm}
(-1)^{\frac{||\V{k}||_1}{2} + \V{k}\cdot\V{j}}\mathrm{Tr}\big[\V{\sigma}_{xy}^{(\V{k})} \M{G}^{\pm}_{\TVj}\big] & = 
\pm\hspace{-1mm}\sum_{\substack{ \V{k} \in \{0,1\}^L \\ ||\V{k}||_1\scriptsize \mbox{ even}}}\hspace{-1mm}(-1)^{\V{k}\cdot\V{j} +\V{k}\cdot\TVj}.
\label{eqn:obv_sums2}
\end{align}
From \eqref{eqn:obv_sums2}, equation \eqref{eqn:obv_sums0} holds true if 
\begin{align}
\sum_{\substack{ \V{k} \in \{0,1\}^L \\ ||\V{k}||_1\scriptsize \mbox{ even}}}(-1)^{\V{k}\cdot\V{j} +\V{k}\cdot\TVj}=0,\quad\forall\; \TVj \in {\Set J}, \TVj \neq \V{j}.\label{eqn:obv_sumseq0}
\end{align}
Denote $\Set{G}$ as the set of indexes $l$ with which $j_l$ and $\tilde{j}_l$ do not equal, i.e., 
\begin{align}
\Set{G}  =  \{l: l\in {\Set L}, j_l \neq \tilde{j}_l\},
\end{align}
and denote $G=|\Set{G}|$. Noticing that as $ \TVj \neq \V{j}$ and $\tilde{j}_1 = j_1 =0$,
\begin{align}
1 \leq G \leq L-1.
\end{align}
Further denote $\Set S$ as the support of $\V{k}$, i.e.,
\begin{align}
\Set S  =  \{l: l\in {\Set L}, k_l \neq 0\},
\end{align}
and $\Set T  =   \Set {S} \cap \Set{G}$, then it can be obtained that
\begin{subequations}
\begin{align}
&\sum_{\substack{ \V{k} \in \{0,1\}^L \\ ||\V{k}||_1\scriptsize \mbox{ even}}}(-1)^{\V{k}\cdot\V{j} +\V{k}\cdot\TVj}\nonumber\\
&=
\sum_{\substack{ \V{k} \in \{0,1\}^L \\ ||\V{k}||_1\scriptsize \mbox{ even}}}(-1)^{|\Set {S} \cap \Set{G}|}
=
\sum_{\substack{ \Set{S} \in \Set{L} \\ |\Set S| \scriptsize \mbox{ even}}}(-1)^{|\Set {S} \cap \Set{G}|}\nonumber\\
&=\sum_{\Set{T} \subseteq \Set{G}} (-1)^{|\Set {T}|} \big|\big\{\Set{S}: \Set{T}\subseteq \Set{S} \subseteq (\Set{L}\backslash\Set{G})\cup\Set{T},|\Set S| \mbox{ is even} \big\}\big|\nonumber\\
&= \sum_{\Set{T} \subseteq \Set{G}} (-1)^{|\Set {T}|} 2^{L-G+|\Set {T}|-|\Set {T}|-1}\label{eqn:Snumb}\\
&=2^{L-G-1}\Big(\big|\big\{{\Set T}: \Set{T} \subseteq \Set{G}, |\Set {T}| \mbox{ is even}  \big\} \big| \nonumber\\
&\hspace{15.5mm}-\big|\big\{\Set T: \Set{T} \subseteq \Set{G}, |\Set {T}| \mbox{ is odd}  \big\} \big|\Big)\nonumber\\
&=2^{L-G-1}(2^{G-1}-2^{G-1}) = 0,\label{eqn:Snumc}
\end{align}\label{eqn:Snum}
\end{subequations}
where 
\eqref{eqn:Snumb}, \eqref{eqn:Snumc} are obtained by applying \thref{lem:subsetnum}. With \eqref{eqn:Snum}, \eqref{eqn:obv_sumseq0} is proved. This completes the proof of \thref{lem:Paulivalue}.

\section{Proof of \thref{lem:unbiased}}
\label{pf_lem:unbiased}
Denote
\begin{align}
f^{\pm}_{\V{j}} = \frac{1}{M}\sum_{n\in \Set M} \mathrm{Tr}\big[\V{\rho}_n \M{G}^{\pm}_{\V{j}}\big], \quad \V{j} \in \Set{J}.
\end{align}
Without loss of generality, assume that the target state is $|G^+_{\V{t}}\rangle$.  
In this case,
\begin{align}
\bar{f}_{\SM} &= f^{+}_{\V{t}}.\label{eqn:fstar}
\end{align}
Furthermore, because the \ac{GHZ} states in set ${\Set G}_L$ form a basis of $\Set{H}^{2^L}$,
\begin{align}
\sum_{\V{j} \in {\Set J} \atop \tau \in \{+,-\} }\V{G}^{\tau}_{\V{j}} & = \mathbb{I}_{2^L}, \label{eqn:sumG}\\
\sum_{\V{j}\in {\Set J} \atop \tau \in \{+ , -\}} f^{\tau}_{\V{j}}  &=  \frac{1}{M}\sum_{n\in \SM}\mathrm{Tr}(\V{\rho}_n) = 1.\label{eqn:sumf1}
\end{align}

Consider the measurement operation in Protocol~\ref{alg:fidelityest}.
According to \eqref{eqn:r_n-z} and \eqref{eqn:sumG}, when $A_n=0$, the operators are given by
\begin{subequations}
\begin{align}
\M{M}_{0} &= |\V{t}\rangle\langle\V{t}| + |\tilde{\V t}\rangle\langle\tilde{\V t}| =  \V{G}^{+}_{\V{t}}  + \V{G}^{-}_{\V{t}},
\label{eqn:M0wA=0}\\
\M{M}_{1} &= \mathbb{I}_{2^L} - \M{M}^{(0)}_{0} \hspace{2mm}=\hspace{-1.8mm}
\sum_{\V{j} \in {\Set J} \backslash \{\V{t}\} \atop \tau \in \{+,-\} }\V{G}^{\tau}_{\V{j}} 
\label{eqn:M1wA=0}.
\end{align}\label{eqn:MwA=0}
\end{subequations}
According to \eqref{eqn:An} and \eqref{eqn:MwA=0}, 
\begin{subequations}
\begin{align}
&\mathbb{E}\bigg[\frac{\sum_{n\in \Set M} R_n \indicator(A_n = 0)}{M}\bigg|\bar{f}_{\SM}\bigg] \nonumber\\
&=\frac{1}{M} \sum_{n\in \Set M}\Pr\big[A_n = 0 \big]\mathbb{E}\big[R_n\big| A_n = 0,\bar{f}_{\SM} \big]\\
&=\frac{1}{3M} \sum_{n\in \Set M} \mathrm{Tr}\big[\V{\rho}_n \M{M}_{1} \big]\\
&= \frac{1}{3}\sum_{\V{j} \in {\Set J} \backslash \{\V{t}\} \atop \tau \in \{+,-\} }f^{\tau}_{\V{j}}.
\end{align}
\label{eqn:ER0}
\end{subequations}

According to \eqref{eqn:sigmak} and \eqref{eqn:r_n-xy}, when $A_n=1$ and $\V{B}_n = \V{b}$, the operators are given by
\begin{align}
\M{M}^{(\V{b})}_{r} &=  \frac{1}{2}\big(\mathbb{I}_{2^L} - (-1)^{\frac{||\V{b}||_1} 2 + \V{b} \cdot \V{t}+r} \V{\sigma}^{(\V{b})}_{xy}\big), \quad r\in\{0,1\}
\label{eqn:MwA=1},
\end{align}
In this case,
\begin{subequations}
\begin{align}
&\hspace{-4mm}\mathbb{E}\bigg[\frac{\sum_{n\in \Set M} R_n \indicator(A_n =1)}{M}\bigg]\nonumber \\
&= \frac{2}{3M} \sum_{n\in \Set M}\sum_{\V{k}\in \{0,1\}^L\atop ||\V{k}||_1\, \mathrm{ even}}\Pr\big[\V{B}_n = \V{b} \big]\mathbb{E}\big[R_n\big| A_n = 1, \V{B}_n = \V{b} \big]\nonumber\\
&= \frac{2}{3M2^{L}} \sum_{n\in \Set M}\sum_{\V{b}\in \{0,1\}^L\atop ||\V{b}||_1\, \mathrm{ even}} 
1- (-1)^{\frac{||\V{b}||_1} 2 + \V{b} \cdot \V{t}}
\mathrm{Tr}\big[\V{\rho}_n\V{\sigma}^{(\V{b})}_{xy}  \big] \label{eqn:ER1_a}\\
&= \frac{2}{3M2^{L}} \sum_{n\in \Set M}\sum_{\V{b}\in \{0,1\}^L\atop ||\V{b}||_1\, \mathrm{ even}} 
1- (-1)^{\frac{||\V{b}||_1} 2 + \V{b} \cdot \V{t}} \nonumber\\
&\hspace{16.5mm}\mathrm{Tr}\Big[ \sum_{\V{j}\in\Set J \atop \tau \in \{+,-\} }
\mathrm{Tr}[\V{\rho}_n\M{G}^\tau_{\V{j}}]\M{G}^\tau_{\V{j}}\V{\sigma}^{(\V{b})}_{xy} \Big]\label{eqn:ER1_b}\\
&=\frac{2}{3M2^{L}} \sum_{n\in \Set M}2^{L-1} -  \sum_{\V{j}\in\Set J \atop \tau \in \{+,-\} }
\mathrm{Tr}[\V{\rho}_n\M{G}^\tau_{\V{j}}] \nonumber\\
&\hspace{15mm}\sum_{\V{b}\in \{0,1\}^L\atop ||\V{k}||_1\, \mathrm{ even}}(-1)^{\frac{||\V{b}||_1} 2 + \V{b} \cdot \V{t}}
\mathrm{Tr}[\V{\sigma}^{(\V{k})}_{xy}\M{G}^\tau_{\V{j}}]\label{eqn:ER1_c}\\
&=\frac{1}{3M} \sum_{n\in \Set M} 1 - \mathrm{Tr}[\V{\rho}_n\M{G}^+_{\V{t}}] + \mathrm{Tr}[\V{\rho}_n\M{G}^-_{\V{t}}]\label{eqn:ER1_d}\\
&=\frac{1}{3M} \sum_{n\in \Set M}\Big( 2\mathrm{Tr}[\V{\rho}_n\M{G}^-_{\V{t}}] + 
\sum_{\V{j} \in {\Set J} \backslash \{\V{t}\}  \atop \tau \in \{+,-\}  }\mathrm{Tr}\big[\V{\rho}_n\M{G}^\tau_{\V{j}}\big]\Big)\label{eqn:ER1_e}\\
&=\frac{2}{3}f^{-}_{\V{t}} + \frac{1}{3} \sum_{\V{j} \in {\Set J} \backslash \{\V{t}\}  \atop \tau \in \{+,-\}  }f^{\tau}_{\V{j}},
\end{align}\label{eqn:ER1}
\end{subequations}
where \eqref{eqn:ER1_a} is true because according to \thref{prop:subsetnum}, 
\begin{align}
\Pr\big[\V{B}_n = \V{b} \big] = \frac{1}{2^{L-1}},
\end{align}
and according to \eqref{eqn:MwA=1},
\begin{subequations}
\begin{align}
&\mathbb{E}\big[R_n\big| A_n= 1, \V{B}_n = \V{b} \big] \nonumber\\
&= \mathrm{Tr}\big[\V{\rho}_n\M{M}^{(\V{b})}_1  \big] \\
&=\frac{1}{2}\Big(1 - (-1)^{\frac{||\V{b}||_1} 2 + \V{b} \cdot \V{t}}
\mathrm{Tr}\big[\V{\rho}_n\V{\sigma}^{(\V{b})}_{xy}  \big]  \big)\Big),
\end{align}
\end{subequations}
\eqref{eqn:ER1_b} and \eqref{eqn:ER1_e} are true because that the \ac{GHZ} states in set $\Set{G}_L$ form a basis of $\Set{H}^{2^L}$, hence
\begin{align}
\V{\rho}_n &= \sum_{\V{j}\in\Set J \atop \tau \in \{+,-\} } \mathrm{Tr}(\V{\rho}_n\M{G}^{\tau}_{\V{j}})\M{G}^{\tau}_{\V{j}},\quad\mbox{and}\\
1 &= \sum_{\V{j}\in\Set J \atop \tau \in \{+,-\} } \mathrm{Tr}(\V{\rho}_n\M{G}^\tau_{\V{j}}),
\end{align}
\eqref{eqn:ER1_c} is derived by taking the scalars $\mathrm{Tr}[\V{\rho}_n\M{G}^{\pm}_{\V{j}}]$ out of the trace operators, 
and \eqref{eqn:ER1_d} is derived by applying \thref{lem:Paulivalue}.

Substituting \eqref{eqn:fstar}, \eqref{eqn:sumf1}, \eqref{eqn:ER0} and \eqref{eqn:ER1} into \eqref{eqn:checkF}, one can obtain that
\begin{align}
\begin{split}
\mathbb{E}\big[\check{F}\big] & = 1 - \frac{3}{2}\mathbb{E}\bigg[\frac{\sum_{n\in \Set M}R_n}{M}\bigg]\\
& = 1- \frac{3}{2}\bigg(\mathbb{E}\bigg[\frac{\sum_{n\in \Set M} R_n \indicator(A_n = 0)}{M}\bigg]  \\
&\hspace{11.6mm}+ \mathbb{E}\bigg[\frac{\sum_{n\in \Set M} R_n \indicator(A_n > 0)}{M}\bigg]\bigg)\\
& = 1 - f^{-}_{\V{t}} - \sum_{\V{j} \in {\Set J} \backslash \{\V{t}\} \atop \tau \in \{+,-\}  }f^{\tau}_{\V{j}} \\
& = f^+_{\V{t}} = \bar{f}_{\SM}.
\end{split}\label{eqn:Funbias}
\end{align}
\eqref{eqn:Funbias} shows that the proposed estimator is unbiased. This completes the proof of \thref{lem:unbiased}.

\section{Proof of \thref{lem:opt-id}}
\label{pf_lem:opt-id}
By repeating the derivation of \eqref{eqn:ER0}, \eqref{eqn:ER1} and \eqref{eqn:Funbias} for the $n$-th \ac{GHZ} state, the following expression is obtained
\begin{align}
\mathrm{Pr}(R_n =1) = \frac{2}{3}(1-f_n).
\end{align}
Hence, the variance of $R_n$ is given by
\begin{align}
\mathbb{V}\big[R_n\big]&= \mathrm{Pr}(R_n =1) (1-\mathrm{Pr}(R_n =1) )\nonumber\\
 &= \frac{(2f_n+1)(2-2f_n)}{9}\label{eqn:VRn}
\end{align}

In the case of independent noise, the measurement outcomes $R_n$ corresponding to different indices $n\in \SM$  are independent.
Therefore, according to \eqref{eqn:checkF}
\begin{align}
\mathbb{V}\big[\check{F}\big| \SM\big]
&= \frac{9}{4M^2}\sum_{n\in\SM} \mathbb{V}\big[R_n\big]\nonumber\\
&=\sum_{n\in\SM}\frac{(2f_n+1)(1-f_n)}{2M^2}.\label{eqn:Fvar-M}
\end{align}

According to Lemma~\ref{lem:unbiased}, $\check{F}$ satisfies \eqref{eqn:unbiased_id2}, therefore
\begin{align}
\mathbb{E}_{R}\big[(\check{F}-\bar{F}_{\SM})^2\big| \SM \big]= \mathbb{V}\big[\check{F}\big| \SM\big].
\label{eqn:E2V-F}
\end{align}
Since the sample set $\SM$ is selected completely at random, \eqref{eqn:Fvar-M} and \eqref{eqn:E2V-F} indicate that
for all ${\Set S}_{\mathrm{id}}(\V{f}_{\mathrm{all}})\subset{\Set S}_{\mathrm{id}}$,
\begin{align}
&{N\choose M}^{-1} \sum_{\SM \subset \Set N}\mathbb{E}_{R}\big[(\check{F}-\bar{F}_{\SM})^2 \big]\nonumber\\ 
& = {N\choose M}^{-1} \sum_{\SM \subset \Set N}\sum_{n\in\SM}\frac{(2f_n+1)(1-f_n)}{2M^2}\nonumber\\
& = {N\choose M}^{-1} \sum_{n\in\Set N}{N-1\choose M-1}\frac{(2f_n+1)(1-f_n)}{2M^2}\nonumber\\
&=\sum_{n\in\Set N}\frac{(2f_n+1)(1-f_n)}{2MN}.\label{eqn:error-upperbound}
\end{align}

According to \thref{lem:min}, the estimation error of $\Ps$-\ref{prob:id2-m} is no less than \eqref{eqn:error-upperbound}.
This demonstrates that Protocol~\ref{alg:fidelityest} is optimal in $\Ps$-\ref{prob:id2-m} for all fidelity compositions ${\Set S}_{\mathrm{id}}(\V{f}_{\mathrm{all}})\subset{\Set S}_{\mathrm{id}}$.
Thus, the proof of \thref{lem:opt-id} is complete.

\section{Proof of \thref{thm:opt}}
\label{pf_thm:opt}
Let $\Set O^*$ and $\Set D^*$ denote the measurement operation and estimator of Protocol~\ref{alg:fidelityest}, respectively.
Define a composite measurement operation $\hat{\Set O}^* = \Set O^* \circ \Set T$.
Then, according to \thref{lem:opt-id} and \thref{thm:gen-opt-id}, the estimation protocol $\{\hat{\Set O}^*, \Set D^* \}$ is optimal for $\Ps$-\ref{prob:1-m}.

When $A_n=0$, it follows from \eqref{eqn:Txy} and \eqref{eqn:MwA=0} that the rotations in operation $\Set T$  do not alter  the  
measurement operators $\M{M}_{r}$. 
Explicitly,
\begin{subequations}
\begin{align}
&(\M{T}^{(\V{k})})^\dag \M{M}_{0} \M{T}^{(\V{k})} \nonumber\\ 
&=  (\V{\sigma}^{(\V{k})}_{xy})^\dag (\V{G}^{+}_{\V{t}}  + \V{G}^{-}_{\V{t}})\V{\sigma}^{(\V{k})}_{xy} \label{eqn:TMrA0-a}\\
&= \V{G}^{+}_{\V{t}}  + \V{G}^{-}_{\V{t}} \label{eqn:TMrA0-b}\\
&= \M{M}_{0},
\end{align}\label{eqn:TM0rA0}
\end{subequations}
and
\begin{subequations}
\begin{align}
&(\M{T}^{(\V{k})})^\dag \M{M}_{1} \M{T}^{(\V{k})} \nonumber\\ 
&=  (\V{\sigma}^{(\V{k})}_{xy})^\dag (\mathbb{I}_{2^L} - \M{M}_{0})\V{\sigma}^{(\V{k})}_{xy} \\
&=  (\V{\sigma}^{(\V{k})}_{xy})^\dag\V{\sigma}^{(\V{k})}_{xy}  - \M{M}_{0} \label{eqn:TM1rA0-b}\\
&= \M{M}_{1},
\label{eqn:TM1rA0-c}
\end{align}\label{eqn:TM1rA0}
\end{subequations}
where $^\dag$ is the Hermitian transpose, \eqref{eqn:TMrA0-b} is obtained by substituting \eqref{eqn:TkGj1j2} into \eqref{eqn:TMrA0-a},
\eqref{eqn:TM1rA0-c} holds because that $\V{\sigma}^\dag_{x}\V{\sigma}_{x} = \V{\sigma}^\dag_{y}\V{\sigma}_{y} = \mathbb{I}_2$.

When $A_n=1$,   it follows from \eqref{eqn:Txy} and \eqref{eqn:MwA=1} that the rotations in operation $\Set T$  do not alter the  
measurement operators $\M{M}^{(\V{b})}_{r}$. 
Explicitly, for any strings $\V{k}$ and $\V{b}$ satisfying $||\V{k}||_1 \in \{0,2\}$ and $||\V{b}||_1$ being even,
\begin{subequations}
\begin{align}
&(\M{T}^{(\V{k})})^\dag \M{M}^{(\V{b})}_{r} \M{T}^{(\V{k})} \nonumber\\
&=\frac{1}{2}(\V{\sigma}^{(\V{k})}_{xy})^\dag\Big(\mathbb{I}_{2^L} - (-1)^{\frac{||\V{b}||_1} 2 + \V{b} \cdot \V{t} +r} \V{\sigma}^{(\V{b})}_{xy}\Big)\V{\sigma}^{(\V{k})}_{xy} 
\label{eqn:TMbrA1-a}\\
&=\frac{1}{2}\Big(\mathbb{I}_{2^L} - (-1)^{\frac{||\V{b}||_1} 2 + \V{b} \cdot \V{t} +r} 
\Motimes_{l\in \Set L}\V{\sigma}^{(k_l)}_{xy}\V{\sigma}^{(b_l)}_{xy}\V{\sigma}^{(k_l)}_{xy}\Big)
\label{eqn:TMbrA1-b}\\
&=\frac{1}{2}\Big(\mathbb{I}_{2^L} - (-1)^{\frac{||\V{b}||_1} 2 + \V{b} \cdot \V{t} +r + \V{b} \oplus \V{k}} 
\Motimes_{l\in \Set L}\V{\sigma}^{(b_l)}_{xy}\Big)
\label{eqn:TMbrA1-c}\\
&=\frac{1}{2}\Big(\mathbb{I}_{2^L} - (-1)^{\frac{||\V{b}||_1} 2 + \V{b} \cdot \V{t} +r } 
\V{\sigma}^{(\V{b})}_{xy}\Big)
\label{eqn:TMbrA1-d}\\
&=\M{M}^{(\V{b})}_{r},
\end{align}\label{eqn:TMbrA1}
\end{subequations}
where $\oplus$ represents the XOR operation,
\eqref{eqn:TMbrA1-b} holds because that $\V{\sigma}^\dag_{x}\V{\sigma}_{x} = \V{\sigma}^\dag_{y}\V{\sigma}_{y} = \mathbb{I}_2$,
\eqref{eqn:TMbrA1-c} holds because that for $u,v\in\{x,y\}$
\begin{align}
\V{\sigma}_{u}\V{\sigma}_{v}\V{\sigma}_{u} =
\left\{\begin{array}{ll} \V{\sigma}_{v} &\mbox{if } u=v \\
-\V{\sigma}_{v} &\mbox{if } u\neq v
\end{array}\right.,
\end{align}
and \eqref{eqn:TMbrA1-d} holds because that both $||\V{k}||_1$ and $||\V{b}||_1$ are even, hence $\V{b} \oplus \V{k}$ is even as well.

According to \eqref{eqn:TM0rA0},  \eqref{eqn:TM1rA0} and \eqref{eqn:TMbrA1}, the operation $\Set{T}$ does not change the operators of $\Set O^*$.
Consequently, the operations $\hat{\Set O}^*$ and $\Set O^*$ are equivalent, i.e., 
\begin{align}
\hat{\Set O}^* = \Set O^* \circ \Set T =\Set O^*.\label{eqn:OT=O_ap}
\end{align}
Thus, by \eqref{eqn:OT=O_ap},  Protocol~ref{alg:fidelityest} is the optimal solution of $\Ps$-\ref{prob:1-m}. This completes the proof of \thref{thm:opt}.

%





\begin{thebibliography}{47}%
\makeatletter
\providecommand \@ifxundefined [1]{%
 \@ifx{#1\undefined}
}%
\providecommand \@ifnum [1]{%
 \ifnum #1\expandafter \@firstoftwo
 \else \expandafter \@secondoftwo
 \fi
}%
\providecommand \@ifx [1]{%
 \ifx #1\expandafter \@firstoftwo
 \else \expandafter \@secondoftwo
 \fi
}%
\providecommand \natexlab [1]{#1}%
\providecommand \enquote  [1]{``#1''}%
\providecommand \bibnamefont  [1]{#1}%
\providecommand \bibfnamefont [1]{#1}%
\providecommand \citenamefont [1]{#1}%
\providecommand \href@noop [0]{\@secondoftwo}%
\providecommand \href [0]{\begingroup \@sanitize@url \@href}%
\providecommand \@href[1]{\@@startlink{#1}\@@href}%
\providecommand \@@href[1]{\endgroup#1\@@endlink}%
\providecommand \@sanitize@url [0]{\catcode `\\12\catcode `\$12\catcode
  `\&12\catcode `\#12\catcode `\^12\catcode `\_12\catcode `\%12\relax}%
\providecommand \@@startlink[1]{}%
\providecommand \@@endlink[0]{}%
\providecommand \url  [0]{\begingroup\@sanitize@url \@url }%
\providecommand \@url [1]{\endgroup\@href {#1}{\urlprefix }}%
\providecommand \urlprefix  [0]{URL }%
\providecommand \Eprint [0]{\href }%
\providecommand \doibase [0]{https://doi.org/}%
\providecommand \selectlanguage [0]{\@gobble}%
\providecommand \bibinfo  [0]{\@secondoftwo}%
\providecommand \bibfield  [0]{\@secondoftwo}%
\providecommand \translation [1]{[#1]}%
\providecommand \BibitemOpen [0]{}%
\providecommand \bibitemStop [0]{}%
\providecommand \bibitemNoStop [0]{.\EOS\space}%
\providecommand \EOS [0]{\spacefactor3000\relax}%
\providecommand \BibitemShut  [1]{\csname bibitem#1\endcsname}%
\let\auto@bib@innerbib\@empty
\bibitem [{\citenamefont {Wehner}\ \emph {et~al.}(2018)\citenamefont {Wehner},
  \citenamefont {Elkouss},\ and\ \citenamefont {Hanson}}]{WehElkHan:J18}%
  \BibitemOpen
  \bibfield  {author} {\bibinfo {author} {\bibfnamefont {S.}~\bibnamefont
  {Wehner}}, \bibinfo {author} {\bibfnamefont {D.}~\bibnamefont {Elkouss}},\
  and\ \bibinfo {author} {\bibfnamefont {R.}~\bibnamefont {Hanson}},\
  }\bibfield  {title} {\bibinfo {title} {Quantum internet: A vision for the
  road ahead},\ }\href@noop {} {\bibfield  {journal} {\bibinfo  {journal}
  {Science}\ }\textbf {\bibinfo {volume} {362}} (\bibinfo {year}
  {2018})}\BibitemShut {NoStop}%
\bibitem [{\citenamefont {Pant}\ \emph {et~al.}(2019)\citenamefont {Pant},
  \citenamefont {Krovi}, \citenamefont {Towsley}, \citenamefont {Tassiulas},
  \citenamefont {Jiang}, \citenamefont {Basu}, \citenamefont {Englund},\ and\
  \citenamefont {Guha}}]{Panetal:J19}%
  \BibitemOpen
  \bibfield  {author} {\bibinfo {author} {\bibfnamefont {M.}~\bibnamefont
  {Pant}}, \bibinfo {author} {\bibfnamefont {H.}~\bibnamefont {Krovi}},
  \bibinfo {author} {\bibfnamefont {D.}~\bibnamefont {Towsley}}, \bibinfo
  {author} {\bibfnamefont {L.}~\bibnamefont {Tassiulas}}, \bibinfo {author}
  {\bibfnamefont {L.}~\bibnamefont {Jiang}}, \bibinfo {author} {\bibfnamefont
  {P.}~\bibnamefont {Basu}}, \bibinfo {author} {\bibfnamefont {D.}~\bibnamefont
  {Englund}},\ and\ \bibinfo {author} {\bibfnamefont {S.}~\bibnamefont
  {Guha}},\ }\bibfield  {title} {\bibinfo {title} {Routing entanglement in the
  quantum internet},\ }\href@noop {} {\bibfield  {journal} {\bibinfo  {journal}
  {npj {Q}uantum {I}nformation}\ }\textbf {\bibinfo {volume} {5}},\ \bibinfo
  {pages} {25} (\bibinfo {year} {2019})}\BibitemShut {NoStop}%
\bibitem [{\citenamefont {Preskill}(2018)}]{Pre:J18}%
  \BibitemOpen
  \bibfield  {author} {\bibinfo {author} {\bibfnamefont {J.}~\bibnamefont
  {Preskill}},\ }\bibfield  {title} {\bibinfo {title} {Quantum {C}omputing in
  the {NISQ} era and beyond},\ }\href
  {https://doi.org/10.22331/q-2018-08-06-79} {\bibfield  {journal} {\bibinfo
  {journal} {{Quantum}}\ }\textbf {\bibinfo {volume} {2}},\ \bibinfo {pages}
  {79} (\bibinfo {year} {2018})}\BibitemShut {NoStop}%
\bibitem [{\citenamefont {Havl{\'\i}{\v c}ek}\ \emph
  {et~al.}(2019)\citenamefont {Havl{\'\i}{\v c}ek}, \citenamefont
  {C{\'o}rcoles}, \citenamefont {Temme}, \citenamefont {Harrow}, \citenamefont
  {Kandala}, \citenamefont {Chow},\ and\ \citenamefont
  {Gambetta}}]{Havetal:J19}%
  \BibitemOpen
  \bibfield  {author} {\bibinfo {author} {\bibfnamefont {V.}~\bibnamefont
  {Havl{\'\i}{\v c}ek}}, \bibinfo {author} {\bibfnamefont {A.~D.}\ \bibnamefont
  {C{\'o}rcoles}}, \bibinfo {author} {\bibfnamefont {K.}~\bibnamefont {Temme}},
  \bibinfo {author} {\bibfnamefont {A.~W.}\ \bibnamefont {Harrow}}, \bibinfo
  {author} {\bibfnamefont {A.}~\bibnamefont {Kandala}}, \bibinfo {author}
  {\bibfnamefont {J.~M.}\ \bibnamefont {Chow}},\ and\ \bibinfo {author}
  {\bibfnamefont {J.~M.}\ \bibnamefont {Gambetta}},\ }\bibfield  {title}
  {\bibinfo {title} {Supervised learning with quantum-enhanced feature
  spaces},\ }\href@noop {} {\bibfield  {journal} {\bibinfo  {journal} {Nature}\
  }\textbf {\bibinfo {volume} {567}},\ \bibinfo {pages} {209} (\bibinfo {year}
  {2019})}\BibitemShut {NoStop}%
\bibitem [{\citenamefont {Kuzmin}\ \emph {et~al.}(2019)\citenamefont {Kuzmin},
  \citenamefont {Vasilyev}, \citenamefont {Sangouard}, \citenamefont
  {D{\"u}r},\ and\ \citenamefont {Muschik}}]{KuzVasSanDurMus:J19}%
  \BibitemOpen
  \bibfield  {author} {\bibinfo {author} {\bibfnamefont {V.~V.}\ \bibnamefont
  {Kuzmin}}, \bibinfo {author} {\bibfnamefont {D.~V.}\ \bibnamefont
  {Vasilyev}}, \bibinfo {author} {\bibfnamefont {N.}~\bibnamefont {Sangouard}},
  \bibinfo {author} {\bibfnamefont {W.}~\bibnamefont {D{\"u}r}},\ and\ \bibinfo
  {author} {\bibfnamefont {C.~A.}\ \bibnamefont {Muschik}},\ }\bibfield
  {title} {\bibinfo {title} {Scalable repeater architectures for multi-party
  states},\ }\href@noop {} {\bibfield  {journal} {\bibinfo  {journal} {npj
  Quantum Information}\ }\textbf {\bibinfo {volume} {5}},\ \bibinfo {pages}
  {115} (\bibinfo {year} {2019})}\BibitemShut {NoStop}%
\bibitem [{\citenamefont {Song}\ \emph {et~al.}(2020)\citenamefont {Song},
  \citenamefont {Gou},\ and\ \citenamefont {Wen}}]{SonGouWen:J20}%
  \BibitemOpen
  \bibfield  {author} {\bibinfo {author} {\bibfnamefont {X.}~\bibnamefont
  {Song}}, \bibinfo {author} {\bibfnamefont {R.}~\bibnamefont {Gou}},\ and\
  \bibinfo {author} {\bibfnamefont {A.}~\bibnamefont {Wen}},\ }\bibfield
  {title} {\bibinfo {title} {Secure multiparty quantum computation based on
  lagrange unitary operator},\ }\href
  {https://doi.org/10.1038/s41598-020-64538-8} {\bibfield  {journal} {\bibinfo
  {journal} {Scientific Reports}\ }\textbf {\bibinfo {volume} {10}},\ \bibinfo
  {pages} {7921} (\bibinfo {year} {2020})}\BibitemShut {NoStop}%
\bibitem [{\citenamefont {Ben-Av}\ and\ \citenamefont
  {Exman}(2011)}]{BenExm:J11}%
  \BibitemOpen
  \bibfield  {author} {\bibinfo {author} {\bibfnamefont {R.}~\bibnamefont
  {Ben-Av}}\ and\ \bibinfo {author} {\bibfnamefont {I.}~\bibnamefont {Exman}},\
  }\bibfield  {title} {\bibinfo {title} {Optimized multiparty quantum clock
  synchronization},\ }\href@noop {} {\bibfield  {journal} {\bibinfo  {journal}
  {Phys. Rev. A}\ }\textbf {\bibinfo {volume} {84}},\ \bibinfo {pages} {014301}
  (\bibinfo {year} {2011})}\BibitemShut {NoStop}%
\bibitem [{\citenamefont {Zhong}\ \emph {et~al.}(2014)\citenamefont {Zhong},
  \citenamefont {Wang}, \citenamefont {Martinis}, \citenamefont {Cleland},
  \citenamefont {Korotkov},\ and\ \citenamefont
  {Wang}}]{ZhoWanMarCleKorWan:J14}%
  \BibitemOpen
  \bibfield  {author} {\bibinfo {author} {\bibfnamefont {Y.~P.}\ \bibnamefont
  {Zhong}}, \bibinfo {author} {\bibfnamefont {Z.~L.}\ \bibnamefont {Wang}},
  \bibinfo {author} {\bibfnamefont {J.~M.}\ \bibnamefont {Martinis}}, \bibinfo
  {author} {\bibfnamefont {A.~N.}\ \bibnamefont {Cleland}}, \bibinfo {author}
  {\bibfnamefont {A.~N.}\ \bibnamefont {Korotkov}},\ and\ \bibinfo {author}
  {\bibfnamefont {H.}~\bibnamefont {Wang}},\ }\bibfield  {title} {\bibinfo
  {title} {Reducing the impact of intrinsic dissipation in a superconducting
  circuit by quantum error detection},\ }\href@noop {} {\bibfield  {journal}
  {\bibinfo  {journal} {Nature Communications}\ }\textbf {\bibinfo {volume}
  {5}},\ \bibinfo {pages} {3135 EP } (\bibinfo {year} {2014})}\BibitemShut
  {NoStop}%
\bibitem [{\citenamefont {C{\'o}rcoles}\ \emph {et~al.}(2015)\citenamefont
  {C{\'o}rcoles}, \citenamefont {Magesan}, \citenamefont {Srinivasan},
  \citenamefont {Cross}, \citenamefont {Steffen}, \citenamefont {Gambetta},\
  and\ \citenamefont {Chow}}]{Cordetal:J15}%
  \BibitemOpen
  \bibfield  {author} {\bibinfo {author} {\bibfnamefont {A.~D.}\ \bibnamefont
  {C{\'o}rcoles}}, \bibinfo {author} {\bibfnamefont {E.}~\bibnamefont
  {Magesan}}, \bibinfo {author} {\bibfnamefont {S.~J.}\ \bibnamefont
  {Srinivasan}}, \bibinfo {author} {\bibfnamefont {A.~W.}\ \bibnamefont
  {Cross}}, \bibinfo {author} {\bibfnamefont {M.}~\bibnamefont {Steffen}},
  \bibinfo {author} {\bibfnamefont {J.~M.}\ \bibnamefont {Gambetta}},\ and\
  \bibinfo {author} {\bibfnamefont {J.~M.}\ \bibnamefont {Chow}},\ }\bibfield
  {title} {\bibinfo {title} {Demonstration of a quantum error detection code
  using a square lattice of four superconducting qubits},\ }\href@noop {}
  {\bibfield  {journal} {\bibinfo  {journal} {Nature Communications}\ }\textbf
  {\bibinfo {volume} {6}},\ \bibinfo {pages} {6979 EP } (\bibinfo {year}
  {2015})}\BibitemShut {NoStop}%
\bibitem [{\citenamefont {Linke}\ \emph {et~al.}(2017)\citenamefont {Linke},
  \citenamefont {Gutierrez}, \citenamefont {Landsman}, \citenamefont {Figgatt},
  \citenamefont {Debnath}, \citenamefont {Brown},\ and\ \citenamefont
  {Monroe}}]{Linetal:J17}%
  \BibitemOpen
  \bibfield  {author} {\bibinfo {author} {\bibfnamefont {N.~M.}\ \bibnamefont
  {Linke}}, \bibinfo {author} {\bibfnamefont {M.}~\bibnamefont {Gutierrez}},
  \bibinfo {author} {\bibfnamefont {K.~A.}\ \bibnamefont {Landsman}}, \bibinfo
  {author} {\bibfnamefont {C.}~\bibnamefont {Figgatt}}, \bibinfo {author}
  {\bibfnamefont {S.}~\bibnamefont {Debnath}}, \bibinfo {author} {\bibfnamefont
  {K.~R.}\ \bibnamefont {Brown}},\ and\ \bibinfo {author} {\bibfnamefont
  {C.}~\bibnamefont {Monroe}},\ }\bibfield  {title} {\bibinfo {title}
  {Fault-tolerant quantum error detection},\ }\href@noop {} {\bibfield
  {journal} {\bibinfo  {journal} {Science Advances}\ }\textbf {\bibinfo
  {volume} {3}} (\bibinfo {year} {2017})}\BibitemShut {NoStop}%
\bibitem [{\citenamefont {D'Ariano}\ and\ \citenamefont
  {Lo~Presti}(2001)}]{DArLoP:J01}%
  \BibitemOpen
  \bibfield  {author} {\bibinfo {author} {\bibfnamefont {G.~M.}\ \bibnamefont
  {D'Ariano}}\ and\ \bibinfo {author} {\bibfnamefont {P.}~\bibnamefont
  {Lo~Presti}},\ }\bibfield  {title} {\bibinfo {title} {Quantum tomography for
  measuring experimentally the matrix elements of an arbitrary quantum
  operation},\ }\href@noop {} {\bibfield  {journal} {\bibinfo  {journal} {Phys.
  Rev. Lett.}\ }\textbf {\bibinfo {volume} {86}},\ \bibinfo {pages} {4195}
  (\bibinfo {year} {2001})}\BibitemShut {NoStop}%
\bibitem [{\citenamefont {Husz\'ar}\ and\ \citenamefont
  {Houlsby}(2012)}]{HusHou:J12}%
  \BibitemOpen
  \bibfield  {author} {\bibinfo {author} {\bibfnamefont {F.}~\bibnamefont
  {Husz\'ar}}\ and\ \bibinfo {author} {\bibfnamefont {N.~M.~T.}\ \bibnamefont
  {Houlsby}},\ }\bibfield  {title} {\bibinfo {title} {Adaptive bayesian quantum
  tomography},\ }\href@noop {} {\bibfield  {journal} {\bibinfo  {journal}
  {Phys. Rev. A}\ }\textbf {\bibinfo {volume} {85}},\ \bibinfo {pages} {052120}
  (\bibinfo {year} {2012})}\BibitemShut {NoStop}%
\bibitem [{\citenamefont {Lu}\ \emph {et~al.}(2015)\citenamefont {Lu},
  \citenamefont {Liu},\ and\ \citenamefont {Zhao}}]{LuLiuZha:J15}%
  \BibitemOpen
  \bibfield  {author} {\bibinfo {author} {\bibfnamefont {Y.}~\bibnamefont
  {Lu}}, \bibinfo {author} {\bibfnamefont {H.}~\bibnamefont {Liu}},\ and\
  \bibinfo {author} {\bibfnamefont {Q.}~\bibnamefont {Zhao}},\ }\bibfield
  {title} {\bibinfo {title} {Quantum state tomography and fidelity estimation
  via phaselift},\ }\href@noop {} {\bibfield  {journal} {\bibinfo  {journal}
  {Annals of Physics}\ }\textbf {\bibinfo {volume} {360}},\ \bibinfo {pages}
  {161 } (\bibinfo {year} {2015})}\BibitemShut {NoStop}%
\bibitem [{\citenamefont {Dimi{\'c}}\ and\ \citenamefont
  {Daki{\'c}}(2018)}]{AleBor:J18}%
  \BibitemOpen
  \bibfield  {author} {\bibinfo {author} {\bibfnamefont {A.}~\bibnamefont
  {Dimi{\'c}}}\ and\ \bibinfo {author} {\bibfnamefont {B.}~\bibnamefont
  {Daki{\'c}}},\ }\bibfield  {title} {\bibinfo {title} {Single-copy
  entanglement detection},\ }\href@noop {} {\bibfield  {journal} {\bibinfo
  {journal} {npj Quantum Information}\ }\textbf {\bibinfo {volume} {4}},\
  \bibinfo {pages} {11} (\bibinfo {year} {2018})}\BibitemShut {NoStop}%
\bibitem [{\citenamefont {Briegel}\ \emph {et~al.}(1998)\citenamefont
  {Briegel}, \citenamefont {D\"ur}, \citenamefont {Cirac},\ and\ \citenamefont
  {Zoller}}]{BriDurCirZol:98}%
  \BibitemOpen
  \bibfield  {author} {\bibinfo {author} {\bibfnamefont {H.-J.}\ \bibnamefont
  {Briegel}}, \bibinfo {author} {\bibfnamefont {W.}~\bibnamefont {D\"ur}},
  \bibinfo {author} {\bibfnamefont {J.~I.}\ \bibnamefont {Cirac}},\ and\
  \bibinfo {author} {\bibfnamefont {P.}~\bibnamefont {Zoller}},\ }\bibfield
  {title} {\bibinfo {title} {Quantum repeaters: The role of imperfect local
  operations in quantum communication},\ }\href@noop {} {\bibfield  {journal}
  {\bibinfo  {journal} {Phys. Rev. Lett.}\ }\textbf {\bibinfo {volume} {81}},\
  \bibinfo {pages} {5932} (\bibinfo {year} {1998})}\BibitemShut {NoStop}%
\bibitem [{\citenamefont {West}\ \emph {et~al.}(2010)\citenamefont {West},
  \citenamefont {Lidar}, \citenamefont {Fong},\ and\ \citenamefont
  {Gyure}}]{WesLidFonGyu:J10}%
  \BibitemOpen
  \bibfield  {author} {\bibinfo {author} {\bibfnamefont {J.~R.}\ \bibnamefont
  {West}}, \bibinfo {author} {\bibfnamefont {D.~A.}\ \bibnamefont {Lidar}},
  \bibinfo {author} {\bibfnamefont {B.~H.}\ \bibnamefont {Fong}},\ and\
  \bibinfo {author} {\bibfnamefont {M.~F.}\ \bibnamefont {Gyure}},\ }\bibfield
  {title} {\bibinfo {title} {High fidelity quantum gates via dynamical
  decoupling},\ }\href@noop {} {\bibfield  {journal} {\bibinfo  {journal}
  {Phys. Rev. Lett.}\ }\textbf {\bibinfo {volume} {105}},\ \bibinfo {pages}
  {230503} (\bibinfo {year} {2010})}\BibitemShut {NoStop}%
\bibitem [{\citenamefont {Bason}\ \emph {et~al.}(2011)\citenamefont {Bason},
  \citenamefont {Viteau}, \citenamefont {Malossi}, \citenamefont {Huillery},
  \citenamefont {Arimondo}, \citenamefont {Ciampini}, \citenamefont {Fazio},
  \citenamefont {Giovannetti}, \citenamefont {Mannella},\ and\ \citenamefont
  {Morsch}}]{Basetal:J11}%
  \BibitemOpen
  \bibfield  {author} {\bibinfo {author} {\bibfnamefont {M.~G.}\ \bibnamefont
  {Bason}}, \bibinfo {author} {\bibfnamefont {M.}~\bibnamefont {Viteau}},
  \bibinfo {author} {\bibfnamefont {N.}~\bibnamefont {Malossi}}, \bibinfo
  {author} {\bibfnamefont {P.}~\bibnamefont {Huillery}}, \bibinfo {author}
  {\bibfnamefont {E.}~\bibnamefont {Arimondo}}, \bibinfo {author}
  {\bibfnamefont {D.}~\bibnamefont {Ciampini}}, \bibinfo {author}
  {\bibfnamefont {R.}~\bibnamefont {Fazio}}, \bibinfo {author} {\bibfnamefont
  {V.}~\bibnamefont {Giovannetti}}, \bibinfo {author} {\bibfnamefont
  {R.}~\bibnamefont {Mannella}},\ and\ \bibinfo {author} {\bibfnamefont
  {O.}~\bibnamefont {Morsch}},\ }\bibfield  {title} {\bibinfo {title}
  {High-fidelity quantum driving},\ }\href@noop {} {\bibfield  {journal}
  {\bibinfo  {journal} {Nature Physics}\ }\textbf {\bibinfo {volume} {8}},\
  \bibinfo {pages} {147} (\bibinfo {year} {2011})}\BibitemShut {NoStop}%
\bibitem [{\citenamefont {Arroyo-Camejo}\ \emph {et~al.}(2014)\citenamefont
  {Arroyo-Camejo}, \citenamefont {Lazariev}, \citenamefont {Hell},\ and\
  \citenamefont {Balasubramanian}}]{ArrLazHelBal:J14}%
  \BibitemOpen
  \bibfield  {author} {\bibinfo {author} {\bibfnamefont {S.}~\bibnamefont
  {Arroyo-Camejo}}, \bibinfo {author} {\bibfnamefont {A.}~\bibnamefont
  {Lazariev}}, \bibinfo {author} {\bibfnamefont {S.~W.}\ \bibnamefont {Hell}},\
  and\ \bibinfo {author} {\bibfnamefont {G.}~\bibnamefont {Balasubramanian}},\
  }\bibfield  {title} {\bibinfo {title} {Room temperature high-fidelity
  holonomic single-qubit gate on a solid-state spin},\ }\href@noop {}
  {\bibfield  {journal} {\bibinfo  {journal} {Nature Communications}\ }\textbf
  {\bibinfo {volume} {5}},\ \bibinfo {pages} {4870 EP} (\bibinfo {year}
  {2014})}\BibitemShut {NoStop}%
\bibitem [{\citenamefont {Yin}\ \emph {et~al.}(2017{\natexlab{a}})\citenamefont
  {Yin}, \citenamefont {Cao}, \citenamefont {Li}, \citenamefont {Liao},
  \citenamefont {Zhang}, \citenamefont {Ren}, \citenamefont {Cai},
  \citenamefont {Liu}, \citenamefont {Li}, \citenamefont {Dai}, \citenamefont
  {Li}, \citenamefont {Lu}, \citenamefont {Gong}, \citenamefont {Xu},
  \citenamefont {Li}, \citenamefont {Li}, \citenamefont {Yin}, \citenamefont
  {Jiang}, \citenamefont {Li}, \citenamefont {Jia}, \citenamefont {Ren},
  \citenamefont {He}, \citenamefont {Zhou}, \citenamefont {Zhang},
  \citenamefont {Wang}, \citenamefont {Chang}, \citenamefont {Zhu},
  \citenamefont {Liu}, \citenamefont {Chen}, \citenamefont {Lu}, \citenamefont
  {Shu}, \citenamefont {Peng}, \citenamefont {Wang},\ and\ \citenamefont
  {Pan}}]{YinCaoPanetal:J17}%
  \BibitemOpen
  \bibfield  {author} {\bibinfo {author} {\bibfnamefont {J.}~\bibnamefont
  {Yin}}, \bibinfo {author} {\bibfnamefont {Y.}~\bibnamefont {Cao}}, \bibinfo
  {author} {\bibfnamefont {Y.-H.}\ \bibnamefont {Li}}, \bibinfo {author}
  {\bibfnamefont {S.-K.}\ \bibnamefont {Liao}}, \bibinfo {author}
  {\bibfnamefont {L.}~\bibnamefont {Zhang}}, \bibinfo {author} {\bibfnamefont
  {J.-G.}\ \bibnamefont {Ren}}, \bibinfo {author} {\bibfnamefont {W.-Q.}\
  \bibnamefont {Cai}}, \bibinfo {author} {\bibfnamefont {W.-Y.}\ \bibnamefont
  {Liu}}, \bibinfo {author} {\bibfnamefont {B.}~\bibnamefont {Li}}, \bibinfo
  {author} {\bibfnamefont {H.}~\bibnamefont {Dai}}, \bibinfo {author}
  {\bibfnamefont {G.-B.}\ \bibnamefont {Li}}, \bibinfo {author} {\bibfnamefont
  {Q.-M.}\ \bibnamefont {Lu}}, \bibinfo {author} {\bibfnamefont {Y.-H.}\
  \bibnamefont {Gong}}, \bibinfo {author} {\bibfnamefont {Y.}~\bibnamefont
  {Xu}}, \bibinfo {author} {\bibfnamefont {S.-L.}\ \bibnamefont {Li}}, \bibinfo
  {author} {\bibfnamefont {F.-Z.}\ \bibnamefont {Li}}, \bibinfo {author}
  {\bibfnamefont {Y.-Y.}\ \bibnamefont {Yin}}, \bibinfo {author} {\bibfnamefont
  {Z.-Q.}\ \bibnamefont {Jiang}}, \bibinfo {author} {\bibfnamefont
  {M.}~\bibnamefont {Li}}, \bibinfo {author} {\bibfnamefont {J.-J.}\
  \bibnamefont {Jia}}, \bibinfo {author} {\bibfnamefont {G.}~\bibnamefont
  {Ren}}, \bibinfo {author} {\bibfnamefont {D.}~\bibnamefont {He}}, \bibinfo
  {author} {\bibfnamefont {Y.-L.}\ \bibnamefont {Zhou}}, \bibinfo {author}
  {\bibfnamefont {X.-X.}\ \bibnamefont {Zhang}}, \bibinfo {author}
  {\bibfnamefont {N.}~\bibnamefont {Wang}}, \bibinfo {author} {\bibfnamefont
  {X.}~\bibnamefont {Chang}}, \bibinfo {author} {\bibfnamefont {Z.-C.}\
  \bibnamefont {Zhu}}, \bibinfo {author} {\bibfnamefont {N.-L.}\ \bibnamefont
  {Liu}}, \bibinfo {author} {\bibfnamefont {Y.-A.}\ \bibnamefont {Chen}},
  \bibinfo {author} {\bibfnamefont {C.-Y.}\ \bibnamefont {Lu}}, \bibinfo
  {author} {\bibfnamefont {R.}~\bibnamefont {Shu}}, \bibinfo {author}
  {\bibfnamefont {C.-Z.}\ \bibnamefont {Peng}}, \bibinfo {author}
  {\bibfnamefont {J.-Y.}\ \bibnamefont {Wang}},\ and\ \bibinfo {author}
  {\bibfnamefont {J.-W.}\ \bibnamefont {Pan}},\ }\bibfield  {title} {\bibinfo
  {title} {Satellite-based entanglement distribution over 1200 kilometers},\
  }\href {https://doi.org/10.1126/science.aan3211} {\bibfield  {journal}
  {\bibinfo  {journal} {Science}\ }\textbf {\bibinfo {volume} {356}},\ \bibinfo
  {pages} {1140} (\bibinfo {year} {2017}{\natexlab{a}})}\BibitemShut {NoStop}%
\bibitem [{\citenamefont {Humphreys}\ \emph {et~al.}(2018)\citenamefont
  {Humphreys}, \citenamefont {Kalb}, \citenamefont {Morits}, \citenamefont
  {Schouten}, \citenamefont {Vermeulen}, \citenamefont {Twitchen},
  \citenamefont {Markham},\ and\ \citenamefont {Hanson}}]{Humetal:J18}%
  \BibitemOpen
  \bibfield  {author} {\bibinfo {author} {\bibfnamefont {P.~C.}\ \bibnamefont
  {Humphreys}}, \bibinfo {author} {\bibfnamefont {N.}~\bibnamefont {Kalb}},
  \bibinfo {author} {\bibfnamefont {J.~P.~J.}\ \bibnamefont {Morits}}, \bibinfo
  {author} {\bibfnamefont {R.~N.}\ \bibnamefont {Schouten}}, \bibinfo {author}
  {\bibfnamefont {R.~F.~L.}\ \bibnamefont {Vermeulen}}, \bibinfo {author}
  {\bibfnamefont {D.~J.}\ \bibnamefont {Twitchen}}, \bibinfo {author}
  {\bibfnamefont {M.}~\bibnamefont {Markham}},\ and\ \bibinfo {author}
  {\bibfnamefont {R.}~\bibnamefont {Hanson}},\ }\bibfield  {title} {\bibinfo
  {title} {Deterministic delivery of remote entanglement on a quantum
  network},\ }\href@noop {} {\bibfield  {journal} {\bibinfo  {journal}
  {Nature}\ }\textbf {\bibinfo {volume} {558}},\ \bibinfo {pages} {268}
  (\bibinfo {year} {2018})}\BibitemShut {NoStop}%
\bibitem [{\citenamefont {Pompili}\ \emph {et~al.}(2021)\citenamefont
  {Pompili}, \citenamefont {Hermans}, \citenamefont {Baier}, \citenamefont
  {Beukers}, \citenamefont {Humphreys}, \citenamefont {Schouten}, \citenamefont
  {Vermeulen}, \citenamefont {Tiggelman}, \citenamefont {dos Santos~Martins},
  \citenamefont {Dirkse}, \citenamefont {Wehner},\ and\ \citenamefont
  {Hanson}}]{Pom-etal:J21}%
  \BibitemOpen
  \bibfield  {author} {\bibinfo {author} {\bibfnamefont {M.}~\bibnamefont
  {Pompili}}, \bibinfo {author} {\bibfnamefont {S.~L.~N.}\ \bibnamefont
  {Hermans}}, \bibinfo {author} {\bibfnamefont {S.}~\bibnamefont {Baier}},
  \bibinfo {author} {\bibfnamefont {H.~K.~C.}\ \bibnamefont {Beukers}},
  \bibinfo {author} {\bibfnamefont {P.~C.}\ \bibnamefont {Humphreys}}, \bibinfo
  {author} {\bibfnamefont {R.~N.}\ \bibnamefont {Schouten}}, \bibinfo {author}
  {\bibfnamefont {R.~F.~L.}\ \bibnamefont {Vermeulen}}, \bibinfo {author}
  {\bibfnamefont {M.~J.}\ \bibnamefont {Tiggelman}}, \bibinfo {author}
  {\bibfnamefont {L.}~\bibnamefont {dos Santos~Martins}}, \bibinfo {author}
  {\bibfnamefont {B.}~\bibnamefont {Dirkse}}, \bibinfo {author} {\bibfnamefont
  {S.}~\bibnamefont {Wehner}},\ and\ \bibinfo {author} {\bibfnamefont
  {R.}~\bibnamefont {Hanson}},\ }\bibfield  {title} {\bibinfo {title}
  {Realization of a multinode quantum network of remote solid-state qubits},\
  }\href {https://doi.org/10.1126/science.abg1919} {\bibfield  {journal}
  {\bibinfo  {journal} {Science}\ }\textbf {\bibinfo {volume} {372}},\ \bibinfo
  {pages} {259} (\bibinfo {year} {2021})}\BibitemShut {NoStop}%
\bibitem [{\citenamefont {Liu}\ \emph {et~al.}(2020)\citenamefont {Liu},
  \citenamefont {Tian}, \citenamefont {Gu}, \citenamefont {Fan}, \citenamefont
  {Ni}, \citenamefont {Yang}, \citenamefont {Zhang}, \citenamefont {Hu},
  \citenamefont {Guo}, \citenamefont {Cao}, \citenamefont {Hu}, \citenamefont
  {Zhao}, \citenamefont {Lu}, \citenamefont {Gong}, \citenamefont {Xie},\ and\
  \citenamefont {Zhu}}]{LiuTianGuetal:J20}%
  \BibitemOpen
  \bibfield  {author} {\bibinfo {author} {\bibfnamefont {H.-Y.}\ \bibnamefont
  {Liu}}, \bibinfo {author} {\bibfnamefont {X.-H.}\ \bibnamefont {Tian}},
  \bibinfo {author} {\bibfnamefont {C.}~\bibnamefont {Gu}}, \bibinfo {author}
  {\bibfnamefont {P.}~\bibnamefont {Fan}}, \bibinfo {author} {\bibfnamefont
  {X.}~\bibnamefont {Ni}}, \bibinfo {author} {\bibfnamefont {R.}~\bibnamefont
  {Yang}}, \bibinfo {author} {\bibfnamefont {J.-N.}\ \bibnamefont {Zhang}},
  \bibinfo {author} {\bibfnamefont {M.}~\bibnamefont {Hu}}, \bibinfo {author}
  {\bibfnamefont {J.}~\bibnamefont {Guo}}, \bibinfo {author} {\bibfnamefont
  {X.}~\bibnamefont {Cao}}, \bibinfo {author} {\bibfnamefont {X.}~\bibnamefont
  {Hu}}, \bibinfo {author} {\bibfnamefont {G.}~\bibnamefont {Zhao}}, \bibinfo
  {author} {\bibfnamefont {Y.-Q.}\ \bibnamefont {Lu}}, \bibinfo {author}
  {\bibfnamefont {Y.-X.}\ \bibnamefont {Gong}}, \bibinfo {author}
  {\bibfnamefont {Z.}~\bibnamefont {Xie}},\ and\ \bibinfo {author}
  {\bibfnamefont {S.-N.}\ \bibnamefont {Zhu}},\ }\bibfield  {title} {\bibinfo
  {title} {{Drone-based entanglement distribution towards mobile quantum
  networks}},\ }\href@noop {} {\bibfield  {journal} {\bibinfo  {journal}
  {National Science Review}\ }\textbf {\bibinfo {volume} {7}},\ \bibinfo
  {pages} {921} (\bibinfo {year} {2020})}\BibitemShut {NoStop}%
\bibitem [{\citenamefont {Tokunaga}\ \emph {et~al.}(2006)\citenamefont
  {Tokunaga}, \citenamefont {Yamamoto}, \citenamefont {Koashi},\ and\
  \citenamefont {Imoto}}]{TokYamKoaImo:J06}%
  \BibitemOpen
  \bibfield  {author} {\bibinfo {author} {\bibfnamefont {Y.}~\bibnamefont
  {Tokunaga}}, \bibinfo {author} {\bibfnamefont {T.}~\bibnamefont {Yamamoto}},
  \bibinfo {author} {\bibfnamefont {M.}~\bibnamefont {Koashi}},\ and\ \bibinfo
  {author} {\bibfnamefont {N.}~\bibnamefont {Imoto}},\ }\bibfield  {title}
  {\bibinfo {title} {Fidelity estimation and entanglement verification for
  experimentally produced four-qubit cluster states},\ }\href@noop {}
  {\bibfield  {journal} {\bibinfo  {journal} {Phys. Rev. A}\ }\textbf {\bibinfo
  {volume} {74}},\ \bibinfo {pages} {020301} (\bibinfo {year}
  {2006})}\BibitemShut {NoStop}%
\bibitem [{\citenamefont {Somma}\ \emph {et~al.}(2006)\citenamefont {Somma},
  \citenamefont {Chiaverini},\ and\ \citenamefont {Berkeland}}]{SomChiBer:J06}%
  \BibitemOpen
  \bibfield  {author} {\bibinfo {author} {\bibfnamefont {R.~D.}\ \bibnamefont
  {Somma}}, \bibinfo {author} {\bibfnamefont {J.}~\bibnamefont {Chiaverini}},\
  and\ \bibinfo {author} {\bibfnamefont {D.~J.}\ \bibnamefont {Berkeland}},\
  }\bibfield  {title} {\bibinfo {title} {Lower bounds for the fidelity of
  entangled-state preparation},\ }\href@noop {} {\bibfield  {journal} {\bibinfo
   {journal} {Phys. Rev. A}\ }\textbf {\bibinfo {volume} {74}},\ \bibinfo
  {pages} {052302} (\bibinfo {year} {2006})}\BibitemShut {NoStop}%
\bibitem [{\citenamefont {G\"uhne}\ \emph {et~al.}(2007)\citenamefont
  {G\"uhne}, \citenamefont {Lu}, \citenamefont {Gao},\ and\ \citenamefont
  {Pan}}]{GuhLuGaoPan:J07}%
  \BibitemOpen
  \bibfield  {author} {\bibinfo {author} {\bibfnamefont {O.}~\bibnamefont
  {G\"uhne}}, \bibinfo {author} {\bibfnamefont {C.-Y.}\ \bibnamefont {Lu}},
  \bibinfo {author} {\bibfnamefont {W.-B.}\ \bibnamefont {Gao}},\ and\ \bibinfo
  {author} {\bibfnamefont {J.-W.}\ \bibnamefont {Pan}},\ }\bibfield  {title}
  {\bibinfo {title} {Toolbox for entanglement detection and fidelity
  estimation},\ }\href@noop {} {\bibfield  {journal} {\bibinfo  {journal}
  {Phys. Rev. A}\ }\textbf {\bibinfo {volume} {76}},\ \bibinfo {pages} {030305}
  (\bibinfo {year} {2007})}\BibitemShut {NoStop}%
\bibitem [{\citenamefont {Flammia}\ and\ \citenamefont
  {Liu}(2011)}]{FlaLiu:J11}%
  \BibitemOpen
  \bibfield  {author} {\bibinfo {author} {\bibfnamefont {S.~T.}\ \bibnamefont
  {Flammia}}\ and\ \bibinfo {author} {\bibfnamefont {Y.-K.}\ \bibnamefont
  {Liu}},\ }\bibfield  {title} {\bibinfo {title} {Direct fidelity estimation
  from few pauli measurements},\ }\href@noop {} {\bibfield  {journal} {\bibinfo
   {journal} {Phys. Rev. Lett.}\ }\textbf {\bibinfo {volume} {106}},\ \bibinfo
  {pages} {230501} (\bibinfo {year} {2011})}\BibitemShut {NoStop}%
\bibitem [{\citenamefont {Zhu}\ and\ \citenamefont
  {Hayashi}(2019)}]{ZhuHay:J19}%
  \BibitemOpen
  \bibfield  {author} {\bibinfo {author} {\bibfnamefont {H.}~\bibnamefont
  {Zhu}}\ and\ \bibinfo {author} {\bibfnamefont {M.}~\bibnamefont {Hayashi}},\
  }\bibfield  {title} {\bibinfo {title} {Optimal verification and fidelity
  estimation of maximally entangled states},\ }\href@noop {} {\bibfield
  {journal} {\bibinfo  {journal} {Phys. Rev. A}\ }\textbf {\bibinfo {volume}
  {99}},\ \bibinfo {pages} {052346} (\bibinfo {year} {2019})}\BibitemShut
  {NoStop}%
\bibitem [{\citenamefont {Tomamichel}\ \emph {et~al.}(2012)\citenamefont
  {Tomamichel}, \citenamefont {Lim}, \citenamefont {Gisin},\ and\ \citenamefont
  {Renner}}]{TomLimGisRen:J12}%
  \BibitemOpen
  \bibfield  {author} {\bibinfo {author} {\bibfnamefont {M.}~\bibnamefont
  {Tomamichel}}, \bibinfo {author} {\bibfnamefont {C.~C.~W.}\ \bibnamefont
  {Lim}}, \bibinfo {author} {\bibfnamefont {N.}~\bibnamefont {Gisin}},\ and\
  \bibinfo {author} {\bibfnamefont {R.}~\bibnamefont {Renner}},\ }\bibfield
  {title} {\bibinfo {title} {Tight finite-key analysis for quantum
  cryptography},\ }\href@noop {} {\bibfield  {journal} {\bibinfo  {journal}
  {Nature Communications}\ }\textbf {\bibinfo {volume} {3}},\ \bibinfo {pages}
  {634 } (\bibinfo {year} {2012})}\BibitemShut {NoStop}%
\bibitem [{\citenamefont {Pfister}\ \emph {et~al.}(2018)\citenamefont
  {Pfister}, \citenamefont {Rol}, \citenamefont {Mantri}, \citenamefont
  {Tomamichel},\ and\ \citenamefont {Wehner}}]{PfiRolManTomWeh:J18}%
  \BibitemOpen
  \bibfield  {author} {\bibinfo {author} {\bibfnamefont {C.}~\bibnamefont
  {Pfister}}, \bibinfo {author} {\bibfnamefont {M.~A.}\ \bibnamefont {Rol}},
  \bibinfo {author} {\bibfnamefont {A.}~\bibnamefont {Mantri}}, \bibinfo
  {author} {\bibfnamefont {M.}~\bibnamefont {Tomamichel}},\ and\ \bibinfo
  {author} {\bibfnamefont {S.}~\bibnamefont {Wehner}},\ }\bibfield  {title}
  {\bibinfo {title} {Capacity estimation and verification of quantum channels
  with arbitrarily correlated errors},\ }\href@noop {} {\bibfield  {journal}
  {\bibinfo  {journal} {Nature Communications}\ }\textbf {\bibinfo {volume}
  {9}},\ \bibinfo {pages} {27} (\bibinfo {year} {2018})}\BibitemShut {NoStop}%
\bibitem [{\citenamefont {Bagan}\ \emph {et~al.}(2006)\citenamefont {Bagan},
  \citenamefont {Ballester}, \citenamefont {Gill}, \citenamefont {Mu\~noz
  Tapia},\ and\ \citenamefont {Romero-Isart}}]{BagBalGilRom:J06}%
  \BibitemOpen
  \bibfield  {author} {\bibinfo {author} {\bibfnamefont {E.}~\bibnamefont
  {Bagan}}, \bibinfo {author} {\bibfnamefont {M.~A.}\ \bibnamefont
  {Ballester}}, \bibinfo {author} {\bibfnamefont {R.~D.}\ \bibnamefont {Gill}},
  \bibinfo {author} {\bibfnamefont {R.}~\bibnamefont {Mu\~noz Tapia}},\ and\
  \bibinfo {author} {\bibfnamefont {O.}~\bibnamefont {Romero-Isart}},\
  }\bibfield  {title} {\bibinfo {title} {Separable measurement estimation of
  density matrices and its fidelity gap with collective protocols},\
  }\href@noop {} {\bibfield  {journal} {\bibinfo  {journal} {Phys. Rev. Lett.}\
  }\textbf {\bibinfo {volume} {97}},\ \bibinfo {pages} {130501} (\bibinfo
  {year} {2006})}\BibitemShut {NoStop}%
\bibitem [{\citenamefont {Ruan}(2023)}]{Ruan:J23}%
  \BibitemOpen
  \bibfield  {author} {\bibinfo {author} {\bibfnamefont {L.}~\bibnamefont
  {Ruan}},\ }\bibfield  {title} {\bibinfo {title} {Minimization of the
  estimation error for entanglement distribution networks with arbitrary
  noise},\ }\href@noop {} {\bibfield  {journal} {\bibinfo  {journal} {Phys.
  Rev. A}\ }\textbf {\bibinfo {volume} {108}},\ \bibinfo {pages} {022418}
  (\bibinfo {year} {2023})}\BibitemShut {NoStop}%
\bibitem [{\citenamefont {Gao}\ \emph {et~al.}(2010)\citenamefont {Gao},
  \citenamefont {Qin}, \citenamefont {Wen},\ and\ \citenamefont
  {Zhu}}]{GaoQinWenZhu:J10}%
  \BibitemOpen
  \bibfield  {author} {\bibinfo {author} {\bibfnamefont {F.}~\bibnamefont
  {Gao}}, \bibinfo {author} {\bibfnamefont {S.-J.}\ \bibnamefont {Qin}},
  \bibinfo {author} {\bibfnamefont {Q.-Y.}\ \bibnamefont {Wen}},\ and\ \bibinfo
  {author} {\bibfnamefont {F.-C.}\ \bibnamefont {Zhu}},\ }\bibfield  {title}
  {\bibinfo {title} {Cryptanalysis of multiparty controlled quantum secure
  direct communication using {Greenberger–Horne–Zeilinger} state},\
  }\href@noop {} {\bibfield  {journal} {\bibinfo  {journal} {Optics
  Communications}\ }\textbf {\bibinfo {volume} {283}},\ \bibinfo {pages} {192 }
  (\bibinfo {year} {2010})}\BibitemShut {NoStop}%
\bibitem [{\citenamefont {Qin}\ and\ \citenamefont {Dai}(2017)}]{QinDai:J17}%
  \BibitemOpen
  \bibfield  {author} {\bibinfo {author} {\bibfnamefont {H.}~\bibnamefont
  {Qin}}\ and\ \bibinfo {author} {\bibfnamefont {Y.}~\bibnamefont {Dai}},\
  }\bibfield  {title} {\bibinfo {title} {Dynamic quantum secret sharing by
  using d-dimensional ghz state},\ }\href@noop {} {\bibfield  {journal}
  {\bibinfo  {journal} {Quantum Information Processing}\ }\textbf {\bibinfo
  {volume} {16}},\ \bibinfo {pages} {64} (\bibinfo {year} {2017})}\BibitemShut
  {NoStop}%
\bibitem [{\citenamefont {Deng}(2011)}]{Den:J11}%
  \BibitemOpen
  \bibfield  {author} {\bibinfo {author} {\bibfnamefont {F.-G.}\ \bibnamefont
  {Deng}},\ }\bibfield  {title} {\bibinfo {title} {One-step error correction
  for multipartite polarization entanglement},\ }\href@noop {} {\bibfield
  {journal} {\bibinfo  {journal} {Phys. Rev. A}\ }\textbf {\bibinfo {volume}
  {83}},\ \bibinfo {pages} {062316} (\bibinfo {year} {2011})}\BibitemShut
  {NoStop}%
\bibitem [{\citenamefont {Kelly}\ \emph {et~al.}(2015)\citenamefont {Kelly},
  \citenamefont {Barends}, \citenamefont {Fowler}, \citenamefont {Megrant},
  \citenamefont {Jeffrey}, \citenamefont {White}, \citenamefont {Sank},
  \citenamefont {Mutus}, \citenamefont {Campbell}, \citenamefont {Chen},
  \citenamefont {Chen}, \citenamefont {Chiaro}, \citenamefont {Dunsworth},
  \citenamefont {Hoi}, \citenamefont {Neill}, \citenamefont {O'Malley},
  \citenamefont {Quintana}, \citenamefont {Roushan}, \citenamefont
  {Vainsencher}, \citenamefont {Wenner}, \citenamefont {Cleland},\ and\
  \citenamefont {Martinis}}]{KelFowetal:J15}%
  \BibitemOpen
  \bibfield  {author} {\bibinfo {author} {\bibfnamefont {J.}~\bibnamefont
  {Kelly}}, \bibinfo {author} {\bibfnamefont {R.}~\bibnamefont {Barends}},
  \bibinfo {author} {\bibfnamefont {A.~G.}\ \bibnamefont {Fowler}}, \bibinfo
  {author} {\bibfnamefont {A.}~\bibnamefont {Megrant}}, \bibinfo {author}
  {\bibfnamefont {E.}~\bibnamefont {Jeffrey}}, \bibinfo {author} {\bibfnamefont
  {T.~C.}\ \bibnamefont {White}}, \bibinfo {author} {\bibfnamefont
  {D.}~\bibnamefont {Sank}}, \bibinfo {author} {\bibfnamefont {J.~Y.}\
  \bibnamefont {Mutus}}, \bibinfo {author} {\bibfnamefont {B.}~\bibnamefont
  {Campbell}}, \bibinfo {author} {\bibfnamefont {Y.}~\bibnamefont {Chen}},
  \bibinfo {author} {\bibfnamefont {Z.}~\bibnamefont {Chen}}, \bibinfo {author}
  {\bibfnamefont {B.}~\bibnamefont {Chiaro}}, \bibinfo {author} {\bibfnamefont
  {A.}~\bibnamefont {Dunsworth}}, \bibinfo {author} {\bibfnamefont {I.~C.}\
  \bibnamefont {Hoi}}, \bibinfo {author} {\bibfnamefont {C.}~\bibnamefont
  {Neill}}, \bibinfo {author} {\bibfnamefont {P.~J.~J.}\ \bibnamefont
  {O'Malley}}, \bibinfo {author} {\bibfnamefont {C.}~\bibnamefont {Quintana}},
  \bibinfo {author} {\bibfnamefont {P.}~\bibnamefont {Roushan}}, \bibinfo
  {author} {\bibfnamefont {A.}~\bibnamefont {Vainsencher}}, \bibinfo {author}
  {\bibfnamefont {J.}~\bibnamefont {Wenner}}, \bibinfo {author} {\bibfnamefont
  {A.~N.}\ \bibnamefont {Cleland}},\ and\ \bibinfo {author} {\bibfnamefont
  {J.~M.}\ \bibnamefont {Martinis}},\ }\bibfield  {title} {\bibinfo {title}
  {State preservation by repetitive error detection in a superconducting
  quantum circuit},\ }\href@noop {} {\bibfield  {journal} {\bibinfo  {journal}
  {Nature}\ }\textbf {\bibinfo {volume} {519}},\ \bibinfo {pages} {66 EP }
  (\bibinfo {year} {2015})}\BibitemShut {NoStop}%
\bibitem [{\citenamefont {Gottesman}\ and\ \citenamefont
  {Chuang}(1999)}]{GotChu:J99}%
  \BibitemOpen
  \bibfield  {author} {\bibinfo {author} {\bibfnamefont {D.}~\bibnamefont
  {Gottesman}}\ and\ \bibinfo {author} {\bibfnamefont {I.~L.}\ \bibnamefont
  {Chuang}},\ }\bibfield  {title} {\bibinfo {title} {Demonstrating the
  viability of universal quantum computation using teleportation and
  single-qubit operations},\ }\href@noop {} {\bibfield  {journal} {\bibinfo
  {journal} {Nature}\ }\textbf {\bibinfo {volume} {402}},\ \bibinfo {pages}
  {390} (\bibinfo {year} {1999})}\BibitemShut {NoStop}%
\bibitem [{\citenamefont {D'Hondt}\ and\ \citenamefont
  {Panangaden}(2006)}]{HonPan:J06}%
  \BibitemOpen
  \bibfield  {author} {\bibinfo {author} {\bibfnamefont {E.}~\bibnamefont
  {D'Hondt}}\ and\ \bibinfo {author} {\bibfnamefont {P.}~\bibnamefont
  {Panangaden}},\ }\bibfield  {title} {\bibinfo {title} {The computational
  power of the w and ghz states},\ }\href@noop {} {\bibfield  {journal}
  {\bibinfo  {journal} {Quantum Info. Comput.}\ }\textbf {\bibinfo {volume}
  {6}},\ \bibinfo {pages} {173} (\bibinfo {year} {2006})}\BibitemShut {NoStop}%
\bibitem [{\citenamefont {Ly}\ \emph {et~al.}(2017)\citenamefont {Ly},
  \citenamefont {Marsman}, \citenamefont {Verhagen}, \citenamefont {Grasman},\
  and\ \citenamefont {Wagenmakers}}]{LyMarVerGraWag:J17}%
  \BibitemOpen
  \bibfield  {author} {\bibinfo {author} {\bibfnamefont {A.}~\bibnamefont
  {Ly}}, \bibinfo {author} {\bibfnamefont {M.}~\bibnamefont {Marsman}},
  \bibinfo {author} {\bibfnamefont {J.}~\bibnamefont {Verhagen}}, \bibinfo
  {author} {\bibfnamefont {R.~P.}\ \bibnamefont {Grasman}},\ and\ \bibinfo
  {author} {\bibfnamefont {E.-J.}\ \bibnamefont {Wagenmakers}},\ }\bibfield
  {title} {\bibinfo {title} {A tutorial on fisher information},\ }\href@noop {}
  {\bibfield  {journal} {\bibinfo  {journal} {Journal of Mathematical
  Psychology}\ }\textbf {\bibinfo {volume} {80}},\ \bibinfo {pages} {40 }
  (\bibinfo {year} {2017})}\BibitemShut {NoStop}%
\bibitem [{\citenamefont {Cram\'er}(1999)}]{Cra:B99}%
  \BibitemOpen
  \bibfield  {author} {\bibinfo {author} {\bibfnamefont {H.}~\bibnamefont
  {Cram\'er}},\ }\href@noop {} {\emph {\bibinfo {title} {Mathematical Methods
  of Statistics}}}\ (\bibinfo  {publisher} {Princeton University Press},\
  \bibinfo {address} {Princeton, New Jersey, U.S.},\ \bibinfo {year}
  {1999})\BibitemShut {NoStop}%
\bibitem [{\citenamefont {Rao}(1994)}]{Rao:B94}%
  \BibitemOpen
  \bibfield  {author} {\bibinfo {author} {\bibfnamefont {C.~R.}\ \bibnamefont
  {Rao}},\ }\href@noop {} {\emph {\bibinfo {title} {Selected Papers of C.R.
  Rao}}}\ (\bibinfo  {publisher} {John Wiley \& Sons},\ \bibinfo {address} {New
  York, U.S.},\ \bibinfo {year} {1994})\BibitemShut {NoStop}%
\bibitem [{\citenamefont {Paris}(2009)}]{Paris:09}%
  \BibitemOpen
  \bibfield  {author} {\bibinfo {author} {\bibfnamefont {M.~G.~A.}\
  \bibnamefont {Paris}},\ }\bibfield  {title} {\bibinfo {title} {Quantum
  estimation for quantum technology},\ }\href@noop {} {\bibfield  {journal}
  {\bibinfo  {journal} {International Journal of Quantum Information}\ }\textbf
  {\bibinfo {volume} {07}},\ \bibinfo {pages} {125} (\bibinfo {year}
  {2009})}\BibitemShut {NoStop}%
\bibitem [{\citenamefont {\ifmmode~\check{S}\else
  \v{S}\fi{}afr\'anek}(2018)}]{Saf:18}%
  \BibitemOpen
  \bibfield  {author} {\bibinfo {author} {\bibfnamefont {D.}~\bibnamefont
  {\ifmmode~\check{S}\else \v{S}\fi{}afr\'anek}},\ }\bibfield  {title}
  {\bibinfo {title} {Simple expression for the quantum fisher information
  matrix},\ }\href@noop {} {\bibfield  {journal} {\bibinfo  {journal} {Phys.
  Rev. A}\ }\textbf {\bibinfo {volume} {97}},\ \bibinfo {pages} {042322}
  (\bibinfo {year} {2018})}\BibitemShut {NoStop}%
\bibitem [{\citenamefont {De~Greve}\ \emph {et~al.}(2013)\citenamefont
  {De~Greve}, \citenamefont {McMahon}, \citenamefont {Yu}, \citenamefont
  {Pelc}, \citenamefont {Jones}, \citenamefont {Natarajan}, \citenamefont
  {Kim}, \citenamefont {Abe}, \citenamefont {Maier}, \citenamefont {Schneider},
  \citenamefont {Kamp}, \citenamefont {H{\"o}fling}, \citenamefont {Hadfield},
  \citenamefont {Forchel}, \citenamefont {Fejer},\ and\ \citenamefont
  {Yamamoto}}]{DeGetal:J13}%
  \BibitemOpen
  \bibfield  {author} {\bibinfo {author} {\bibfnamefont {K.}~\bibnamefont
  {De~Greve}}, \bibinfo {author} {\bibfnamefont {P.~L.}\ \bibnamefont
  {McMahon}}, \bibinfo {author} {\bibfnamefont {L.}~\bibnamefont {Yu}},
  \bibinfo {author} {\bibfnamefont {J.~S.}\ \bibnamefont {Pelc}}, \bibinfo
  {author} {\bibfnamefont {C.}~\bibnamefont {Jones}}, \bibinfo {author}
  {\bibfnamefont {C.~M.}\ \bibnamefont {Natarajan}}, \bibinfo {author}
  {\bibfnamefont {N.~Y.}\ \bibnamefont {Kim}}, \bibinfo {author} {\bibfnamefont
  {E.}~\bibnamefont {Abe}}, \bibinfo {author} {\bibfnamefont {S.}~\bibnamefont
  {Maier}}, \bibinfo {author} {\bibfnamefont {C.}~\bibnamefont {Schneider}},
  \bibinfo {author} {\bibfnamefont {M.}~\bibnamefont {Kamp}}, \bibinfo {author}
  {\bibfnamefont {S.}~\bibnamefont {H{\"o}fling}}, \bibinfo {author}
  {\bibfnamefont {R.~H.}\ \bibnamefont {Hadfield}}, \bibinfo {author}
  {\bibfnamefont {A.}~\bibnamefont {Forchel}}, \bibinfo {author} {\bibfnamefont
  {M.~M.}\ \bibnamefont {Fejer}},\ and\ \bibinfo {author} {\bibfnamefont
  {Y.}~\bibnamefont {Yamamoto}},\ }\bibfield  {title} {\bibinfo {title}
  {Complete tomography of a high-fidelity solid-state entangled spin--photon
  qubit pair},\ }\href@noop {} {\bibfield  {journal} {\bibinfo  {journal}
  {Nature Communications}\ }\textbf {\bibinfo {volume} {4}},\ \bibinfo {pages}
  {2228 EP } (\bibinfo {year} {2013})}\BibitemShut {NoStop}%
\bibitem [{\citenamefont {Yin}\ \emph {et~al.}(2017{\natexlab{b}})\citenamefont
  {Yin}, \citenamefont {Cao}, \citenamefont {Li}, \citenamefont {Liao},
  \citenamefont {Zhang}, \citenamefont {Ren}, \citenamefont {Cai},
  \citenamefont {Liu}, \citenamefont {Li}, \citenamefont {Dai}, \citenamefont
  {Li}, \citenamefont {Lu}, \citenamefont {Gong}, \citenamefont {Xu},
  \citenamefont {Li}, \citenamefont {Li}, \citenamefont {Yin}, \citenamefont
  {Jiang}, \citenamefont {Li}, \citenamefont {Jia}, \citenamefont {Ren},
  \citenamefont {He}, \citenamefont {Zhou}, \citenamefont {Zhang},
  \citenamefont {Wang}, \citenamefont {Chang}, \citenamefont {Zhu},
  \citenamefont {Liu}, \citenamefont {Chen}, \citenamefont {Lu}, \citenamefont
  {Shu}, \citenamefont {Peng}, \citenamefont {Wang},\ and\ \citenamefont
  {Pan}}]{Yinetal:J17}%
  \BibitemOpen
  \bibfield  {author} {\bibinfo {author} {\bibfnamefont {J.}~\bibnamefont
  {Yin}}, \bibinfo {author} {\bibfnamefont {Y.}~\bibnamefont {Cao}}, \bibinfo
  {author} {\bibfnamefont {Y.-H.}\ \bibnamefont {Li}}, \bibinfo {author}
  {\bibfnamefont {S.-K.}\ \bibnamefont {Liao}}, \bibinfo {author}
  {\bibfnamefont {L.}~\bibnamefont {Zhang}}, \bibinfo {author} {\bibfnamefont
  {J.-G.}\ \bibnamefont {Ren}}, \bibinfo {author} {\bibfnamefont {W.-Q.}\
  \bibnamefont {Cai}}, \bibinfo {author} {\bibfnamefont {W.-Y.}\ \bibnamefont
  {Liu}}, \bibinfo {author} {\bibfnamefont {B.}~\bibnamefont {Li}}, \bibinfo
  {author} {\bibfnamefont {H.}~\bibnamefont {Dai}}, \bibinfo {author}
  {\bibfnamefont {G.-B.}\ \bibnamefont {Li}}, \bibinfo {author} {\bibfnamefont
  {Q.-M.}\ \bibnamefont {Lu}}, \bibinfo {author} {\bibfnamefont {Y.-H.}\
  \bibnamefont {Gong}}, \bibinfo {author} {\bibfnamefont {Y.}~\bibnamefont
  {Xu}}, \bibinfo {author} {\bibfnamefont {S.-L.}\ \bibnamefont {Li}}, \bibinfo
  {author} {\bibfnamefont {F.-Z.}\ \bibnamefont {Li}}, \bibinfo {author}
  {\bibfnamefont {Y.-Y.}\ \bibnamefont {Yin}}, \bibinfo {author} {\bibfnamefont
  {Z.-Q.}\ \bibnamefont {Jiang}}, \bibinfo {author} {\bibfnamefont
  {M.}~\bibnamefont {Li}}, \bibinfo {author} {\bibfnamefont {J.-J.}\
  \bibnamefont {Jia}}, \bibinfo {author} {\bibfnamefont {G.}~\bibnamefont
  {Ren}}, \bibinfo {author} {\bibfnamefont {D.}~\bibnamefont {He}}, \bibinfo
  {author} {\bibfnamefont {Y.-L.}\ \bibnamefont {Zhou}}, \bibinfo {author}
  {\bibfnamefont {X.-X.}\ \bibnamefont {Zhang}}, \bibinfo {author}
  {\bibfnamefont {N.}~\bibnamefont {Wang}}, \bibinfo {author} {\bibfnamefont
  {X.}~\bibnamefont {Chang}}, \bibinfo {author} {\bibfnamefont {Z.-C.}\
  \bibnamefont {Zhu}}, \bibinfo {author} {\bibfnamefont {N.-L.}\ \bibnamefont
  {Liu}}, \bibinfo {author} {\bibfnamefont {Y.-A.}\ \bibnamefont {Chen}},
  \bibinfo {author} {\bibfnamefont {C.-Y.}\ \bibnamefont {Lu}}, \bibinfo
  {author} {\bibfnamefont {R.}~\bibnamefont {Shu}}, \bibinfo {author}
  {\bibfnamefont {C.-Z.}\ \bibnamefont {Peng}}, \bibinfo {author}
  {\bibfnamefont {J.-Y.}\ \bibnamefont {Wang}},\ and\ \bibinfo {author}
  {\bibfnamefont {J.-W.}\ \bibnamefont {Pan}},\ }\bibfield  {title} {\bibinfo
  {title} {Satellite-based entanglement distribution over 1200 kilometers},\
  }\href@noop {} {\bibfield  {journal} {\bibinfo  {journal} {Science}\ }\textbf
  {\bibinfo {volume} {356}},\ \bibinfo {pages} {1140} (\bibinfo {year}
  {2017}{\natexlab{b}})}\BibitemShut {NoStop}%
\bibitem [{\citenamefont {Kalb}\ \emph {et~al.}(2017)\citenamefont {Kalb},
  \citenamefont {Reiserer}, \citenamefont {Humphreys}, \citenamefont
  {Bakermans}, \citenamefont {Kamerling}, \citenamefont {Nickerson},
  \citenamefont {Benjamin}, \citenamefont {Twitchen}, \citenamefont {Markham},\
  and\ \citenamefont {Hanson}}]{Kaletal:J17}%
  \BibitemOpen
  \bibfield  {author} {\bibinfo {author} {\bibfnamefont {N.}~\bibnamefont
  {Kalb}}, \bibinfo {author} {\bibfnamefont {A.~A.}\ \bibnamefont {Reiserer}},
  \bibinfo {author} {\bibfnamefont {P.~C.}\ \bibnamefont {Humphreys}}, \bibinfo
  {author} {\bibfnamefont {J.~J.~W.}\ \bibnamefont {Bakermans}}, \bibinfo
  {author} {\bibfnamefont {S.~J.}\ \bibnamefont {Kamerling}}, \bibinfo {author}
  {\bibfnamefont {N.~H.}\ \bibnamefont {Nickerson}}, \bibinfo {author}
  {\bibfnamefont {S.~C.}\ \bibnamefont {Benjamin}}, \bibinfo {author}
  {\bibfnamefont {D.~J.}\ \bibnamefont {Twitchen}}, \bibinfo {author}
  {\bibfnamefont {M.}~\bibnamefont {Markham}},\ and\ \bibinfo {author}
  {\bibfnamefont {R.}~\bibnamefont {Hanson}},\ }\bibfield  {title} {\bibinfo
  {title} {Entanglement distillation between solid-state quantum network
  nodes},\ }\href@noop {} {\bibfield  {journal} {\bibinfo  {journal} {Science}\
  }\textbf {\bibinfo {volume} {356}},\ \bibinfo {pages} {928} (\bibinfo {year}
  {2017})}\BibitemShut {NoStop}%
\bibitem [{\citenamefont {Bock}\ \emph {et~al.}(2018)\citenamefont {Bock},
  \citenamefont {Eich}, \citenamefont {Kucera}, \citenamefont {Kreis},
  \citenamefont {Lenhard}, \citenamefont {Becher},\ and\ \citenamefont
  {Eschner}}]{Bocetal:J18}%
  \BibitemOpen
  \bibfield  {author} {\bibinfo {author} {\bibfnamefont {M.}~\bibnamefont
  {Bock}}, \bibinfo {author} {\bibfnamefont {P.}~\bibnamefont {Eich}}, \bibinfo
  {author} {\bibfnamefont {S.}~\bibnamefont {Kucera}}, \bibinfo {author}
  {\bibfnamefont {M.}~\bibnamefont {Kreis}}, \bibinfo {author} {\bibfnamefont
  {A.}~\bibnamefont {Lenhard}}, \bibinfo {author} {\bibfnamefont
  {C.}~\bibnamefont {Becher}},\ and\ \bibinfo {author} {\bibfnamefont
  {J.}~\bibnamefont {Eschner}},\ }\bibfield  {title} {\bibinfo {title}
  {High-fidelity entanglement between a trapped ion and a telecom photon via
  quantum frequency conversion},\ }\href@noop {} {\bibfield  {journal}
  {\bibinfo  {journal} {Nature Communications}\ }\textbf {\bibinfo {volume}
  {9}},\ \bibinfo {pages} {1998} (\bibinfo {year} {2018})}\BibitemShut
  {NoStop}%
\bibitem [{\citenamefont {Awschalom}\ \emph {et~al.}(2018)\citenamefont
  {Awschalom}, \citenamefont {Hanson}, \citenamefont {Wrachtrup},\ and\
  \citenamefont {Zhou}}]{AwsHanWraZho:J18}%
  \BibitemOpen
  \bibfield  {author} {\bibinfo {author} {\bibfnamefont {D.~D.}\ \bibnamefont
  {Awschalom}}, \bibinfo {author} {\bibfnamefont {R.}~\bibnamefont {Hanson}},
  \bibinfo {author} {\bibfnamefont {J.}~\bibnamefont {Wrachtrup}},\ and\
  \bibinfo {author} {\bibfnamefont {B.~B.}\ \bibnamefont {Zhou}},\ }\bibfield
  {title} {\bibinfo {title} {Quantum technologies with optically interfaced
  solid-state spins},\ }\href {https://doi.org/10.1038/s41566-018-0232-2}
  {\bibfield  {journal} {\bibinfo  {journal} {Nature Photonics}\ }\textbf
  {\bibinfo {volume} {12}},\ \bibinfo {pages} {516} (\bibinfo {year}
  {2018})}\BibitemShut {NoStop}%
\end{thebibliography}
\end{document}